\newcommand{\typeof}{0} %
\newcommand{\longv}[1]{\ifthenelse{\equal{\typeof}{0}}{#1}{}}
\newcommand{\shortv}[1]{\ifthenelse{\equal{\typeof}{1}}{#1}{}}
\newcommand{\longshortv}[2]{\ifthenelse{\equal{\typeof}{0}}{#1}{#2}}
\newcommand{\shortlongv}[2]{\ifthenelse{\equal{\typeof}{0}}{#2}{#1}}
\def\ps@pprintTitle{%
  \let\@oddhead\@empty
  \let\@evenhead\@empty
  \let\@oddfoot\@empty
  \let\@evenfoot\@oddfoot
}
\theoremstyle{plain}
\newtheorem{proposition}{Proposition}
\newtheorem{lemma}{Lemma}
\newtheorem{theorem}{Theorem}
\newtheorem{corollary}{Corollary}
\theoremstyle{definition}
\newtheorem{definition}{Definition}
\newtheorem{example}{Example}
\theoremstyle{remark}
\newtheorem*{remark}{Remark}
\let\origthelstnumber\thelstnumber
\newcommand*\showln{%
  \lst@AddToHook{OnNewLine}{%
   \let\thelstnumber\origthelstnumber%
   \advance\c@lstnumber\@ne\relax}%
}
\newcommand*\surpressln{%
  \lst@AddToHook{OnNewLine}{%
    \let\thelstnumber\relax%
    \advance\c@lstnumber-\@ne\relax%
   }%
 }%
\lstinline\endcsname{%
  \leavevmode
  \bgroup
}{%
  \leavevmode
  \ifmmode\hbox\fi
  \bgroup
}{}{%
  \typeout{Patching of \string\lstinline\space failed!}%
}
\lstdefinestyle{common}{
  extendedchars=\true,
  showspaces=false,
  showstringspaces=false,
  basicstyle=\linespread{1.25}\ttfamily,%\small
  escapechar=\%,
  emphstyle=\rmfamily\emph, 
  xleftmargin=3pt,
  xrightmargin=3pt,
  mathescape=true,
  moredelim=[is][\textcolor{darkgray}]{§}{§},
  keywordstyle=\bf,
  stringstyle=\color[rgb]{0.545098, 0.278431, 0.364706},
  commentstyle=\color[rgb]{0.698039, 0.133333, 0.133333},
  numbers=none, 
  literate={->}{$\to$}3
           {[1]}{${}_{\text{\ttfamily 1}}$}1 {[2]}{${}_{\text{\ttfamily 2}}$}1 {[n]}{${}_{\text{\ttfamily n}}$}1
           {C1}{{${\text{\ttfamily Cl}}_{\text{\ttfamily 1}}$}}2
           {C2}{{${\text{\ttfamily Cl}}_{\text{\ttfamily 2}}$}}2
           {C3}{{${\text{\ttfamily Cl}}_{\text{\ttfamily 3}}$}}2
           {comp1}{comp${}_{\text{\ttfamily 1}}$}4
           {match[walk]}{match${}_{\text{\ttfamily walk}}$}7
           {fix[walk]}{fix${}_{\text{\ttfamily walk}}$}6
           {C1u}{{${\text{\ttfamily Cl}}_{\text{\ttfamily 1}}^{\text{\ttfamily 1}}$}}2
           {C2u}{{${\text{\ttfamily Cl}}_{\text{\ttfamily 2}}^{\text{\ttfamily 1}}$}}2
           {fix[walk]u}{fix${}_{\text{\ttfamily walk}}^{\text{\ttfamily 1}}$}6
           {fix[walk]uu}{fix${}_{\text{\ttfamily walk}}^{\text{\ttfamily 2}}$}6
}
\lstdefinestyle{ocaml}{
  basicstyle=\linespread{1.1}\ttfamily,
  language=[Objective]Caml,
  morecomment=[s][\color{gray}\textit]{(*}{*)},
  style=common,
  emph={'a, a,x,f,g,xs,l,y,ys,f',g',z,cl,x'},
  otherkeywords={|,;;},
}
\lstdefinestyle{atrs}{
  style=common,
  emph={x,x',f,g,xs,xs',y,ys,l,n,z},%the variables
  numbers=none,
}
\lstdefinestyle{numbers}{
  numbers=left,
  numbersep=6pt,
  xleftmargin=12pt,
  xrightmargin=10pt,
  framexleftmargin=13pt,
  framexrightmargin=10pt,
  frame=tb,
  numberstyle=\color{gray}\scriptsize\em, 
}
\lstdefinestyle{strategy}{
  style=common,
  emph={where},
  otherkeywords={exhaustive,try},
}
\def\ml{\lstinline[style=ocaml]}
\def\atrs{\lstinline[style=atrs,mathescape]}
\def\strat{\lstinline[style=strategy,mathescape]}
\newcounter{comment}
\newcommand{\comment}[3]{\refstepcounter{comment}{\todo[color=#2]{\textbf{#1\ \thecomment.}\ #3}}}
\renewcommand{\comment}[3]{}
\begin{document}

\shortv{

\setlength{\pdfpageheight}{\paperheight}
\setlength{\pdfpagewidth}{\paperwidth}
\pagestyle{plain}

\conferenceinfo{ICFP'15}{August 31 -- September 2, 2015, Vancouver, BC, Canada}
\CopyrightYear{2015}
\crdata{978-1-4503-3669-7/15/08}
\doi{nnnnnnn.nnnnnnn}
}

% Uncomment one of the following two, if you are not going for the 
% traditional copyright transfer agreement.

%\exclusivelicense                % ACM gets exclusive license to publish, 
                                  % you retain copyright

%\permissiontopublish             % ACM gets nonexclusive license to publish
                                  % (paid open-access papers, 
                                  % short abstracts)

%\titlebanner{}        % These are ignored unless
%\preprintfooter{}   % 'preprint' option specified.

\title{Analysing the Complexity of Functional Programs:\\ Higher-Order Meets First-Order\longv{\\ (Long Version)}\thanks{This 
work was partially supported by FWF project number J3563, FWF project number P25781-N15 and by French ANR project Elica ANR-14-CE25-0005.}}

\longshortv{\author{Martin Avanzini\and Ugo Dal Lago\and Georg Moser}\date{}
}
{
\authorinfo{Martin Avanzini\and Ugo Dal Lago}
           {Universit\`a di Bologna, Italy \& INRIA, France}
           {martin.avanzini@uibk.ac.at\qquad ugo.dallago@unibo.it}
\authorinfo{Georg Moser}
           {University of Innsbruck, Austria}
           {georg.moser@uibk.ac.at}
}

\maketitle

\begin{abstract}
  We show how the complexity of \emph{higher-order} functional programs can
  be analysed automatically by applying program transformations to a
  defunctionalized versions of them, and feeding the result to
  existing tools for the complexity analysis of \emph{first-order}
  \emph{term rewrite systems}. This is done while carefully analysing complexity
  preservation and reflection of the employed transformations such that 
  the complexity of the obtained term rewrite system reflects 
  on the complexity of the initial program. Further, we 
  describe suitable strategies for the application of the studied
  transformations and provide ample experimental data for assessing
  the viability of our method.
\end{abstract}

% general terms are not compulsory anymore, 
% you may leave them out
%\terms
%term1, term2

\shortv{
\category{F.3.2}{Semantics of programming languages}{Program Analysis}
\keywords
Defunctionalisation, term rewriting, termination and resource analysis
}
 
%%%%%%%%%%%%%%%%%%%%%%%%
\section{Introduction}
\label{s:intro}
%%%%%%%%%%%%%%%%%%%%%%%%

\usetikzlibrary{calc}
\usetikzlibrary{backgrounds}

\newcommand{\ibox}[5][ibox]{
  \def\esel{#4*0.25}
  \coordinate (#1) at #2 {};
  \coordinate (#1-east) at ($#2 + (0.5*#3,0)$) {};
  \coordinate (#1-west) at ($#2 + (-0.5*#3,0)$) {};
  \coordinate (#1-northwest) at ($#2 + (-0.5*#3,0.5*#4)$) {};
  \coordinate (#1-northeast) at ($#2 + (0.5*#3,0.5*#4)$) {};
  \coordinate (#1-southwest) at ($#2 + (-0.5*#3,-0.5*#4)$) {};
  \coordinate (#1-southeast) at ($#2 + (0.5*#3,-0.5*#4)$) {};

  \draw[fill=white] 
        (#1-northwest)
        -- (#1-northeast)
        -- ($(#1-southeast)+(0,\esel)$)
        -- ($(#1-southeast)+(-\esel,0)$)
        -- (#1-southwest)
        -- (#1-northwest);
  \draw[fill=black!10!white] 
        ($(#1-southeast)+(0,\esel)$)
        -- ($(#1-southeast)+(-\esel,0)$)
        -- ($(#1-southeast)+(-\esel,\esel)$)
        -- ($(#1-southeast)+(0,\esel)$);

  \node[anchor=north west,inner sep=2pt, txtlabel] (#1-text) at (#1-northwest) {\scriptsize\textsf{#5}};
}

Automatically checking programs for correctness has attracted the
attention of the computer science research community since the birth
of the discipline. Properties of interest are not necessarily
functional, however, and among the non-functional ones, noticeable
cases are bounds on the amount of resources (like time, memory and
power) programs need when executed. 
%GM space
% Being sure that a program does
% what it was designed for is of course crucial, but what if it takes 
% too much to do that or even produces a buffer overflow?

Deriving upper bounds on the resource consumption of programs is
indeed of paramount importance in many cases, but becomes undecidable
as soon as the underlying programming language is non-trivial. If the
units of measurement become concrete and close to the physical ones,
the problem gets even more complicated, given the many
transformation and optimisation layers programs are applied to before
being executed. A typical example is the one of WCET techniques
adopted in real-time systems~\cite{WEEA:ATECS:08}, which do not only need to
deal with how many machine instructions a program corresponds to, but
also with how much time each instruction costs when executed by
possibly complex architectures (including caches, pipelining, etc.), a
task which is becoming even harder with the current trend towards
multicore architectures.

As an alternative, one can analyse the \emph{abstract}
complexity of programs. As an example, one can take the number of
instructions executed by the program or the number of evaluation steps
to normal form, as a measure of its execution time. This is 
a less informative metric, which however becomes accurate if the
actual time complexity \emph{of each instruction} is kept low. One
advantage of this analysis is the independence from the specific
hardware platform executing the program at hand: the latter only needs
to be analysed once. This is indeed a path which many have followed in the
programming language community. A variety of verification techniques
have been employed in this context, like 
abstract interpretations, model checking, type systems, 
program logics, or interactive theorem provers; 
see~\cite{ABHLM:TCS:07,JHLH:POPL:10,AGM:TOCL:13,SZV:2014}
for some pointers.
If we restrict our attention to higher-order functional programs, however,
the literature becomes much sparser. 
%We discuss related work in Section~\ref{s:relatedwork}.

Conceptually, when analysing the time complexity of 
higher-order programs, there is a
fundamental trade-off to be dealt with. On the one hand, one would
like to have, at least, a clear relation between the cost attributed
to a program and its actual complexity when executed: only this way
the analysis' results would be informative. On the other hand, many
choices are available as for how the complexity of higher-order programs
can be evaluated, and one would prefer one which is closer to the
programmer's intuitions. Ideally, then, one would like to work with an
informative, even if not-too-concrete, cost measure, and to be able to
evaluate programs against it fully automatically.

In recent years, several advances have been made such that the
objectives above look now more realistic than in the past, at least as
far as functional programming is concerned. First of all, some
positive, sometime unexpected, results about the invariance of unitary
cost models\footnote{In the unitary cost model, a program is
  attributed a cost equal to the number of rewrite steps needed to
  turn it to normal form.} have been proved for various forms of
rewrite systems, including the
$\lambda$-calculus~\cite{LM:LMCS:12,AD:CSL:14,AM:RTA:10}. What these
results tell us is that counting the number of evaluation steps does
\emph{not} mean underestimating the time complexity of programs, which
is shown to be bounded by a polynomial (sometime even by a linear
function~\cite{ASC:LICS:15}) in their unitary cost. This is good
news, since the number of rewrite steps is among the most intuitive
notions of cost for functional programs, at least when time is the
resource one is interested in.

But there is more. The rewriting-community has recently developed
several tools for the automated time complexity analysis of \emph{term
  rewrite system}, a formal model of computation that is at the heart
of functional programming. Examples are \aprove~\cite{GBEFFOPSSST14},
\cat~\cite{ZK:LMCS:14}, and \tct~\cite{AM:RTA:13b}. These \emph{first-order
provers} (FOPs for short) combine many different techniques, and after
some years of development, start being able to treat non-trivial
programs, as demonstrated by the result of the annual termination
competition.\footnote{\url{http://termination-portal.org/wiki/Termination_Competition}.}
This is potentially very interesting also for the complexity analysis
of \emph{higher-order functional programs}, since well-known
transformation techniques such as \emph{defunctionalisation}~\cite{Reynolds:ACM:72} 
are available, which turn higher-order
functional programs into equivalent first-order ones. This has been
done in the realm of termination~\cite{PS97,GRSST:TOPLAS:11}, but appears to be
infeasible in the context of complexity analysis. Conclusively this
program transformation approach has been reflected critical in the literature,
cf.~\cite{JHLH:POPL:10}.

\tikzstyle{txtlabel} = []
\tikzstyle{tool} = [shape=rectangle, rounded corners, draw, fill=white, minimum width=1.2cm, minimum height=1.2cm]
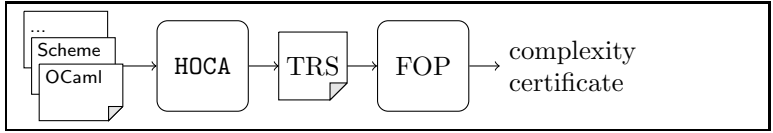
\begin{figure}
  \centering
\fbox{
\shortlongv{\begin{minipage}{.46\textwidth}}{\begin{minipage}{.65\textwidth}}
\begin{tikzpicture}
  % \node[input,minimum width=2.8cm,minimum height=1.9cm] (program) at (0,0) {}; 
  % \node[txtlabel,rotate=90,anchor=north] at (program.west) {\textsf{program}};
  \begin{scope}
    \ibox[C]{(-0.10,0.35)} {1.1}{0.75}{\strut{}...};
    \ibox[prog]{(0,0)}  {1.1}{0.75}{Scheme};
    \ibox[H]{(0.10,-.35)}  {1.1}{0.75}{OCaml};
  \end{scope}

  \node[tool] (hoca) at (1.7cm,0) {\hoca};
  \node[tool] (tool) at (4.6cm,0) {FOP};
  \ibox[trs]            {($(hoca)!0.5!(tool)$)}{0.9}{0.9}{};
  \node at ($(hoca)!0.5!(tool)$) {TRS};

  \node[minimum width=1.3cm,anchor=west,align=left] (answer) at ($(tool.east)+(4mm,0)$) {complexity\\certificate};

  \begin{scope}[on background layer]
    \draw[->] (prog-east) -- (hoca);
    \draw[->] (hoca) -- (trs-west);
    \draw[->] (trs-east) -- (tool);
    \draw[->] (tool) -- (answer);
  \end{scope}
\end{tikzpicture}
\end{minipage}}
\shortv{\nocaptionrule}\caption{Complexity Analysis by \hoca\ and FOPs.}\label{f:flowchart}
\end{figure}

A natural question, then, is whether time complexity analysis of
higher-order programs can indeed be performed by going through
first-order tools.
Is it possible to evaluate the unitary cost of
functional programs by translating them into first-order programs,
analysing them by existing first-order tools, and thus obtaining
meaningful and informative results? Is, e.g., plain defunctionalisation
enough?
In this paper, we show that the questions above can be answered
positively, when the necessary care is taken.
We summarise the contributions of this paper.
\begin{varenumerate}
\item We show how defunctionalisation is crucially employed in a transformation 
  from higher-order programs to first-order term rewrite
  systems, such that the time complexity of the latter reflects
  upon the time complexity of the former. More precisely, we show
  a precise correspondence between the number of reduction steps of 
  the higher-order program, and its defunctionalised version, represented
  as an \emph{applicative term rewrite systems} (see Proposition~\ref{p:pcf2trs}).

\item But defunctionalisation is not enough.
  Defunctionalised programs have a recursive structure too complicated for FOPs 
  to be effective on them. Our way to overcome this issue consists in further
  applying appropriate \emph{program transformations}. 
  These transformations must of course be proven correct to be
  viable. Moreover, we need the complexity analysis of the transformed 
  program to mean something for the starting
  program, i.e., we also prove the considered transformations to
  be at least \emph{complexity reflecting}, if not also \emph{complexity preserving}. 
  This addresses the problem that program transformations may potentially alter the
  resource usage. 
  We establish
  \emph{inlining} (see Corollary~\ref{c:narrowing}),
  \emph{instantiation} (see Theorem~\ref{t:instantiate}),
  \emph{uncurrying} (see Theorem~\ref{t:uncurrying}),
  and \emph{dead code elimination} (see Proposition~\ref{p:usablerules})
  as, at least, complexity reflecting program transformations.  

\item Still, analysing abstract program transformations is not yet
  sufficient.
  The main technical contribution of this paper
  concerns the \emph{automation} of the program transformations rather
  than the abstract study presented before.
  In particular, automating
  instantiation requires dealing with the collecting semantics of the
  program at hand, a task we pursue by exploiting tree automata and
  control-flow analysis. Moreover, we define program transformation
  strategies which allow to turn complicated defunctionalised programs
  into simpler ones that work well in practice.
\item To evaluate our approach experimentally, we have built \hoca.%
\footnote{Our tool \hoca\ is open source and available under
  \url{http://cbr.uibk.ac.at/tools/hoca/}.}  This tool is able to
translate programs written in a pure, monomorphic subset of \ocaml,
into a first-order rewrite system, written in a format
which can be understood by major first-order tools.
\end{varenumerate}
The overall flow of information is depicted in Figure~\ref{f:flowchart}.
Note that by construction, the obtained certificate \emph{reflects} onto
the runtime complexity of the initial \ocaml\ program, taking into
account the standard semantics of \ocaml.
The figure also illustrates the \emph{modularity} of the approach, as the here
studied subset of \ocaml\ just serves as a simple example language to illustrate
the method: related languages can be analysed with the same set of
tools, as long as the necessary transformation can be proven sound
and complexity reflecting. 

Our testbed includes standard higher-order functions like \lstinline|foldl| and
\lstinline|map|, but also more involved examples such as an
implementation of merge-sort using a higher-order divide-and-conquer
combinator
as well as simple parsers relying on the 
monadic parser-combinator outlined in Okasaki's functional pearl~\cite{Okasaki:JFP:98}.
We emphasise that the methods proposed here are applicable in 
the context of non-linear runtime complexities.
The obtained experimental results are quite encouraging.

\smallskip
The remainder of this paper is structures as follows. In the
next section, we present our approach abstractly on a motivating
example and clarify the challenges of our approach.
In Section~\ref{s:pcf2trs} we then present defunctionalisation
formally. Section~\ref{s:simplify} presents the \emph{transformation
pipeline}, consisting of the above mentioned program transformations. 
Implementation issues and experimental evidence is given in
Section~\ref{s:auto} and~\ref{s:experiments}, respectively. Finally,
we conclude in Section~\ref{s:relatedwork}, by discussing related work.
\shortv{An extended version of this paper with more details is
available~\cite{EV}.}

%%% Local Variables:
%%% mode: latex
%%% TeX-master: "paper"
%%% End:

%%%%%%%%%%%%%%%%%%%%%%%%%%%%%%%%%%%%%%%%%%%%%%%%%%%%%%%%%%%%%
\section{On Defunctionalisation: Ruling the Chaos}
\label{s:defunc}
%%%%%%%%%%%%%%%%%%%%%%%%%%%%%%%%%%%%%%%%%%%%%%%%%%%%%%%%%%%%%
The main idea behind defunctionalisation is conceptually simple:
function-abstractions are represented as first-order values; calls to
abstractions are replaced by calls to a globally defined
\emph{apply-function}.  Consider for instance the following
\ocaml-program:
\begin{lstlisting}[style=ocaml]
let comp f g = fun z->f (g z) ;;
let rec walk xs = 
  match xs with 
    [] -> (fun z->z)
  | x::ys -> comp (walk ys) 
               (fun z->x::z) ;;
let rev l = walk l [] ;;	  
let main l = rev l ;;
\end{lstlisting}
Run on a list of $n$ elements, \ml|walk| first constructs
a function which reverses its first argument and appends it to the
second argument.  This function, which can be easily defined by
recursion, is fed in \ml|rev| with the empty list. 
The function \ml|main| only serves the purpose
of indicating the complexity of \emph{which} function we are interested
at. 

Defunctionalisation can be understood already at this level. 
We first define a datatype for representing the three
abstractions occurring in the program:
\begin{lstlisting}[style=ocaml]
type 'a cl = 
  C1 of 'a cl * 'a cl (*$\,\,$fun z->f (g z)$\,\,$*)
| C2                  (*$\,\,$fun z->z$\,\,$*)
| C3 of 'a            (*$\,\,$fun z->x::z$\,\,$*)
\end{lstlisting}
More precisely, an expression of type \ml|'a cl| represents a function
\emph{closure}, whose arguments are used to store assignments to free
variables. An infix operator (\ml|@|), modelling application, can then be
defined as follows:%
\footnote{The definition is rejected by the \ocaml\ type-checker,
  which however, is not an issue in our context.}
\begin{lstlisting}[style=ocaml]
let rec (@) cl z = 
  match cl with 
    C1(f,g) -> f @ (g @ z)
  | C2 -> z 
  | C3(x) -> x::z ;;
\end{lstlisting}
Using this function, we 
arrive at a first-order version of the original higher-order function:
\begin{lstlisting}[style=ocaml]
let comp f g = C1(f,g) ;;
let rec walk xs = 
  match xs with 
    [] -> C2
  | x::ys -> comp (walk ys) C3(x) ;;
let rev l = walk l @ [] ;;
let main l = rev l ;;
\end{lstlisting}
Observe that now the recursive function \ml|walk| constructs an
explicit representation of the closure computed by its original
definition. The function (\ml|@|) carries out the remaining
evaluation.  
This program can now already be understood as a first-order rewrite system.

Of course, a systematic construction of the defunctionalized program requires some care. 
For instance, one has to deal with closures that originate from partial function applications. 
Still, the construction is quite easy to mechanize,  
see Section~\ref{s:pcf2trs} for a formal treatment.
On our running example, this program transformation results in the rewrite system $\atrsrev$, which looks as follows:%
\footnote{
In $\atrsrev$, rule~\eqref{atrsrev:fix}
reflects that, under the hood, we treat recursive let expressions as syntactic sugar 
for a dedicated fixpoint operator.}

\begin{lstlisting}[style=atrs,style=numbers]
C1(f,g) @ z -> f @ (g @ z)
C2 @ z -> z
C3(x) @ z -> x::z
comp1(f) @ g -> C1(f,g)
comp @ f -> comp1(f)
match[walk]([]) -> C2
match[walk](x::ys) -> %\label{atrsrev:walk_cond}\surpressln%
  comp @ (fix[walk] @ ys) @ C3(x) %\showln%
walk @ xs -> match[walk](xs)     %\label{atrsrev:walk}%
fix[walk] @ xs -> walk @ xs      %\label{atrsrev:fix}%
rev @ l -> fix[walk] @ l @ []
main(l) -> rev @ l
\end{lstlisting}

Despite its conceptual simplicity, current FOPs are unable to effectively 
analyse \emph{applicative} rewrite systems, 
such as the one above.
The reason this happens lies in the way FOPs work, which itself reflects the
state of the art on formal methods for complexity analysis of first-order rewrite systems. 
In order to achieve composability of the analysis, the given system is
typically split into smaller parts (see for example~\cite{AM:IC:15}),
and each of them is analysed separately. Furthermore,
contextualisation (aka \emph{path analysis}~\cite{HM:LPAR:08}) and a
suitable form of flow graph analysis (or \emph{dependency pair analysis}~\cite{HM:IJCAR:08,NEG:CADE:11}) is performed.  However, at
the end of the day, syntactic and semantic basic techniques, like path
orders or interpretations~\cite[Chapter~6]{Terese} are employed. 
All these
methods focus on the analysis of the given defined symbols (like for
instance the application symbol in the example above) and fail if their recursive
definition is too complicated. Naturally this calls for a special
treatment of the applicative structure of the
system~\cite{HMZ:JAR:13}.

How could we get rid of those (\atrs|@|), thus highlighting the deep
recursive structure of the program above? Let us, for example,
focus on the rewriting rule 
\begin{lstlisting}[style=atrs]
C1(f,g) @ z -> f @ (g @ z) , 
\end{lstlisting}
which is particularly nasty for FOPs, given that the variables \atrs|f|
and \atrs|g| will be substituted by unknown functions, which could potentially
have a very high complexity. How could we \emph{simplify} all this?
The key observation is that although this rule 
tells us how to compose two \emph{arbitrary} closures, 
% if we are only interested in the
% computation of terms in the form \atrs|main(l)| for \atrs|l| a list, 
only very few instances of the rule above are needed, 
namely those were \atrs|g| is of the form \atrs|C3(x)|, and
\atrs|f| is either \atrs|C2| or again of the form \atrs|C1(f',g')|.
This crucial information can be retrieved in the 
so-called \emph{collecting semantics}~\cite{NNH:2005}
of the term rewrite system above, 
which precisely tells us which object will possibly be substituted for rule
variables along the evaluation of certain families of terms.  Dealing
with all this fully automatically is of course impossible, but
techniques based on tree automata, and inspired by those in~\cite{Jones:TCS:07}
can indeed be of help.

Another useful observation is the following: function symbols 
like, e.g., \atrs|comp| or \atrs|match[walk]| are essentially useless:
their only purpose is to build intermediate closures, or to control program flow:
One could simply shortcircuit them, using a form of \emph{inlining}. And
after this is done, some of the left rules are \emph{dead code},
and can thus be  eliminated from the program. 
At the end of the day, we arrive at a truly first-order system and 
\emph{uncurrying} brings it to a format most suitable for FOPs.

If we carefully apply the just described ideas to the example above,
we end up with the following first-order system, called $\trsrev$, which is precisely what
\hoca\ produces in output:
\begin{lstlisting}[style=atrs,style=numbers]
C1u(C2,C3(x),z) -> x::z %\label{uc:C1:1}%
C1u(C1(f,g),C3(x),z) -> C1u(f,g,x::z) %\label{uc:C2:2}%
fix[walk]u([]) -> C2   %\label{uc:walk:1}%
fix[walk]u(x:ys) -> C1(fix[walk]u(ys),C3(x))%\label{uc:walk:2}%
main(l) -> []
main(x:ys) -> C1u(fix[walk]u(ys),C3(x),[])
\end{lstlisting}
This term rewrite system is equivalent to $\atrsrev$ from above, both extensionally and in
terms of the underlying complexity. However, the FOPs we have
considered can indeed conclude that \atrs|main| has linear
complexity, a result that can be easily lifted back to the original
program.

Sections~\ref{s:simplify} and~\ref{s:auto} are concerned with a
precise analysis of the program transformations we employed when
turning $\atrsrev$ into $\trsrev$. 
Before that, we recap central definitions
in the next section.

%%% Local Variables:
%%% mode: latex
%%% TeX-master: "paper"
%%% End:

%%%%%%%%%%%%%%%%%%%%%%%%%
\section{Preliminaries}\label{s:pcf2trs}
%%%%%%%%%%%%%%%%%%%%%%%%%
The purpose of this section is to give some preliminary notions about
the $\lambda$-calculus, term rewrite systems, and translations between
them; see~\cite{BN:1998,Terese,Pierce:2002} for further reading.

To model a reasonable rich but pure and monomorphic functional language,
we consider a typed $\lambda$-calculus with constants and fixpoints
akin to Plotkin's \PCF~\cite{Plotkin:TCS:77}. To seamlessly express
programs over algebraic datatypes, we allow constructors and pattern
matching.  To this end, let $\seq[k]{\conone}$ be finitely many
constructors, each equipped with a fixed \emph{arity}. 
The syntax of \PCF-programs is given by the following grammar:
\begin{align*}
  \EXP\quad\expone,\exptwo~&\bnfdef~ 
  \varone 
  \mid \conone_i(\vec{\expone}) 
  \mid \lam{\varone}{\expone}
  \mid \app{\expone}{\exptwo}
  \mid \fix{\varone}{\expone}\\
  &\mid \case{\expone}{\conone_1(\vec{\varone}_1) \mapsto \expone_1; \cdots ; \conone_k(\vec{\varone}_k) \mapsto \expone_k} \tkom
\end{align*}
where $\varone$ ranges over variables. Note that the variables
$\vec{\varone}_i$ in a match-expression are considered bound in
$\expone_i$. A simple type system can be easily defined %MA: removed cite: ~\cite{EV}
based on a single ground type, and on the usual arrow type
constructor. We claim that extending the language with products and
coproducts would not be problematic.

We adopt \emph{weak call-by-value} semantics, the definition is
standard, see e.g.~\cite{Harper:PFP:2012}.  Here \emph{weak} means
that reduction under any $\lambda$-abstraction
$\lam{\varone}{\expone}$ and any fixpoint-expressions
$\fix{\varone}{\expone}$ is prohibited.  \emph{Call-by-value} means
that in a redex $\app{\expone}{\exptwo}$, the expression $\expone$ has
to be evaluated to a value first.  A match-expression
$\case{\expone}{\conone_1(\vec{\varone}_1) \mapsto \expone_1; \cdots ;
  \conone_k(\vec{\varone}_k) \mapsto \expone_k}$ is evaluated by first
evaluating the guard $\expone$ to a value $\conone_i(\vec{\valone})$,
reduction then continues with the corresponding case-expression
$\expone_i$ with values $\vec{\valone}_i$ substituted for variables
$\vec{\varone}_i$. The one-step weak call-by-value reduction relation
is denoted by $\pcf$.  Elements of the term algebra over constructors
$\seq[k]{\conone}$ embedded in our language are collected in $\DATA$.
A \emph{$\PCF$-program} with $n$ \emph{input arguments} is a closed
expression $\progone = \lambda \varone_1 \cdots \lambda
\varone_n. \expone$ of first-order type. What this implicitly means is
that we are interested in an analysis of programs with a possibly very
intricate internal higher-order structure, but whose arguments are
values of ground type.  
This is akin to the setting in~\cite{BL:CSL:12} and provides an intuitive 
notion of runtime complexity for higher-order program, without having to
rely on ad-hoc restrictions on the use of function-abstracts 
(as e.g.~\cite{JHLH:POPL:10}).
This way we also ensure that the abstractions
reduced in a run of $\progone$ are the ones found in $\progone$, an
essential property for performing defunctionalisation.  We assume that
variables in $\progone$ have been renamed apart, and we impose a total
order on variables in $\progone$. The free variables $\FV(\expone)$ in
the body $\expone$ of $\progone$ can this way be defined as an ordered
sequence of variables.
\begin{example}\label{ex:pcf}
We fix constructors \atrs|[]| and (\atrs|::|) for lists, the latter we write 
infix. 
Then the program computing the reverse of a function, as described
in the previous section, can be seen as the \PCF\ term
$\progone_{\mathsf{rev}} \defsym \lam{l}{\app{\mathit{rev}}{l}}$ where 
\begin{align*}
  \mathit{rev} & = \lam{l}{\app{\app{\fix{w}{\mathit{walk}}}{l}}{{\atrs|[]|}}} \tspkt\\
  \mathit{walk} & = \lambda xs. \CASE\ xs\ 
         \left\{ 
         \begin{array}{@{\,}r@{}l@{\ \ }}
           \!{\atrs|[]|} &{} \mapsto \lam{z}{z}\tspkt \\
           \!{\atrs|x::ys|} &{} \mapsto \app{\app{\mathit{comp}}{(\app{w}{ys})}}{(\lam{z}{{\atrs|x::z|}})}
         \end{array}\right\}; \\
  \mathit{comp} & = \lam{f}{\lam{g}{\lam{z}{\app{f}{(\app{g}{z})}}}} \tpkt
\end{align*}
\end{example}

The second kind of programming formalism we will deal with is the
one of \emph{term rewrite systems} (TRSs for short).
Let $\FS = \{ \seq[n]{\funone} \}$ be a set of function symbols, 
each equipped again with an arity, 
the \emph{signature}.
We denote by $\termtwo,\termone,\dots$ terms over the signature $\FS$, 
possibly including variables. 
A \emph{position} $\posone$ in $\termone$ is a finite sequence of integers, such 
that the following definition of \emph{subterm at position} $\posone$ is well-defined:
$\subtermAt{\termone}{\posempty} = \termone$ for the \emph{empty position} $\posempty$, 
and $\subtermAt{\termone}{ip} = \subtermAt{\termone_i}{p}$ for $\termone=\funone(\seq[k]{\termone})$. 
For a position $\posone$ in $\termone$, we denote by $\ctx[\posone]{\termone}[\termtwo]$ the term 
obtained by replacing the subterm at position $\posone$ in $\termone$ by the term $\termtwo$. 
A \emph{context} $\ctxone$ is a term containing one occurrence of a special symbol $\hole$, the \emph{hole}. 
We define $\ctx{\ctxone}[\termone] \defsym \ctx[\posone]{\ctxone}[\termone]$ for $\posone$ 
the position of $\hole$ in $\ctxone$, i.e., $\subtermAt{\ctxone}{\posone} = \hole$. 

A \emph{substitution}, is a finite mapping $\sigma$ from variables to terms. 
By $\termone\sigma$ we denote the term obtained by replacing in $\termone$ 
all variables $\varone$ in the domain of $\sigma$ by $\sigma(\varone)$. 
A substitution $\sigma$ is \emph{at least as general} as a substitution $\tau$ if 
there exists a substitution $\tau'$ such that $\tau(\varone) = \sigma(\varone)\tau'$ for each 
variable $\varone$. 
A term $\termone$ is an instance of a term $\termtwo$ if there exists a substitution $\sigma$, 
with $\termtwo\sigma = \termone$; the terms $\termone$ and $\termtwo$ \emph{unify} if 
there exists a substitution $\mguone$, the \emph{unifier}, such that $\termone\mguone = \termtwo\mguone$. 
If two terms are unifiable, then there exists a \emph{most general unifier} (mgu for short).

A \emph{term rewrite system} $\TRSone$
is a finite set of rewrite rules, i.e., directed equations $\funone(\seq[k]{l}) \to r$ 
such that all variables occurring in the \emph{right-hand side} $r$ occur also 
in the \emph{left-hand side} $\funone(\seq[k]{l})$. The roots of left-hand sides, 
the \emph{defined symbols} of $\TRSone$, are collected in $\DS_\TRSone$, the remaining symbols $\FS \setminus \DS_\TRSone$
are the \emph{constructors} of $\TRSone$ and collected in $\CS_\TRSone$. 
Terms over the constructors $\CS_\TRSone$ are considered \emph{values} and collected in $\Val[\CS_\TRSone]$. 
\shortlongv{%
We denote by $\rew[\TRSone]$ the one-step rewrite relation of $\TRSone$, imposing \emph{call-by-value} 
semantics. 
Call-by-value means that variables are assigned elements of $\Val[\CS_\TRSone]$.}
{%
We adopt \emph{call-by-value} semantics for TRSs, see Figure~\ref{fig:trssemantics} where
the \emph{call-by-value rewrite relation} $\rew[\TRSone]$ is defined. 
\begin{figure}
  \centering
  \fbox{
    \begin{minipage}{.95\textwidth}
      \begin{equation*}
        \frac{\funone(l_1,\dots,l_n) \to r \in \TRSone \qquad \text{$x\sigma \in \Val[\CS_\TRSone]$ for all variables $x$ occurring in $\funone(l_1,\dots,l_n)$}}
        {\funone(l_1\sigma,\dots,l_n\sigma) \rew[\TRSone] r\sigma}
      \end{equation*}
      \begin{equation*}
        \frac{\termtwo_i \rew[\TRSone] \termone_i}
        {\funone(\termtwo_1,\dots,\termtwo_i,\dots\termtwo_n) 
          \rew[\TRSone] 
          \funone(\termtwo_1,\dots,\termone_i,\dots\termtwo_n)}
      \end{equation*}
      \vspace{2mm}
    \end{minipage}
  }
  \caption{Call-by-value rewrite relation with respect to a TRS $\TRSone$.}\label{fig:trssemantics}
\end{figure}}

Throughout the following, we consider \emph{non-ambiguous} rewrite
systems, that is, the left-hand sides are pairwise
\emph{non-overlapping}.  Even thought $\rew[\TRSone]$ may be
non-deterministic, the following special case of the parallel moves
lemma~\cite{BN:1998} tells us that this form of non-determinism is not
harmful for complexity-analysis.
\begin{proposition}\label{p:nonambiguous}
    For a non-ambiguous TRS $\TRSone$, all \emph{normalising reductions} of $\termone$ have the same length, i.e,
    if $\termone \rsl[\TRSone]{m} \termthree_1$ and $\termone \rsl[\TRSone]{n} \termthree_2$ for 
    two irreducible terms $\termthree_1$ and $\termthree_2$, 
    then $\termthree_1 = \termthree_2$ and $m = n$.
\end{proposition}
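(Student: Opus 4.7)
The plan is to derive the statement from a strong one-step diamond property for $\rew[\TRSone]$: whenever $\termone \rew[\TRSone] \termone_1$ and $\termone \rew[\TRSone] \termone_2$ with $\termone_1 \neq \termone_2$, there is a common reduct $\termone_3$ such that $\termone_1 \rew[\TRSone] \termone_3$ and $\termone_2 \rew[\TRSone] \termone_3$. Once this is in hand, a standard strip-lemma argument yields that every normalising reduction from a term $\termone$ has the same length and reaches the same normal form, which is exactly the claim.

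First I would prove the diamond. Consider two contractions in $\termone$ at positions $p_1$ and $p_2$, using rules $\funone_i(\vec{l}_i) \to r_i$ with substitutions $\sigma_i$ satisfying $x\sigma_i \in \Val[\CS_\TRSone]$ for every variable $x$ of the corresponding left-hand side. The crucial observation is that $p_1$ and $p_2$ must occur at parallel (disjoint) positions. For if, say, $p_1$ were a strict prefix of $p_2$, then the inner redex at $p_2$ would lie either (i) at a non-variable position inside the left-hand side at $p_1$, which violates non-ambiguity, or (ii) inside the term $x\sigma_1$ substituted for some variable $x$ of the outer pattern; but $x\sigma_1 \in \Val[\CS_\TRSone]$ is a constructor term, hence irreducible, so no redex can hide there. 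Thus $p_1$ and $p_2$ are parallel, contracting either redex leaves the other untouched, and performing both in either order gives a common successor $\termone_3$.

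Next I would deduce the statement by a strip-lemma induction. Define $\ell(\termone)$ as the length of a shortest normalising reduction from $\termone$ when it exists. Using the diamond, I would prove by induction on $\ell(\termone)$ that whenever $\termone$ normalises and $\termone \rew[\TRSone] \termone'$, then $\termone'$ also normalises, $\ell(\termone') = \ell(\termone) - 1$, and any normalising reduction from $\termone'$ reaches the same irreducible term as any one from $\termone$. Iterating this along an arbitrary reduction $\termone \rsl[\TRSone]{k} \termthree$ with $\termthree$ irreducible forces $k = \ell(\termone)$ and $\termthree$ to be the common normal form. Applied to the two given reductions $\termone \rsl[\TRSone]{m} \termthree_1$ and $\termone \rsl[\TRSone]{n} \termthree_2$, this yields $m = n = \ell(\termone)$ and $\termthree_1 = \termthree_2$.

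The main obstacle is the diamond itself, and specifically convincing oneself that non-ambiguity under a call-by-value regime really rules out nested distinct redexes. The call-by-value restriction on the substitutions is essential here: without the assumption that variables in a left-hand side are instantiated only by values from $\Val[\CS_\TRSone]$, a redex could hide inside the substitution image of such a variable, and the disjointness argument above would collapse. Once this point is pinned down, the rest of the proof is the classical parallel-moves packaging, which under our assumptions can be presented quite compactly.
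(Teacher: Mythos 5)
Your proposal is correct, and it is essentially a worked-out version of the argument the paper merely cites: the paper gives no proof of Proposition~\ref{p:nonambiguous}, deferring to the parallel moves lemma of \cite{BN:1998}. Your reconstruction correctly identifies why that lemma degenerates here to a strong one-step diamond: under the call-by-value matching condition every variable of a contracted left-hand side is bound to a term in $\Val[\CS_\TRSone]$, which contains no defined symbols and hence no redex, so a second redex cannot sit below a variable of the pattern; and non-overlappingness excludes it from sitting at a non-variable position of the pattern. Two small points are worth adding. First, you only treat the case where $p_1$ is a \emph{strict} prefix of $p_2$; you should also dispose of $p_1 = p_2$, where two distinct rules matching the same subterm would have unifiable left-hand sides (a root overlap, excluded by non-ambiguity), while a single rule determines its reduct uniquely, contradicting $\termone_1 \neq \termone_2$. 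Second, it is worth remarking that your argument nowhere uses left-linearity, which the textbook parallel moves lemma for orthogonal systems assumes; the call-by-value restriction does that work for you, since parallel redexes cannot interfere with a (possibly non-linear) match. The concluding random-descent induction on the length of a shortest normalising reduction is the standard packaging and is sound as you present it.
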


An \emph{applicative term rewrite system} (\emph{ATRS} for short) is usually defined as a TRS over a signature
consisting of a finite set of nullary function symbols and one dedicated binary symbol (${\api}$), the \emph{application symbol}.
We follow the usual convention that (${\api}$) associates to the left.
Here, we are more liberal and just assume the presence of ($\api$), 
and allow function symbols that take more than one argument.
Throughout the following, we are foremost dealing with ATRSs, which we denote by $\atrsone,\atrstwo$ below. 
We also write ($\api$) infix and assume that it associates to the left.

In the following, we show that every $\PCF$-program $\progone$ can be
seen as an applicative term rewrite system $\atrsp$. 
To this end, we first define an \emph{infinite schema} $\atrsPCF$ of rewrite rules which 
allows us to evaluate the whole of $\PCF$. 
The signature underlying $\atrsPCF$ contains, besides 
the application-symbol ($\api$) and constructors $\seq[k]{\conone}$, 
the following function symbols, called \emph{closure constructors}:
(i)~for each \PCF\ term $\lam{\varone}{\expone}$ with $n$ free variables an
$n$-ary symbol $\clam{\varone}{\expone}$; 
(ii)~for each \PCF\ term $\fix{\varone}{\expone}$ with $n$ free variables an
$n$-ary symbol $\cfix{\varone}{\expone}$; and
(iii)~for each match-expression $\case{\expone}{cs}$ with $n$ free
variables a symbol $\ccase{cs}$ of arity $n+1$.
Furthermore, We define a mapping $\tr{\cdot}$ from \PCF\
terms to $\atrsPCF$ terms as follows. 
\begin{align*}
  \tr{\varone} & \defsym \varone \tspkt\\ 
  \tr{\lam{\varone}{\expone}} & \defsym \clam{\varone}{\expone}(\vec{\varone})\text{, where $\vec{\varone} = \FV(\lam{\varone}{\expone})$}\tspkt\\
  \tr{\conone_i(\seq[k]{\expone})} & \defsym \conone_i(\mapm\tr[k]{\expone}) \tspkt\\
  \tr{\app{\exptwo}{\expone}} & \defsym \apl{\tr{\exptwo}}{\tr{\expone}}\tspkt\\ 
  \tr{\fix{\varone}{\expone}} & \defsym \cfix{\varone}{\expone}(\vec{\varone})\text{, where $\vec{\varone} = \FV(\fix{\varone}{\expone})$}\tspkt\\ 
  \tr{\case{\expone}{cs}} & \defsym \ccase{cs}(\tr{\expone},\vec{\varone})\text{, where $\vec{\varone} = \FV(\{cs\})$}\tpkt
\end{align*}
Based on this interpretation, each closure constructor is equipped with one 
or more of the following \emph{defining rules}:
\begin{align*}
\apl{\clam{\varone}{\expone}(\vec{\varone})}{\varone}&\to\tr{\expone} \tspkt\\
\apl{\cfix{\varone}{\expone}(\vec{\varone})}{\vartwo}&\to \apl{\tr{\subst{\expone}{\varone}{\fix{\varone}{\expone}}}}{\vartwo}\text{ , where $\vartwo$ is fresh;}\\
\ccase{cs}(\conone_i(\vec{\varone}_i),\vec{\varone})&\to \tr{\expone_i}\text{ , for $i = 1,\dots,k$.}
\end{align*}
% Here the variable $\vartwo$ is supposed to be fresh. 
Here, we suppose $cs = \{\conone_1(\vec{\varone}_1) \mapsto \expone_1; \cdots ; \conone_k(\vec{\varone}_k) \mapsto \expone_k\}$.

For a program $\progone = \lambda \varone_1 \cdots \lambda
\varone_n. \expone$, the ATRS $\atrsp$ 
(i)~contains a rule $\fmain(\seq[n]{\varone}) \to \tr{\expone}$, where $\fmain$ is a dedicated function symbol; 
together with
(ii) the least subset of $\atrsPCF$ that defines all closure constructors occurring in $\atrsp$. 
Crucial, $\atrsp$ is always finite, in fact, the size of $\atrsp$ 
is linearly bounded in the size of $\progone$\shortlongv{~\cite{EV}}{, see below}. 
% in the following sense:
% if $l \to r \in \atrsp$ and $\funone$ is a closure-constructor in $r$, then $\atrsp$
% contains all defining rules of $\funone$.  
\begin{remark}
This statement becomes trivial if we consider alternative defining rule 
\[
\apl{\cfix{\varone}{\expone}(\vec{\varone})}{\vartwo} \to \apl{\subst{\tr{\expone}}{\cfix{\varone}{\expone}(\vec{\varone})}{\varone}}{\vartwo} \tkom
\]
which would also correctly model the semantics of fixpoints $\fix{\varone}{\expone}$.
Then the closure constructors occurring in $\atrsp$ are all obtained from sub-expressions of $\progone$.
Our choice is motivated by the fact that closure constructors
of fixpoints are propagates to call sites, 
something that facilitates our transformation approach to complexity analysis.
\end{remark}

\begin{example}\label{ex:tr}
  The expression $\progone_{\mathsf{rev}}$ from Example \ref{ex:pcf}
  gets translated into the ATRS $\atrsp[\progone_{\mathsf{rev}}]=\atrsone_{\mathsf{rev}}$
  we introduced in Section~\ref{s:defunc}.
\end{example}

\shortlongv{%
We obtain the following simulation result, a proof of which
can be found in~\cite{EV}, but also in~\cite{LM:LMCS:12}.}{%
We obtain the following simulation result}
\begin{proposition}\label{p:pcf2trs}
  Every $\pcf$-reduction of an expression $\progone\ \dataone_1\ \cdots\ \dataone_n$ ($\dataone_j \in \DATA$) is simulated step-wise 
  by a call-by-value $\atrsp$-derivation starting from $\fmain(\seq[n]{\dataone})$. 
\end{proposition}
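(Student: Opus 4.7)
The overall approach is a small-step simulation argument by induction on the length of the $\pcf$-reduction, with a \emph{substitution lemma} as the central technical ingredient. I would show that each $\pcf$-reduction step of $\progone\ \dataone_1 \cdots \dataone_n$ is matched by one (or, for the outermost beta steps, part of one) call-by-value $\atrsp$-step under the translation $\tr{\cdot}$.

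First I would establish the substitution lemma: for any expression $\expone$ occurring during the reduction of $\progone\ \vec{\dataone}$ and any substitution $\sigma$ mapping variables to $\pcf$-values, $\tr{\expone\sigma} = \tr{\expone}\hat{\sigma}$, where $\hat{\sigma}(\varone) \defsym \tr{\sigma(\varone)}$. The proof proceeds by structural induction on $\expone$. The variable, constructor, and application cases are immediate. The closure-constructor cases (lambda, fix, match) hinge on the weak call-by-value restriction: since reduction never occurs under a binder, every $\lambda$-abstraction, fixpoint, or match-expression encountered during execution is (up to value substitution of free variables) a syntactic sub-expression of $\progone$. Hence the closure constructor $\clam{\varone}{\expone}$ (respectively $\cfix{\varone}{\expone}$, $\ccase{cs}$) remains unchanged under substitution: the substituted values are absorbed into the closure's arguments rather than altering the constructor itself.

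With the substitution lemma, the simulation proceeds by case analysis on the shape of the $\pcf$-redex. The initial $\atrsp$-step from $\fmain(\seq[n]{\dataone})$ applies the rule $\fmain(\seq[n]{\varone}) \to \tr{\expone}$ and corresponds to the $n$ outermost beta reductions $\progone\ \dataone_1 \cdots \dataone_n \pcf^{n} \expone[\varone_i \mapsto \dataone_i]$. Thereafter, the simulation is step-for-step: a beta step $\app{(\lam{\varone}{\expone_1})}{\valone} \pcf \expone_1[\varone \mapsto \valone]$ is mirrored by the rule $\apl{\clam{\varone}{\expone_1}(\vec{\vartwo})}{\varone} \to \tr{\expone_1}$, whose instantiation by $\hat{\sigma}$ followed by the substitution lemma yields exactly $\tr{\expone_1[\varone \mapsto \valone]}$; fixpoint unfolding and case selection are handled analogously by their defining rules. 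Reduction inside an evaluation context is lifted to $\rew[\atrsp]$ via congruence of the call-by-value rewrite relation (evaluation contexts of $\pcf$ correspond to positions in $\tr{\cdot}$ since $\tr{\cdot}$ is homomorphic on application and constructors).

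The main obstacle is the closure-constructor case of the substitution lemma: on a naive reading of the translation, $\tr{\lam{\varone}{\expone\sigma}}$ would use a fresh closure constructor $\clam{\varone}{\expone\sigma}$ distinct from $\clam{\varone}{\expone}$, which would break the step-wise correspondence. Making the proof go through requires the (implicit) convention that closure constructors are indexed by sub-expressions of $\progone$ up to value-substitution of their free variables, together with the invariant that during reduction only such values ever fill in the free-variable slots. Once this invariant is formalised---e.g.\ via a notion of ``$\progone$-reduct'' or via a translation parametric in $\progone$---the lemma, and hence the proposition, follow routinely.
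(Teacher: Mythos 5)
The paper does not actually prove Proposition~\ref{p:pcf2trs} in-line---it defers to the cited extended version and to the simulation results of Dal~Lago and Martini---but the intended argument is exactly the one you give: induction on the length of the reduction with a substitution lemma $\tr{\expone\sigma}=\tr{\expone}\hat{\sigma}$ for value substitutions, the initial $n$ outer $\beta$-steps collapsing into the single $\fmain$-step, and congruence handling reduction in evaluation contexts. You have also correctly isolated the one genuine subtlety, namely that with the literal definition of the closure constructors the substitution lemma fails unless the constructors are keyed to the sub-expressions of $\progone$ rather than to their substitution instances; this is precisely what the paper's remark that ``the abstractions reduced in a run of $\progone$ are the ones found in $\progone$'' is meant to guarantee, and your proposed reachability invariant is the standard way to formalise it, so the argument is sound.
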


As the inverse direction of this
proposition can also be stated, $\atrsp$ can be seen as a sound and
complete, in particular step-preserving, implementation of the
$\PCF$-program $\progone$.

In correspondence to Proposition~\ref{p:pcf2trs}, we define the runtime complexity of an ATRS $\atrsone$
as follows. 
As above, only terms $\dataone \in \DATA$ built from the constructors $\CS$ are considered valid inputs. 
The \emph{runtime of $\atrsone$ on inputs} $\seq[n]{\dataone}$ is defined as the length of the longest rewrite sequence 
starting from $\fmain(\seq[n]{\dataone})$. The \emph{runtime complexity function} 
is defined as the (partial) function which maps the natural number $m$
to the maximum runtime of $\atrsone$ on inputs $\seq[n]{\dataone}$ with $\sum_{j=1}^n \size{\dataone_j} \leqslant m$, 
where the \emph{size} $\size{\dataone}$ is defined as the number of occurrences of constructors in $\dataone$.

Crucial, our notion of runtime complexity corresponds to the notion
employed in first-order rewriting and in particular in FOPs. Our
simple form of defunctionalisation thus paves the way to our primary
goal: a successful complexity analysis of $\atrsp$ with
rewriting-based tools can be relayed back to the $\PCF$-program
$\progone$.

%%% Local Variables:
%%% mode: latex
%%% TeX-master: "paper"
%%% End:

%%%%%%%%%%%%%%%%%%%%%%%%%%%%%%%%%%%%%%%%%%%%%%%%%%%%
\section{Complexity Reflecting Transformations}\label{s:simplify}
%%%%%%%%%%%%%%%%%%%%%%%%%%%%%%%%%%%%%%%%%%%%%%%%%%%%
The result offered by Proposition~\ref{p:pcf2trs} is remarkable, but
is a Pyrrhic victory towards our final goal: as discussed in
Section~\ref{s:defunc}, the complexity of defunctionalised programs is
hard to analyse, at least if one wants to go via FOPs. It is then time
to introduce the four program transformations that form our toolbox,
and that will allow us to turn defunctionalised programs into ATRSs
which are easier to analyse.

In this section, we describe the four transformations abstractly,
without caring too much about \emph{how} one could implement them.
Rather, we focus on their correctness and, even more importantly for
us, we verify that the complexity of the transformed program is not
too small compared to the complexity of the original one. We will also
show, through examples, how all this can indeed be seen as a way to
simplify the recursive structure of the programs at hand.

A \emph{transformation} is a partial function $\transone$ from ATRSs
to ATRSs.  In the case that $\transone(\atrsone)$ is undefined, the
transformation is called \emph{inapplicable} to $\atrsone$.  We call
the transformation $\transone$ (\emph{asymptotically})
\emph{complexity reflecting}
if for every ATRS $\atrsone$, the runtime
complexity of $\atrsone$ is bounded (asymptotically) by the runtime
complexity of $\transone(\atrsone)$, whenever $\transone$ is
applicable on $\atrsone$.  Conversely, we call $\transone$
(\emph{asymptotically}) \emph{complexity preserving} if the runtime complexity of
$\transone(\atrsone)$ is bounded (asymptotically) by the complexity of $\atrsone$,
whenever $\transone$ is applicable on $\atrsone$.
The former condition states a form of \emph{soundness:} if $\transone$
is complexity reflecting, then a bound on the runtime complexity
of $\transone(\atrsone)$
can be relayed back to $\atrsone$. The latter conditions states
a form of \emph{completeness}: application of a complexity preserving transformation
$\transone$ will not render our analysis ineffective, simply because $\transone$ translated
$\atrsone$ to an inefficient version.
We remark that the
set of complexity preserving (complexity reflecting)
transformations is closed under composition.

%%%%%%%%%%%%%%%%%%%%%
\subsection{Inlining}\label{s:inlining}
%%%%%%%%%%%%%%%%%%%%%
Our first transformation constitutes a form of \emph{inlining}. This
allows for the elimination of auxiliary functions, this way making the
recursive structure of the considered program apparent.

Consider the ATRS $\atrsrev$ from Section~\ref{s:defunc}.
There, for instance, the call to \atrs|walk| in the definition of
\atrs|fix[walk]| could be \emph{inlined}, thus resulting in a new
definition:
\begin{lstlisting}[style=atrs]
fix[walk] @ xs -> match[walk](xs)
\end{lstlisting}
Informally, thus, inlining consists in modifying the right-hand-sides
of ATRS rules by rewriting subterms, according to the ATRS
itself. 
We will also go beyond rewriting, 
by first specializing arguments sufficiently so
that a rewrite triggers. In the above rule for instance, 
\atrs|match[walk]| cannot be inlined immediately, simply 
because \atrs|match[walk]| is defined itself by case analysis on \atrs|xs|. 
To allow inlining of this function nevertheless, we specialize \atrs|xs| to the patterns 
\atrs|[]| and \atrs|x::ys|, the patterns underlying the case analysis of \atrs|match[walk]|.
This results in two alternative rules for \atrs|fix[walk]|, namely
\begin{lstlisting}[style=atrs]
fix[walk] @ [] -> match[walk]([])
fix[walk] @ (x::ys) -> match[walk](x::ys) .
\end{lstlisting}
Now we can inline \atrs|match[walk]|, and as a consequence the 
rules defining \atrs|fix[walk]| are easily seen to be structurally recursive,
a fact that FOPs can recognise and exploit.

A convenient way to formalise inlining is by way of
\emph{narrowing}~\cite{BN:1998}. 
We say that a term $\termtwo$
\emph{narrows} to a term $\termone$ at a non-variable position
$\posone$ in $\termtwo$, in notation $\termtwo \narrow{\mguone}{\posone} \termone$, 
if there exists a rule $l \to r \in \atrsone$ 
such that $\mguone$ is a unifier of left-hand side $l$
and the subterm $\subtermAt{\termtwo}{\posone}$ (after renaming apart variables in $l \to r$ and $\termtwo$) 
and $\termone = \ctx[\posone]{\termtwo\mguone}[r\mguone]$.
In other words, the instance $\termtwo\mguone$ of $\termtwo$ rewrites to $\termone$ at position $\posone$
with rule $l \to r \in \atrsone$. 
The substitution $\mguone$ is just enough to uncover the corresponding redex in $\termtwo$.
Note however that the performed rewrite step is not necessarily call-by-value, the mgu $\mguone$ 
could indeed contain function calls.
We define the set of all \emph{inlinings} of a rule $l \to r$ at position $\posone$ which is labeled by a defined symbol by
\[
\narrowings{\posone}{l \to r} \defsym \{ l \mguone \to r' \mid r \narrow{\mguone}{\posone} r' \} \tpkt
\]
The following example demonstrates inlining through narrowing.
\begin{example}
Consider the substitutions $\mguone_1 = \{{\atrs|xs|}\mapsto \text{\atrs|[]|}\}$ and
$\mguone_2 = \{{\atrs|xs|}\mapsto {\atrs|x::ys|}\}$. Then we have
\begin{align*}
{\atrs|match[walk](xs)|}
& \narrow[\atrsrev]{\mguone_1}{\posempty}
\text{\atrs|C2|}\\
\text{\atrs|match[walk](xs)|}
& \narrow[\atrsrev]{\mguone_2}{\posempty}{\atrs|comp @ (fix[walk] @ ys) @ C3(x)|}\tpkt
\end{align*}
Since no other rule of $\atrsrev$ unifies with the right-hand side 
\atrs|match[walk](xs)|, the set 
\[
\narrowings[\atrsrev]{\posempty}{\text{\atrs|fix[walk] @ xs -> match[walk](xs)|}}
\]
consists of the two rules 
\begin{lstlisting}[style=atrs]
fix[walk] @ [] -> C2
fix[walk] @ (x::ys) ->
  comp @ (fix[walk] @ ys) @ C3(x) .
\end{lstlisting}
\end{example}

Inlining is in general not complexity reflecting.
Indeed, inlining is employed by many compilers as a program optimization technique. 
The following examples highlight two issues we have to address. The first example
indicates the obvious: in a call-by-value setting,
inlining is not asymptotically complexity reflecting, if potentially expensive function calls
in arguments are deleted. 
\begin{example}
  Consider the following inefficient system:
\begin{lstlisting}[style=atrs,style=numbers]
k(x,y) -> x
main(0) -> 0
main(S(n)) -> k(main(n),main(n)) %\label{r:narrowing:1}%
\end{lstlisting}
Inlining \atrs|k| in the definition of \atrs|main| results in an alternative definition
$\atrs|main(S(n)) -> main(n)|$
of rule~\eqref{r:narrowing:1}, eliminating one of the two recursive calls and 
thereby reducing the complexity from exponential to linear. 
\end{example}

The example motivates the following, easily decidable, condition. 
Let $l \to r$ denote a rule whose right-hand side is subject to inlining at position $\posone$. 
Suppose the rule $u \to v \in \atrsone$ is unifiable with the subterm $\subtermAt{r}{\posone}$ of the right-hand side $r$, 
and let $\mguone$ denote the most general unifier. 
Then we say that inlining $\subtermAt{r}{\posone}$ with $u \to v$ 
is \emph{redex preserving} if whenever $x\mguone$ contains a defined symbol of $\atrsone$, 
then the variable $x$ occurs also in the right-hand side $v$. 
The inlining $l \to r$ at position $\posone$ is called \emph{redex preserving} if inlining $\subtermAt{r}{p}$ 
is redex preserving with \emph{all rule} $u \to v$ that unify with $\subtermAt{r}{\posone}$.
Redex-preservation thus ensures that inlining does not delete potential function calls, apart 
from the inlined one.
In the example above, inlining \atrs|k(main(n),main(n))| is not redex preserving because
the variable \atrs|y| is mapped to \atrs|main(n)| by the underlying unifier, 
but \atrs|y| is deleted in the inlining rule \atrs|k(x,y) -> x|.
% Consider a rule subject to the inlining transformation and let $u \to v$ be the
% inlining rule such that the resulting rule is given as $l\tau \to r\tau[v\tau]$.
% Suppose $V$ collects those variables, where  $x\tau$ contains a defined
% symbol. Then we call the rule $u \to v$ \emph{redex preserving} if every variable
% $\varone \in V$ occurs as least as often in $v$ as it occurs in $u$.

Our second example is more subtle and arises when the studied rewrite
system is under-specified:
\begin{example}
  Consider the system consisting of the following rules.
\begin{lstlisting}[style=atrs,style=numbers]
h(x,0) -> x
main(0) -> 0
main(S(n)) -> h(main(n),n) %\label{r:narrowing:2}%
\end{lstlisting}
Inlining \atrs|h| in the definition of \atrs|main| will specialise the
variable \atrs|n| to \atrs|0| and thus replaces rule~\eqref{r:narrowing:2}
by \atrs|main(S(0)) -> main(0)|. Note that the runtime complexity
of the former system is linear, whereas its runtime complexity is constant
after transformation.
\end{example}
Crucial for the example, the symbol \atrs|h| is not \emph{sufficiently defined}, i.e., 
the computation gets stuck after completely unfolding \atrs|main|.
To overcome this issue, we require that inlined functions are sufficiently defined. 
Here a defined function symbol $\funone$ is called \emph{sufficiently defined}, with respect to an ATRS $\atrsone$,
if all subterms $\funone(\vec{\termone})$ occurring in a reduction
of $\fmain(\seq[n]{\dataone})$ ($\dataone_j \in \DATA$) are reducible.
This property is not decidable in general. 
Still, the ATRSs obtained from the translation in Section~\ref{s:pcf2trs}
satisfy this condition for all defined symbols: by construction, 
reductions do not get stuck. 
%MA: we have to discuss this!
%MA: for inlining i think this is non-trivial to see. do we have to prove it?
Inlining, and the transformations discussed below, 
preserve this property.

We will now show that under the above outlined conditions, inlining is indeed complexity reflecting.
Fix an ATRS $\atrsone$.
\longv{%
The following auxiliary lemma follows by a standard induction on the length
of derivations, see e.g.~\cite{HMZ:JAR:13}. As a consequence, we can assume
that reductions have a very specific form.
\begin{lemma}\label{l:narrow:irew}
  \begin{enumerateenv}
  % \item If steps $\termone \irew[\TRSone] \termtwo_1$ and $\termone \irew[\TRSone] \termtwo_2$
  %   happen at parallel positions then
  %   $\termtwo_1 \irew[\TRSone] \termthree$ and $\termtwo_2 \irew[\TRSone] \termthree$
  %   for some term $\termthree$.
  \item\label{l:narrow:irew:ctx}
    If $\ctxone[\termone] \rsl[\atrsone]{m} \termthree$ is a normalizing derivation, then
    $\ctxone[\termone] \rsl[\atrsone]{m_1} \ctxone[\termtwo] \rsl[\atrsone]{m_2} \termthree$
    for some normalform $\termtwo$ of $\termone$ and $m_1,m_2\in\N$  with $m_1 + m_2 = m$.
  \item\label{l:narrow:irew:subst}
    If $\termone\sigma \rsl[\atrsone]{m} \termthree$ is a normalizing derivation, then
    $\termone\sigma \rsl[\atrsone]{m_1} \termone\tau \rsl[\TRSone]{m_2} \termthree$
    for some normalized substitutions $\tau$ and $m_1,m_2\in\N$  with $m_1 + m_2 = m$.
    % Moreover, if $u \in \Val[\CS_\atrsone]$ then $\ofdom{\tau}{\VS \to \Val[\CS_\TRSone]}$.
  \end{enumerateenv}
\end{lemma}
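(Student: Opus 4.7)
The plan is to prove both parts by induction on the reduction length $m$, exploiting the call-by-value restriction that every contracted redex's matching substitution maps variables to values (hence to normal forms), together with Proposition~\ref{p:nonambiguous} to pin down reduction lengths uniquely.

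For part~\ref{l:narrow:irew:ctx}, the base case $m = 0$ is immediate: since $\termone$ is a subterm of the normal form $\ctxone[\termone] = \termthree$, it is itself a normal form, and we may take $\termtwo \defsym \termone$ with $m_1 = m_2 = 0$. For the inductive step, fix the hole position $p$ of $\ctxone$ and the position $q$ at which the first rewrite step $\ctxone[\termone] \rew[\atrsone] \termone'$ occurs, and distinguish three cases. If $q$ lies at or below $p$, the step happens inside $\termone$, i.e., $\ctxone[\termone] \rew[\atrsone] \ctxone[\termone_1]$ with $\termone \rew[\atrsone] \termone_1$, and the IH applied to $\ctxone[\termone_1] \rsl[\atrsone]{m-1} \termthree$ completes the case. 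If $q$ is a proper prefix of $p$, the contracted redex at $q$ matches a rule $l \to r$ by a call-by-value substitution $\mu$ whose values lie in $\Val[\CS_\atrsone]$; since $p$ descends into a variable position of $l$, the subterm $\termone$ lies inside some value $\mu(x)$, which is already a normal form, so $\termtwo \defsym \termone$ with $m_1 \defsym 0$ suffices. If $q$ and $p$ are incomparable, the step is independent of the hole: writing $\ctxone[\termone] \rew[\atrsone] \ctxone'[\termone]$, the very same rewrite remains enabled at $q$ in $\ctxone[\termtwo]$ because the matching substitution draws its values from positions disjoint from $p$. Applying the IH to $\ctxone'[\termone] \rsl[\atrsone]{m-1} \termthree$ and commuting this step back yields the desired decomposition.

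For part~\ref{l:narrow:irew:subst}, I first extract $\tau$ and then count steps. For each variable $x$ of $\termone$, picking any occurrence of $x$ at some position $p$, we write $\termone\sigma = C_p[x\sigma]$ with $C_p$ the context obtained by placing a hole at $p$; part~\ref{l:narrow:irew:ctx} then shows that $x\sigma$ normalizes. By Proposition~\ref{p:nonambiguous}, the resulting normal form depends only on $x$, so setting $x\tau$ to this normal form defines a normalized substitution $\tau$. Iterating part~\ref{l:narrow:irew:ctx} across the variable occurrences of $\termone$, each occurrence of $x\sigma$ in $\termone\sigma$ can be rewritten to $x\tau$ in turn, producing a reduction $\termone\sigma \rsl[\atrsone]{m_1} \termone\tau$ for some $m_1 \leqslant m$. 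The residual $m - m_1$ steps of the ambient normalizing reduction then drive $\termone\tau$ to $\termthree$, giving $m_2 \defsym m - m_1$.

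The main obstacle I anticipate is the careful position bookkeeping in the incomparable sub-case of part~\ref{l:narrow:irew:ctx}: one must verify that the call-by-value step at $q$ remains enabled after replacing the hole's contents, which hinges on the rule's substitution drawing its values strictly from outside the $\termone$-subterm. For part~\ref{l:narrow:irew:subst}, the delicate point is that distinct occurrences of the same variable are reduced independently, and Proposition~\ref{p:nonambiguous} is what ensures these independent reductions converge to a common value, so that $\tau$ is well-defined and the assembled reduction indeed has total length $m$.
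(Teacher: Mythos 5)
The paper does not actually spell out a proof of this lemma; it merely remarks that the statement follows by a standard induction on the length of derivations and points to the literature. Your argument is precisely that standard induction, and it is essentially correct. In particular, the reduction of part~(\ref{l:narrow:irew:subst}) to iterated applications of part~(\ref{l:narrow:irew:ctx}) at the pairwise parallel variable occurrences of $\termone$, with Proposition~\ref{p:nonambiguous} guaranteeing that the normal form assigned to a repeatedly occurring variable does not depend on the chosen occurrence, is sound, as is the step-count bookkeeping in the three cases of part~(\ref{l:narrow:irew:ctx}).

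One step deserves a better justification. In the case of part~(\ref{l:narrow:irew:ctx}) where the contracted position $q$ is a proper prefix of the hole position $p$, you assert that $p$ descends into a variable position of the left-hand side $l$, so that $\termone$ sits inside a value $\mu(x)$. That need not be the case: $p$ may align with a proper \emph{non-variable} position of $l$, in which case $\termone$ is an instance of a non-variable proper subterm of $l$ rather than a subterm of some $\mu(x)$. The conclusion you want --- that $\termone$ is already a normal form --- still holds, but it requires the standing non-ambiguity assumption: if a proper subterm of the call-by-value redex $l\mu$ were itself reducible, the inner redex would occur either inside some $\mu(x) \in \Val[\CS_\atrsone]$ (impossible, since such values contain no defined symbols) or at a proper non-variable position of $l$, which would yield an overlap of left-hand sides and contradict non-ambiguity. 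With that observation supplied, the case goes through with $\termtwo = \termone$ and $m_1 = 0$ exactly as you claim, and the rest of the proof stands.
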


}
In proofs below, we denote by $\frew[\atrsone]$ an extension of $\rew[\atrsone]$ where
not all arguments are necessarily reduced, but where still a step cannot delete redexes:
$\termtwo \frew[\atrsone] \termone$ if $\termtwo = C[l\sigma]$ and
$\termone = C[r\sigma]$ for a context $C$, rule $l \to r \in \atrsone$
and a substitution $\sigma$ which satisfies $\sigma(\varone) \in \Val[\CS_\atrsone]$ for all variables 
$\varone$ which occur in $l$ but not in $r$. 
By definition, ${\rew[\atrsone]} \subseteq {\frew[\atrsone]}$.
The relation $\frew[\atrsone]$ is just enought to capture rewrites performed on right-hand sides 
in a complexity reflecting inlining. 
% Then the following property holds. 

% \begin{lemma}\label{l:inline:step}
%   Let $l \to r$ be a rewrite rule subject to inlining, and let $\posone$ be a position in $r$ labeled by a defined
%   symbol $\funone$. 
%   Suppose that inlining $l \to r$ at position $\posone$ is redex preserving.
%   Let $\atrsI$ collects all rules that are unifiable with the left-hand side $r$ at position $\posone$. 
%   If 
%   \[
%     C[l\sigma] \nR C[t]
%   \]
%   then 
%   \[
%     C[r\sigma] \fR C[t] \tpkt
%   \]
% \end{lemma}

The next lemma collects the central points of our correctness proof.
Here, we first considers the effect of replacing a single application of a rule $l \to r$ 
with an application of a corresponding rule in $\narrowings{\posone}{l \to r}$. 
As the lemma shows, this is indeed always possible, provided the inlined function 
is sufficiently defined. Crucial, inlining preserves not only semantics, 
but complexity reflecting inlining does not optimize the ATRS under consideration too much, if at all.
\begin{lemma}\label{l:inline:helpers}
  Let $l \to r$ be a rewrite rule subject to a redex preserving inlining of function $\funone$ at position $\posone$ in $r$.
  Suppose that the symbol $\funone$ is sufficiently defined by $\atrsone$.
  Consider a normalising reduction 
  \[
    \fmain(\seq[n]{\dataone}) \rss[\atrsone] C[l\sigma] \rew[\atrsone] C[r\sigma] \rsl[\atrsone]{\ell} \termthree \tkom
  \] 
  for $\dataone_i \in \DATA$ ($i = 1,\dots,n$) and some $\ell \in \N$. 
  Then there exists a term $\termone$ such that the following properties hold:
  \begin{enumerateenv}
    \item\label{l:inline:helpers:sd}
      $l\sigma \nR \termone$; and 
    \item\label{l:inline:helpers:step}
      $r\sigma \fR \termone$, 
      where $\atrsI$ collects all rules that are unifiable with the left-hand side $r$ at position $\posone$; and 
    \item\label{l:inline:helpers:rp}
      $C[\termone] \rsl[\atrsone]{{\geqslant} \ell - 1} \termthree$.
  \end{enumerateenv}
\end{lemma}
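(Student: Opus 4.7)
The plan is to locate, within the normalising continuation $C[r\sigma] \rsl[\atrsone]{\ell} \termthree$, the step that contracts the $\funone$-occurrence freshly introduced at position $\posone$ in $r\sigma$. Such a step must occur because $\funone$ is sufficiently defined and $\termthree$ is irreducible. Using Proposition~\ref{p:nonambiguous} together with a call-by-value standardisation of this normalising reduction, I would reorder its $\ell$ steps into three phases: first the subterms strictly below position $\posone$ (the arguments of $\funone$) are reduced to values, as forced by call-by-value; then a single step fires at $\posone$ with some rule $u \to v \in \atrsone$; finally the residual steps take the resulting term to $\termthree$. Steps happening entirely outside the scope of this $\funone$-redex can be interleaved either before or after without affecting their total count.

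At the moment the $\funone$-step fires, the subterm at position $\posone$ is a term $\funone(\vec{w})$ whose arguments are values matching $u$, and it is a reduct of $\subtermAt{r}{\posone}\sigma$. Consequently $u$ and $\subtermAt{r}{\posone}$ admit a common instance and therefore unify; let $\mguone$ denote their most general unifier. By the universal property of the mgu, $\sigma$ factors as $\mguone \cdot \tau'$ on the relevant variables of $l$, so $u \to v$ belongs to $\atrsI$ and the narrowed rule $l\mguone \to \ctx[\posone]{r\mguone}[v\mguone]$ lies in $\narrowings{\posone}{l \to r}$ and matches $l\sigma = l\mguone\tau'$. Taking $\termone \defsym \ctx[\posone]{r\mguone\tau'}[v\mguone\tau']$, property~(\ref{l:inline:helpers:sd}) holds by construction, and property~(\ref{l:inline:helpers:step}) holds because $\termone$ is exactly the contraction of the redex of $u \to v$ at position $\posone$ inside $r\sigma$, performed without first reducing the arguments to values and therefore captured by $\frew[\atrsone]$ rather than $\rew[\atrsone]$.

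The main obstacle is property~(\ref{l:inline:helpers:rp}). The narrowing has absorbed exactly one rewrite of the original derivation (the $\funone$-contraction), so heuristically one expects $C[\termone]$ to reach $\termthree$ in $\ell - 1$ steps. The subtlety is that the argument reductions of the $\funone$-call, which the original derivation carried out \emph{before} the $\funone$-step under call-by-value, must now happen \emph{after} the narrowing, inside the substituted right-hand side $v\mguone\tau'$ in $C[\termone]$. This is precisely what redex-preservation guarantees: any variable $x$ of $u$ whose $\mguone$-image contains a defined symbol must, by hypothesis, already occur in $v$, and hence the corresponding reducible subterm of $\tau'$ survives in $v\mguone\tau'$ and its reduction is merely deferred rather than discarded. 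Without this condition, the work of reducing such an argument would be silently erased by the narrowing and the lower bound $\ell - 1$ could fail. With it, every non-$\funone$ step of the original reduction admits a (suitably reordered) counterpart in a reduction from $C[\termone]$, yielding the desired inequality.
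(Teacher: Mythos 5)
Your proposal follows essentially the same route as the paper's own proof: identify the rule $u \to v$ that is responsible for contracting the $\funone$-call at position $\posone$, factor $\sigma$ through the most general unifier $\mguone$ of $u$ and $\subtermAt{r}{\posone}$ so that the corresponding rule of $\narrowings{\posone}{l \to r}$ applies to $l\sigma$, take $\termone$ to be the resulting narrowed instance, and then reorder the normalising continuation (justified by non-ambiguity) so that exactly one step is absorbed, with redex preservation guaranteeing that the argument reductions deferred into the instantiated right-hand side $v$ are postponed rather than erased. Your treatment of properties~(2) and~(3) matches the paper's almost step for step.

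The one weak link is your justification of unifiability. You locate the later firing of the $\funone$-redex, note that the contracted term $\funone(\vec{w})$ is a reduct of $\subtermAt{r}{\posone}\sigma$ matching $u$, and conclude that $u$ and $\subtermAt{r}{\posone}$ ``admit a common instance and therefore unify.'' That inference is not valid in general: a reduct of an instance of $\subtermAt{r}{\posone}$ need not itself be an instance of $\subtermAt{r}{\posone}$. Indeed, since $\sigma$ assigns values to the variables of $l$, any argument reductions preceding the root step at $\posone$ must occur at \emph{non-variable} positions of $\subtermAt{r}{\posone}$ (i.e.\ when the arguments of the call in $r$ themselves contain defined symbols), and this is precisely the situation in which the common-instance claim can fail. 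The paper sidesteps this by reading ``sufficiently defined'' as asserting that $\subtermAt{r}{\posone}\sigma$ is already a root redex, i.e.\ $\subtermAt{r}{\posone}\sigma = u\tau$ for some rule $u \to v$ and matcher $\tau$; then $\sigma \uplus \tau$ is directly a unifier of $\subtermAt{r}{\posone}$ and $u$, the factorisation through $\mguone$ is immediate, and no standardisation is needed for property~(1) at all. Under that reading your argument goes through unchanged; without it, the unifiability step needs a separate justification.
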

\longv{
\begin{proof}
  Consider the first property, under the assumptions of the lemma.
  Since $\funone$ is sufficiently defined, the subterm $\subtermAt{r}{\posone}\sigma$ of $r\sigma$
  rooted in $\funone$ is a redex, in particular, $\subtermAt{r}{\posone}\sigma$ matches the 
  left-hand side $u$ of a rule
  $u \to v \in \atrsI$, say $\subtermAt{r}{\posone}\sigma = u\tau$ for some substitution $\tau$.
  Wlog.\ we suppose that the rules in $\atrsone$ are variable disjoint with $r$.
  Hence $\sigma \uplus \tau$ is a well-defined unifier of $\subtermAt{r}{\posone}$ and $u$.
  Let $\mguone$ be a most general unifier of $\subtermAt{r}{\posone}$ and $u$.
  We thus have a substitution $\sigma'_\mguone$ such that for all variables $\varone$ in $\subtermAt{r}{\posone}$,
  $\sigma(\varone) = \mguone(\varone)\sigma'_\mguone$ holds.
  Let $\sigma_\mguone$ be the least extension of $\sigma'_\mguone$ such that $\sigma_\mguone(\varone) = \sigma(\varone)$
  for variables in $l$ which do not occur in $\subtermAt{r}{\posone}$.
  We conclude
  $l\sigma = (l\mguone)\sigma_\mguone \nR (\ctx[\posone]{r\mguone}[v\mguone])\sigma_\mguone$, 
  where the equality follows by definition of $\sigma_\mguone$, and the step by definition of $\narrowings{\posone}{l \to r}$.
  The property follows by taking $\termone = (\ctx[\posone]{r\mguone}[v\mguone])\sigma_\mguone$.

  Now for the second property, 
  recall $l\sigma = (l\mguone)\sigma_\mguone \rew[\atrsone] (r\mguone)\sigma_\mguone = r\sigma$.
  Let $D$ denote the context obtained by replacing the subterm at 
  position $\posone$ in $r\sigma$ by the hole $\hole$, 
  hence 
  \[
    D = \ctx[\posone]{r\sigma}[\hole] 
    = \ctx[\posone]{(r\mguone)\sigma_\mguone}[\hole] 
    = (\ctx[\posone]{r\mguone}[\hole])\sigma_\mguone \tpkt
  \]
  Since $\mguone$ is an mgu of $\subtermAt{r}{\posone}$ and $u$, we thus have
  \[
    r\sigma = D[(\subtermAt{r}{\posone}\mguone)\sigma_\mguone] = D[(u\mguone)\sigma_\mguone]\tpkt
  \]
  Then it is not difficult to conclude that $r\sigma \fR D[(v\mguone)\sigma_\mguone] = \termone$, 
  using that $u \to v \in \atrsI$ is redex preserving wrt.\ the considered inlining and that $\sigma_\mguone$ contain no defined symbols.

  For the final property, consider the sequence $C[r\sigma] \rsl[\atrsone]{\ell} \termthree$, for $\termthree$ in normalform.
  As we observed before, $r\sigma = D[u\tau]$ for the context $D$ defined above, $u \in v \in \atrsI$ and $\tau$ a substitution.
  Using Lemma~\ref{l:narrow:irew}, and employing that redexes are non-overlapping by assumption on $\atrsone$, 
  we can thus obtain an alternate derivation of equal length, where we first completely reduce $u\tau$:
  \[
    C[r\sigma] = C[D[u\tau]] \irsl[\atrsone]{\ell_1} C[D[u\tau_n]] \rew[\atrsone] C[D[v\tau_n]] \irsl[\atrsone]{\ell_2} \termthree \tpkt
  \]
  Here, $\tau_n$ is the normalised substitution obtained by normalising $\tau$, and $\ell = \ell_1 + \ell_2 + 1$. 
  Note that by construction, we have $t = D[v\tau]$. Guided by the above derivation we see
  \[
    C[t] = C[D[v\tau]] \irsl[\atrsone]{k_1} C[D[v\tau_n]] \irsl[\atrsone]{\ell_2} \termthree \tpkt
  \]
  Using that the step $D[u\tau] \fR D[v\tau]$ is not deleting redexes occurring in the substitution $\tau$
  by definition, we have $k_1 \geqslant \ell_1$. 
  In total, the last sequence is thus of length $k_1 + \ell_2 \geqslant \ell_1 + \ell_2$. From the definition of $\ell$, 
  the last property follows. 
\end{proof}
}
\noindent In consequence, we thus obtain a term $\termone$
\[
  \fmain(\seq[n]{\dataone}) \rss[\atrsone] C[l\sigma] \nR C[\termone] \rsl[\atrsone]{{\geqslant} \ell - 1} \termthree \tkom
\] 
holds under the assumptions of the lemma.
Complexity preservation of inlining, modulo a constant factor under the outlined assumption, now 
follows essentially by induction on the maximal length of reductions. 
As a minor technical complication, we have to consider the broader reduction relation 
$\frew[\atrsone]$ instead of $\rew[\atrsone]$. To ensure that the induction 
is well-defined, we use the following specialization of {\cite[Theorem 3.13]{Gramlich:FI:95}}.
\begin{proposition}\label{p:gramlich:95}
  If a term $\termone$ has a normalform wrt. $\rew[\atrsone]$,
  then \emph{all} $\frew[\atrsone]$ reductions of $\termone$ are finite.
\end{proposition}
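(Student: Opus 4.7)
The plan is to derive this proposition from Gramlich's Theorem~3.13 in the cited paper, which, for suitable non-overlapping rewrite systems, propagates \emph{weak innermost termination} to strong termination under more permissive, redex-preserving reduction strategies. To apply the theorem here, I have to match $\rew[\atrsone]$ and $\frew[\atrsone]$ with the two relations appearing in the cited result.

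First, I would observe that $\rew[\atrsone]$ is essentially the innermost rewrite relation on $\atrsone$. Since the values in $\Val[\CS_\atrsone]$ are pure constructor terms and every left-hand side of $\atrsone$ is rooted in a defined symbol, any substitution whose range is $\Val[\CS_\atrsone]$ consists entirely of $\rew[\atrsone]$-normalforms. Hence a matched redex $l\sigma$ in a $\rew[\atrsone]$-step is innermost, and the existence of a $\rew[\atrsone]$-normalform of $\termone$ corresponds to weak innermost termination of $\termone$ in Gramlich's sense.

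Second, I would identify $\frew[\atrsone]$ as the relaxed but redex-preserving relation required by Gramlich's theorem: a $\frew[\atrsone]$-step requires only \emph{erased} variables (those occurring in $l$ but not in $r$) to be bound to values. Consequently, the only subterms dropped by a $\frew[\atrsone]$-step are values, and therefore $\rew[\atrsone]$-normalforms, so no pending reductions are destroyed by the step. This is exactly the non-erasure condition at the heart of Gramlich's argument. Since $\atrsone$ is non-overlapping by standing assumption, the theorem's hypotheses are met, and it yields the desired conclusion: any $\rew[\atrsone]$-normalising $\termone$ is strongly normalising under $\frew[\atrsone]$.

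The main obstacle is the precise matching between Gramlich's formulation and our notation. In particular, one has to verify that the restriction defining $\frew[\atrsone]$ aligns with the non-erasure hypothesis of Theorem~3.13, and that call-by-value (restricting \emph{all} rule-variables to values) in our paper plays the role of the innermost strategy in Gramlich's framework. Once this correspondence is made explicit — using the fact that our rewrite rules are constructor-rooted on the left and that $\Val[\CS_\atrsone]$-substitutions are automatically normalised — the conclusion follows by direct instantiation of the cited theorem.
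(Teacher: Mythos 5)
Your proposal is correct and takes essentially the same route as the paper: the paper gives no proof of this proposition beyond declaring it a specialization of Gramlich's Theorem~3.13, and your instantiation --- reading $\rew[\atrsone]$ as weak innermost normalization (constructor-value substitutions make every contracted redex innermost) and $\frew[\atrsone]$ as the redex-preserving relaxation to which Gramlich's result for non-overlapping systems applies --- is exactly the correspondence the paper leaves implicit.
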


\begin{theorem}\label{t:inlining}
  Let $l \to r$ be a rewrite rule subject to a redex preserving inlining of function $\funone$ at position $\posone$ in $r$.
  Suppose that the symbol $\funone$ is sufficiently defined by $\atrsone$.
  Let $\atrstwo$ be obtained by replacing rule $l \to r$ by the rules $\narrowings{\posone}{l \to r}$. 
  Then every normalizing derivation with respect to $\atrsone$ starting from
  $\fmain(\seq[n]{\dataone})$ ($\dataone_j \in \DATA$) of
  length $\ell$ is simulated by a derivation with respect to $\atrstwo$
  from $\fmain(\seq[n]{\dataone})$ of length at least $\roundbelow{\frac{\ell}{2}}$.
\end{theorem}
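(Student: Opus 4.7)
My plan is to induct on the length $\ell$ of the given normalizing $\atrsone$-derivation, using Lemma~\ref{l:inline:helpers} to eliminate applications of $l \to r$ one at a time, replacing each with a single narrowing step drawn from $\narrowings{\posone}{l \to r} \subseteq \atrstwo$. The key insight already captured in Lemma~\ref{l:inline:helpers} is that any occurrence of $l \to r$ together with the ensuing reduction of the inlined redex at position $\posone$ can be fused into a single rewrite step using the narrowed rule. Hence in the worst case each $l \to r$ step ``absorbs'' one subsequent $\atrsone$-step into a single $\atrstwo$-step, and the target derivation is at worst half as long as the source.

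To make the induction go through, I would strengthen the claim so that it applies to every term $\termone$ appearing along a normalizing reduction from $\fmain(\seq[n]{\dataone})$, rather than only to $\fmain(\seq[n]{\dataone})$ itself. This strengthening is essential because the inductive step recurses on an intermediate term, and the sufficient-definedness hypothesis on $\funone$ is phrased relative to reductions from $\fmain(\seq[n]{\dataone})$; restricting to reachable terms guarantees that Lemma~\ref{l:inline:helpers} can be reapplied inside the recursion.

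Then, proceeding by strong induction on $\ell$, the base case $\ell = 0$ is trivial. For the inductive step, inspect the first rewrite $\termone \rew[\atrsone] \termone'$ of the given derivation. If this step does not use $l \to r$, then it is already an $\atrstwo$-step since $\atrstwo$ contains every rule of $\atrsone$ except $l \to r$; the induction hypothesis applied to the suffix of length $\ell - 1$ yields an $\atrstwo$-derivation of length at least $1 + \roundbelow{(\ell - 1)/2} \geqslant \roundbelow{\ell/2}$. If the first step does use $l \to r$, write $\termone = C[l\sigma]$ and $\termone' = C[r\sigma]$; Lemma~\ref{l:inline:helpers} produces a term $u$ with $C[l\sigma] \nR C[u]$ (a single $\atrstwo$-step through the narrowed rule) and $C[u] \rsl[\atrsone]{\geqslant \ell - 2} \termthree$. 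The induction hypothesis applied to $C[u]$ yields an $\atrstwo$-derivation of length at least $\roundbelow{(\ell - 2)/2} = \roundbelow{\ell/2} - 1$, and prepending the initial narrowing step closes the induction.

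The real obstacle is not the induction itself but justifying that the term $C[u]$ supplied by Lemma~\ref{l:inline:helpers} is admissible for the strengthened hypothesis, i.e., that it can still be viewed as arising along a normalizing reduction from $\fmain(\seq[n]{\dataone})$, so that sufficient-definedness of $\funone$ persists when Lemma~\ref{l:inline:helpers} is reapplied further down the recursion. This is where the reordering argument inside the proof of Lemma~\ref{l:inline:helpers} matters, together with Proposition~\ref{p:gramlich:95}, which guarantees that the auxiliary $\frew[\atrsone]$-reductions used in commuting rewrite steps remain finite whenever $\atrsone$-normalization is available. Everything remaining is a routine arithmetic check with floors.
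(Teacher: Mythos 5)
Your overall strategy is the same as the paper's: strengthen the claim to all terms reachable along a normalizing reduction from $\fmain(\seq[n]{\dataone})$, use Lemma~\ref{l:inline:helpers} to replace each application of $l \to r$ by a single step with a rule from $\narrowings{\posone}{l \to r}$, and account for the at-most-factor-two loss. However, there is a genuine gap in your choice of induction measure. Strong induction on the length $\ell$ does not go through: Lemma~\ref{l:inline:helpers}\eref{l:inline:helpers}{rp} only bounds the residual derivation from $C[\termone]$ \emph{from below} (length $\geqslant \ell - 1$), and this residual derivation can in fact be \emph{longer} than the one you started from. Concretely, the passage from $r\sigma$ to $\termone$ is a $\frew[\atrsone]$-step with the inlined rule $u \to v$, and if $v$ duplicates a variable of $u$ that $\tau$ binds to a reducible argument, the normalizing reduction of $C[\termone]$ strictly exceeds that of $C[r\sigma]$ (redex preservation forbids deletion, not duplication). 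By Proposition~\ref{p:nonambiguous} that longer length is the \emph{only} length available, so your recursive call lands on an instance with a larger $\ell$ and the induction hypothesis is simply not applicable.

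The paper avoids this by inducting not on $\ell$ but on the derivation height $\dheight(\termone)$ with respect to the extended relation $\frew[\atrsone]$: since $\termone \rew[\atrsone] \termtwo$ and, by Lemma~\ref{l:inline:helpers}\eref{l:inline:helpers}{step}, $\termtwo \frew[\atrsone] \termtwo'$, the measure strictly decreases at the recursive call even though the derivation length may grow. This is also where Proposition~\ref{p:gramlich:95} actually enters: it guarantees that all $\frew[\atrsone]$-reductions of a $\rew[\atrsone]$-normalizing term are finite, so the measure is well defined. You cite that proposition, but assign it the wrong job (admissibility of $C[u]$ and the internal reordering of the lemma); its essential role is to make the correct induction measure well founded. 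With the measure repaired, your arithmetic with floors and your treatment of the non-$l\to r$ case are fine.
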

\begin{proof}
  Suppose $\termone$ is a reduct of $\fmain(\seq[n]{\dataone})$ occurring in a normalising reduction, i.e., 
  $\fmain(\seq[n]{\dataone}) \rss[\atrsone] \termone \rss[\atrsone] \termthree$, for $\termthree$ 
  a normal form of $\atrsone$. 
  In proof, we show if $\termone \rss[\atrsone] \termthree$ is a derivation of length $\ell$, 
  then there exists a normalising derivation with respect to $\atrstwo$ whose length is at least $\roundbelow{\frac{\ell}{2}}$.
  The theorem then follows by taking $\termone = \fmain(\seq[n]{\dataone})$.

  We define the \emph{derivation height} $\dheight(\termtwo)$ of a term $\termtwo$ wrt.\ the relation $\frew[\atrsone]$
  as the maximal $m$ such that $\termone \frew[\atrsone]^m \termthree$ holds.
  The proof is by induction on $\dheight(\termone)$, which is well-defined by assumption and Proposition~\ref{p:gramlich:95}.
  It suffices to consider the induction step.
  Suppose $\termone \rew[\atrsone] \termtwo \rsl[\atrsone]{\ell} \termthree$.
  % %
  We consider the case where the step $\termone \rew[\atrsone] \termtwo$
  is obtained by applying the rule $l \to r \in \atrsone$, otherwise,
  the claim follows directly from induction hypothesis.
  Then as a result of Lemma~\eref{l:inline:helpers}{sd} and \eref{l:inline:helpers}{rp} we obtain 
  an alternative derivation 
  \[
    \termone \rew[\atrstwo] \termtwo' \rsl[\atrsone]{\ell'} \termthree\tkom
  \]
  for some term $\termtwo'$ and $\ell'$ satisfying $\ell' \geqslant \ell - 1$.
  Note that $\termtwo \frew[\atrsone] \termtwo'$ as a consequence 
  of Lemma~\eref{l:inline:helpers}{step}, and thus
  $\dheight(\termtwo) > \dheight(\termtwo')$ by definition of derivation height.
  Induction hypothesis on $\termtwo'$ thus yields a derivation $\termone \rew[\atrstwo] \termtwo' \rss[\atrstwo] \termthree$
  of length at least $\roundbelow{\frac{\ell'}{2}} + 1 = \roundbelow{\frac{\ell'+2}{2}} \geqslant \roundbelow{\frac{\ell + 1}{2}}$.
\end{proof}

We can then obtain that inlining has the key property we
require on transformations. 

\begin{corollary}[Inlining Transformation]\label{c:narrowing}
  The \emph{inlining transformation}, which replaces a rule $l \to r \in \atrsone$
  by $\narrowings{\posone}{l \to r}$,
  is asymptotically complexity reflecting whenever the function considered for
  inlining is sufficiently defined and the inlining itself is redex preserving.
\end{corollary}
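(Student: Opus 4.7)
The plan is to deduce the corollary essentially directly from Theorem~\ref{t:inlining}; the step-wise simulation it provides only needs to be recast as a statement between the runtime complexity functions of $\atrsone$ and its image $\atrstwo$ under the inlining transformation.

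Concretely, I would fix $m$ and an input $\vec{\dataone}$ with $\sum_j \size{\dataone_j} \leqslant m$, and consider a normalising $\atrsone$-reduction from $\fmain(\vec{\dataone})$. By the standing non-ambiguity assumption and Proposition~\ref{p:nonambiguous}, all normalising reductions from this term share a common length $\ell$, which is bounded above by the runtime complexity of $\atrsone$ at $m$. Theorem~\ref{t:inlining} then furnishes an $\atrstwo$-derivation from the same starting term of length at least $\roundbelow{\ell/2}$. Taking the supremum over all inputs of size at most $m$ yields that the runtime complexity of $\atrstwo$ at $m$ is at least $\roundbelow{c(m)/2}$, where $c(m)$ denotes the runtime complexity of $\atrsone$ at $m$; rearranging gives $c(m) \leqslant 2 \cdot c'(m) + 1$ with $c'$ the runtime complexity of $\atrstwo$, which is exactly asymptotic complexity reflection. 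The case in which $\atrsone$ admits no normal form on some input of size $\leqslant m$ is handled by observing that applying the simulation of Theorem~\ref{t:inlining} to arbitrarily long terminating prefixes of such a reduction forces arbitrarily long $\atrstwo$-derivations from the same start term, so $c'$ is likewise unbounded (or undefined) there.

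The main difficulty---ensuring that inlining neither silently deletes potentially expensive function calls (controlled by redex preservation) nor introduces stuck reductions (controlled by sufficient definedness)---has already been absorbed into Theorem~\ref{t:inlining} via Lemma~\ref{l:inline:helpers}. At the level of the corollary itself there is essentially no remaining obstacle; the hypotheses are precisely those under which Theorem~\ref{t:inlining} applies, and the statement is a repackaging of its conclusion into the runtime-complexity vocabulary used throughout the paper.
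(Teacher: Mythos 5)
Your proposal is correct and follows the same route as the paper, which states the corollary as an immediate consequence of Theorem~\ref{t:inlining} without further argument; you merely make explicit the routine translation from the step-wise simulation (length $\ell$ mapped to length at least $\roundbelow{\ell/2}$) into the asymptotic relation between the two runtime-complexity functions, using Proposition~\ref{p:nonambiguous} to identify the runtime with the common length of normalising reductions. The handling of the non-terminating case and the observation that the real work is already contained in Lemma~\ref{l:inline:helpers} and Theorem~\ref{t:inlining} are both consistent with the paper's intent.
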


\begin{example}\label{ex:narrowing}
  Consider the ATRS $\atrsrev$ from Section~\ref{s:defunc}.
  Three applications of inlining result in the following ATRS:
\begin{lstlisting}[style=atrs,style=numbers]
C1(f,g) @ z -> f @ (g @ z) %\label{atrsrev':C1}%
C2 @ z -> z                %\label{atrsrev':C2}%
C3(x) @ z -> x::z          %\label{atrsrev':C3}%
comp1(f) @ g -> C1(f,g)
comp @ f -> comp1(f)
match[walk]([]) -> C2
match[walk](x::ys) ->      %\surpressln%
  comp @ (fix[walk] @ ys) @ C3(x) %\showln%
walk @ xs -> match[walk](xs)      %\label{atrsrev':walk}%
fix[walk] @ [] -> C2              %\label{atrsrev':fixnil}%
fix[walk] @ (x::ys) ->            %\label{atrsrev':fix}\surpressln%
  C1(fix[walk] @ ys,C3(x))        %\showln%
rev @ l -> fix[walk] @ l @ []     %\label{atrsrev':rev}%
main(l) -> fix[walk] @ l @ []     %\label{atrsrev':main}%
\end{lstlisting}
The involved inlining rules are all non-erasing,
i.e., all inlinings are \emph{redex preserving}. As a consequence of Corollary~\ref{c:narrowing},
a bound on the runtime complexity of the above system can be relayed,
within a constant multiplier, back to the ATRS $\atrsrev$.
\end{example}
Note that the modified system from Example~\ref{ex:narrowing} gives
further possibilities for inlining. For instance, we could narrow
further down the call to \atrs|fix[walk]| in
rules~\eqref{atrsrev':fix},~\eqref{atrsrev':rev}
and~\eqref{atrsrev':main}, performing case analysis on the variable
\atrs|ys| and \atrs|l|, respectively.  Proceeding this way would
step-by-step unfold the definition of \atrs|fix[walk]|, \emph{ad infinitum}.
We could have also further reduced the rules defining
\atrs|match[walk]| and \atrs|walk|. However, it is not difficult to see
that these rules will never be unfolded in a call to \atrs|main|, they
have been sufficiently inlined and can be removed.  Elimination of such
\emph{unused} rules will be discussed next.
%%%%%%%%%%%%%%%%%%%%%%%%%%%
\subsection{Elimination of Dead Code}\label{s:simpl:de}
%%%%%%%%%%%%%%%%%%%%%%%%%%%
The notion of \emph{usable rules} is well-established in the rewriting
community. Although its precise definition depends on the context used
(e.g.\ termination~\cite{AG:TCS:00} and
complexity analysis~\cite{HM:IJCAR:08}), the notion commonly refers to
a syntactic method for detecting that
certain rules can never be applied in derivations starting from a
given set of terms. From a programming point of view, such rules
correspond to \emph{dead code}, which can be safely eliminated.

Dead code arises frequently in automatic program transformations, and
its elimination turns out to facilitate our transformation-based
approach to complexity analysis. The following proposition formalises
\emph{dead code elimination} abstractly, for now.  Call a rule $l \to
r \in \atrsone$ \emph{usable} if it can be applied in a derivation
\[
\atrs|main|(\seq[k]{\dataone}) \rew[\atrsone] \cdots \rew[\atrsone] \termone_1 \rew[\{l \to r\}]  \termone_2 \tkom
\]
where $\dataone_i \in \DATA$. The rule $l \to r$ is \emph{dead code}
if it is not usable. The following proposition follows by definition.
\begin{proposition}[Dead Code Elimination]\label{p:usablerules}
  \emph{Dead code elimination}, which maps an ATRS $\atrsone$ to
  a subset of $\atrsone$ by removing dead code only,
  is complexity reflecting and preserving.
\end{proposition}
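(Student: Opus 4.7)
The plan is to observe that the claim follows almost immediately from the definition of \emph{usable}, and to simply unfold this definition. Write $\atrstwo \defsym \atrsone \setminus \dead$ where $\dead$ is the set of rules removed by the transformation; by hypothesis every rule in $\dead$ is dead code, i.e.\ non-usable with respect to $\atrsone$. I want to show that the set of $\atrsone$-derivations starting from an initial term $\fmain(\seq[k]{\dataone})$ with $\dataone_i \in \DATA$ coincides with the set of $\atrstwo$-derivations starting from the same term. Since the runtime complexity function is defined purely in terms of the length of maximal such derivations, complexity reflection and preservation then follow at once.

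For the first inclusion, consider any $\atrsone$-derivation $\fmain(\seq[k]{\dataone}) \rew[\atrsone] \termone_1 \rew[\atrsone] \termone_2 \rew[\atrsone] \cdots$. I claim that no step in this sequence uses a rule from $\dead$: indeed, if some step $\termone_{i-1} \rew[\{l \to r\}] \termone_i$ used a rule $l \to r \in \dead$, then by definition $l \to r$ would be usable in $\atrsone$, contradicting membership in $\dead$. Hence every step is by a rule in $\atrstwo$, so the same sequence is an $\atrstwo$-derivation. For the converse inclusion, $\atrstwo \subseteq \atrsone$ gives ${\rew[\atrstwo]} \subseteq {\rew[\atrsone]}$ trivially, so every $\atrstwo$-derivation from the initial term is also an $\atrsone$-derivation.

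Consequently, the maximal derivation lengths starting from $\fmain(\seq[k]{\dataone})$ agree exactly for $\atrsone$ and $\atrstwo$, and the runtime complexity functions coincide. In particular, the transformation is both complexity reflecting and complexity preserving (and not merely asymptotically so).

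The only subtlety worth flagging is that the argument relies on the fact that the set of starting terms used in the definition of usable is the very same set used to define runtime complexity, namely ground constructor instances $\fmain(\seq[k]{\dataone})$ with $\dataone_i \in \DATA$; this match-up is what makes the proof immediate and avoids any need to reason about intermediate terms or about whether removed rules could interfere via non-root positions. There is no genuine obstacle to the argument.
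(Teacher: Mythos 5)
Your argument is correct and is essentially the paper's own proof: the paper simply asserts that the proposition ``follows by definition,'' and what you have written is exactly the unfolding of that definition (every rule applied in a derivation from $\fmain(\seq[k]{\dataone})$ is by definition usable, hence not removed, and the reverse inclusion is immediate from $\atrstwo \subseteq \atrsone$). Your closing observation that the set of start terms in the definition of usability coincides with the one defining runtime complexity is precisely the point that makes the statement immediate.
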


It is not computable in general which rules are dead code.
One simple way to eliminate dead code is to collect all
the function symbols underlying the definition of \atrs|main|,
and remove the defining rules of symbols not in this collection,
compare e.g.~\cite{HM:IJCAR:08}.
This approach works well for standard TRSs, but is usually
inappropriate for ATRSs where most rules define a single function symbol,
the application symbol.
A conceptually similar, but unification based, approach that works reasonably well
for ATRSs is given in~\cite{GTSK:FROCOS:05}.
However, the accurate identification of dead code, in particular in the
presence of higher-order functions,
requires more than just a simple syntactic analysis.
We show in Section~\ref{s:impl:cfa} a particular form of control flow analysis which
leverages dead code elimination. The following example indicates that such an analysis
is needed.

\begin{example}\label{ex:usablerules}
  We revisit the simplified ATRS from Example~\ref{ex:narrowing}.
  The presence of the composition rule~\eqref{atrsrev':C1},
  itself a usable rule, makes it harder to infer which of the application rules are dead code.
  Indeed, the unification-based method found in~\cite{GTSK:FROCOS:05}
  classifies all rules as usable.
  As we hinted in Section~\ref{s:defunc}, the variables \atrs|f| and \atrs|g| are instantiated
  only by a very limited number of closures in a call of \atrs|main(l)|. In particular, none of the symbols
  \atrs|rev|, \atrs|walk|, \atrs|comp| and \atrs|comp1| are passed to \atrs|C1|.
  With this knowledge, it is not difficult to see that their defining rules, 
  together with the rules defining \atrs|match[walk]|, can be eliminated
  by Proposition~\ref{p:usablerules}.
  Overall, the complexity of the ATRS depicted in
  Example~\ref{ex:narrowing} is thus reflected by the ATRS consisting of the following six rules.
\begin{lstlisting}[style=atrs,style=numbers]
C1(f,g) @ x -> f @ (g @ x)     %\label{atrsrev'':C1}%
C2 @ z -> z
C3(x) @ z -> x::z
fix[walk] @ [] -> C2
fix[walk] @ (x::ys) -> %\surpressln%
  C1(fix[walk] @ ys$\!$,$\,$C3(x)) %\showln%
main(l) -> fix[walk] @ l @ []
\end{lstlisting}
  %consisting
  %of the rules~\eqref{atrsrev':C2}~\eqref{atrsrev':C3},
  %\eqref{atrsrev':C1},~\eqref{atrsrev':fixnil},~\eqref{atrsrev':fix}
  %and~\eqref{atrsrev':main} only,
\end{example}
%%%%%%%%%%%%%%%%%%%%%%%%%%
\subsection{Instantiation}
%%%%%%%%%%%%%%%%%%%%%%%%%%
Inlining and dead code elimination can indeed help in
simplifying defunctionalised programs. There is however a feature of
ATRS they cannot eliminate in general, namely rules whose right-hand-sides have
\emph{head variables}, i.e., variables that occur to the left of an 
application symbol and thus denote a function. 
The presence of such rules prevents FOPs to succeed in all but trivial
cases.
The ATRS from Example~\ref{ex:usablerules}, for instance, still
contains one such rule, namely rule~\eqref{atrsrev'':C1},
with head variables \atrs|f| and \atrs|g|.
The main reason FOPs perform poorly on ATRS
containing such rules is that they lack any form of control
flow analysis, and they are thus unable to realise that function
symbols simulating higher-order combinators are passed arguments of a
very specific shape, and are thus often harmless. This is the case, as
an example, for the function symbol \atrs|C1|.

The way out consists in specialising the ATRS rules. This has the
potential of highlighting the \emph{absence} of certain dangerous
patterns, but of course must be done with great care, without hiding
complexity under the carpet of non-exhaustive instantiation. All this
can be formalised as follows.

\newcommand{\nt}[1]{\mathsf{#1}}
\newcommand{\ntany}{\star}
\newcommand{\Rl}[1]{R_#1}
\newcommand{\V}[2]{#1_#2}

Call a rule $l' \to r'$ an \emph{instance} of a rule $l \to r$, if
there is a substitution $\sigma$ with $l' = l\sigma$ and $r' = r\sigma$.
We say that an ATRS $\atrstwo$ is an
\emph{instantiation} of $\atrsone$ iff all rules in $\atrstwo$ are
instances of rules from $\atrsone$. This instantiation is
\emph{sufficiently exhaustive} if for every derivation
\[
\atrs|main|(\seq[k]{\dataone}) \rew[\atrsone] \termone_1 \rew[\atrsone] \termone_2 \rew[\atrsone] \cdots \tkom
\]
where $\dataone_i \in \DATA$, there exists a corresponding derivation
\[
\atrs|main|(\seq[k]{\dataone}) \rew[\atrstwo] \termone_1 \rew[\atrstwo] \termone_2 \rew[\atrstwo] \cdots  \tpkt
\]
The following proposition is immediate from the definition.
\begin{theorem}[Instantiation Transformation]\label{t:instantiate}
  Every \emph{instantiation transformation}, mapping any ATRS into a
  sufficiently exhaustive instantiation of it, is complexity reflecting
  and preserving.
\end{theorem}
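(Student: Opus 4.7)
The proposition splits into two claims, and the plan is to dispatch each by a short argument straight from the definitions of sufficient exhaustiveness and of instantiation.

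For \emph{complexity reflection}, the plan is to apply sufficient exhaustiveness verbatim. Fix inputs $\dataone_1,\dots,\dataone_k \in \DATA$ and take a longest $\atrsone$-derivation $\fmain(\seq[k]{\dataone}) \rew[\atrsone] \termone_1 \rew[\atrsone] \cdots \rew[\atrsone] \termone_n$; by hypothesis there is a step-by-step matching $\atrstwo$-derivation visiting the very same terms, hence of length exactly $n$. Since $n$ is the runtime of $\atrsone$ on $\dataone_1,\dots,\dataone_k$, this produces an $\atrstwo$-derivation from $\fmain(\seq[k]{\dataone})$ of at least that length, and maximising over inputs of bounded size yields reflection.

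For \emph{complexity preservation}, the plan is to establish the relation inclusion ${\rew[\atrstwo]} \subseteq {\rew[\atrsone]}$; once this is known, any $\atrstwo$-derivation from $\fmain(\seq[k]{\dataone})$ is already an $\atrsone$-derivation of equal length, so the runtime of $\atrstwo$ is pointwise dominated by that of $\atrsone$. For a single step, pick $C[l'\tau] \rew[\atrstwo] C[r'\tau]$ using $l' \to r' \in \atrstwo$ with $\tau$ a value substitution on the variables of $l'$. By the definition of instantiation there are $l \to r \in \atrsone$ and a substitution $\sigma$ with $l' = l\sigma$ and $r' = r\sigma$, so the step equals $C[l(\sigma\tau)] \to C[r(\sigma\tau)]$, which is an $\atrsone$-step via rule $l \to r$ under matcher $\sigma\tau$.

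The \emph{main subtlety}, and the part I expect to require care, is checking that $\sigma\tau$ really is a call-by-value substitution for $l$, i.e.\ that $\sigma(x)\tau \in \Val[\CS_\atrsone]$ for every variable $x$ occurring in $l$. Since $\tau$ is already a value substitution on the variables of $l'$, this reduces to each $\sigma(x)$ being a \emph{value pattern}, i.e., a term built only from constructors and variables. This property is implicit in the notion of instantiation transformation used throughout the paper---it is guaranteed by the collecting-semantics and tree-automata based procedures developed in Section~\ref{s:auto}, which only ever refine variables by constructor patterns---so the inclusion ${\rew[\atrstwo]} \subseteq {\rew[\atrsone]}$ holds in all intended applications and preservation follows.
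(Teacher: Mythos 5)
Your proposal is correct and takes the same route the paper does: the paper offers no proof beyond the remark that the statement is ``immediate from the definition,'' and your two directions (reflection directly from sufficient exhaustiveness, preservation from the inclusion ${\rew[\atrstwo]} \subseteq {\rew[\atrsone]}$) are exactly the argument that remark is gesturing at. The subtlety you flag for preservation---that the instantiating substitutions must map variables to constructor patterns so that the composed matcher remains a call-by-value substitution for the original rule---is a genuine point the paper silently elides, and your resolution (it holds for all instantiations actually produced in Section~\ref{s:auto}) is the right one.
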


\begin{example}[Continued from Example~\ref{ex:usablerules}]\label{ex:instantiate}
We instantiate the rule \atrs|C1(f,g) @ x -> f @ (g @ x)|
by the two rules
\begin{lstlisting}[style=atrs,style=numbers]
C1(C2,C3(x)) @ z -> C3(x) @ (C2 @ z)
C1(C1(f,g),C3(x)) @ z -> %\surpressln%
  C1(f,g) @ (C2 @ z) %\showln%
\end{lstlisting}
leaving all other rules from the TRS depicted in Example~\ref{ex:usablerules} intact.
As we reasoned already before, the instantiation is sufficiently exhaustive:
in a reduction of \atrs|main(l)| for a list \atrs|l|,
arguments to \atrs|C1| are always of the form as indicated in the two rules.
Note that the right-hand side of both rules can be reduced by inlining the calls
in the right argument.
Overall, we conclude that the runtime complexity of our running example is reflected
in the ATRS consisting of the following six rules:
\begin{lstlisting}[style=atrs,style=numbers]
C2 @ z -> z
C1(C2,C3(x)) @ z -> x::z
C1(C1(f,g),C3(x)) @ z -> %\surpressln%
  C1(f,g) @ (x::z) %\showln%
fix[walk] @ [] -> C2 %\label{atrs'':fix:bc}%
fix[walk] @ (x::ys) -> %\surpressln%
  C1(fix[walk] @ ys,C3(x)) %\showln\label{atrs'':fix:rec}%
main(l) -> fix[walk] @ l @ [] %\label{atrs'':main}%
\end{lstlisting}
\end{example}
%%%%%%%%%%%%%%%%%%%%%%%%
\subsection{Uncurrying}\label{s:simpl:uncurrying}
%%%%%%%%%%%%%%%%%%%%%%%%
The ATRS from Example~\ref{ex:instantiate} is now sufficiently
instantiated: for all occurrences of the \atrs|@| symbol, we know
\emph{which function} we are applying, even if we do not necessarily
know \emph{to what} we are applying it. The ATRS is not yet ready
to be processed by FOPs, simply because the application function symbol
is anyway still there, and cannot be dealt with.

At this stage, however, the ATRS can indeed be brought to a form
suitable for analysis by FOPs through 
\emph{uncurrying}, see e.g.\ the account of~\citet{HMZ:JAR:13}.
Uncurrying an ATRS $\atrsone$ involves the definition of a fresh function symbol
$\funone^n$ for each $n$-ary application
\[
\funone(\seq[m]{\termone}) \api \termone_{m+1} \api \cdots \api \termone_{m+n}
\tkom
\]
encountered in $\atrsone$. This way, applications can be completely
eliminated.  Although in~\cite{HMZ:JAR:13} only ATRSs defining
function symbols of null arity are considered, the extension to our
setting poses no problem. We quickly recap the central definitions.

Define the \emph{applicative arity} $\aarity{\atrsone}{\funone}$ of a symbol
$\funone$ in $\atrsone$
as the maximal $n \in \N$ such that a term 
\[
  \funone(\seq[m]{\termone}) \api \termone_{m+1} \api \cdots \api \termone_{m+n} \tkom
\] 
occurs in $\atrsone$.
% We extend this notion to any term $\termone = \funone(\seq[m]{\termone})
% \api \termone_{m+1} \api \cdots \api \termone_{m+n}$ for $n \leq
% \aarity{\funone}$ by setting $\aarity{\termone} \defsym
% \aarity{\funone} - n$.
%
\begin{definition}
  The \emph{uncurrying} $\uncurry{\termone}$ of a
  term $\termone = \funone(\seq[m]{\termone}) \api \termone_{m+1} \api \cdots \api \termone_{m+n}$,
  with $n \leqslant \aarity{\atrsone}{\funone}$ is defined as
  \[
  \uncurry{\termone} \defsym \funone^n(\uncurry{\termone_1},\dots,\uncurry{\termone_m},\uncurry{\termone_{m+1}},\dots,\uncurry{\termone_{m+n}})
  \tkom
  \]
  where $\funone^0 = \funone$ and $\funone^n$ ($1 \leq n \leq \aarity{\atrsone}{\funone}$) are fresh function symbols.
  Uncurrying is homomorphically extended to ATRSs.
\end{definition}

Note that
$\uncurry{\atrsone}$ is well-defined whenever $\atrsone$ is \emph{head variable free}, i.e.,
does not contain a term of the form \atrs|x @  $t$| for variable \atrs|x|.
We intend to use the TRS $\uncurry{\atrsone}$ to simulate reductions
of the ATRS $\atrsone$.
In the presence of rules of functional type however,
such a simulation fails.
To overcome the issue, we \emph{$\eta$-saturate} $\atrsone$.
\begin{definition}
  We call a TRS $\atrsone$ \emph{$\eta$-saturated} if whenever 
  \[
    \funone(\seq[m]{l}) \api l_{m+1} \api \cdots \api l_{m+n} \to r \in \atrsone \text{ with $n < \aarity{\atrsone}{\funone}$,}
  \]
  then it contains also a rule 
  \[
    \funone(\seq[m]{l}) \api l_{m+1} \api \cdots \api l_{m+n} \api {\atrs|z|} \to r \api {\atrs|z|} \tkom
  \]
  where \atrs|z| is a fresh variable.
  The \emph{$\eta$-saturation} $\etasaturate{\atrsone}$ of $\atrsone$ is defined as the least extension of $\atrsone$
  that is $\eta$-saturated.
  % that contains a rule $l \api \atrs|z| \to r \api \atrs|z|$ whenever $\aarity{l} > 0$ for a rule $l \to r \in \etasaturate{\atrsone}$,
  % where \atrs|z| is a fresh variable.
\end{definition}
\begin{remark}
  The $\eta$-saturation $\etasaturate{\atrsone}$ of an ATRS $\atrsone$ is 
  not necessarily finite. 
  A simple example is the one-rule ATRS $\funone \to \funone \api \fun{a}$ where 
  both $\funone$ and $\fun{a}$ are function symbols. 
  Provided that the ATRS $\atrsone$ is endowed with simple types, and indeed 
  the simple typing of our initial program is preserved throughout our complete transformation 
  pipeline, the $\eta$-saturation of $\atrsone$ becomes finite. 
\end{remark}

\begin{example}[Continued from Example~\ref{ex:instantiate}]\label{ex:saturated}
  The ATRS from Example~\ref{ex:instantiate} is not $\eta$-saturated: 
  \atrs|fix[walk]| is applied to two arguments in rule~\eqref{atrs'':main}, 
  but its defining rules, rule~\eqref{atrs'':fix:bc} and~\eqref{atrs'':fix:rec}, 
  take a single argument only. 
  The $\eta$-saturation
  thus contains in addition the following two rules:
\begin{lstlisting}[style=atrs, style=numbers]
fix[walk] @ [] @ z -> C2 @ z
fix[walk] @ (x::ys) @ z -> %\surpressln%
  C1(fix[walk] @ ys,C3(x)) @ z . %\showln%
\end{lstlisting}
One can then check that the resulting system is $\eta$-saturated.
\end{example}

\begin{lemma}\label{l:uncurry:aux}
  Let $\etasaturate{\atrsone}$ be the $\eta$-saturation of $\atrsone$.
  \begin{enumerateenv}
    \item\label{l:uncurry:aux:eta} The rewrite relation $\rew[\etasaturate{\atrsone}]$ coincides with $\rew[\atrsone]$.
    \item\label{l:uncurry:aux:step} 
      Suppose $\etasaturate{\atrsone}$ is head variable free.
      If $\termtwo \rew[\etasaturate{\atrsone}] \termone$ then $\uncurry{\termtwo} \rew[\uncurry{\etasaturate{\atrsone}}] \uncurry{\termone}$.
  \end{enumerateenv}
\end{lemma}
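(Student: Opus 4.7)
I would prove the two claims separately.

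For the first claim, the inclusion $\rew[\atrsone] \subseteq \rew[\etasaturate{\atrsone}]$ is immediate since $\atrsone \subseteq \etasaturate{\atrsone}$. For the converse, every rule added by $\eta$-saturation has the form $l \api z_1 \api \cdots \api z_k \to r \api z_1 \api \cdots \api z_k$, where $l \to r \in \atrsone$ and $z_1,\dots,z_k$ are fresh with respect to $l,r$. Any call-by-value step using such a saturated rule with substitution $\sigma$ is realised equivalently by applying $l \to r$ to the subterm $l\sigma$ inside the context $\hole \api z_1\sigma \api \cdots \api z_k\sigma$; the CBV side condition transfers because the $z_i$ are already required to be assigned values by $\sigma$ in the saturated step. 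A routine induction on $k$ closes the argument.

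For the second claim, the plan is to induct on the position of the reduced redex. The central auxiliary fact, which I would establish first, is a substitution lemma: if $\termone$ is head-variable free and $\sigma(x) \in \Val[\CS_\atrsone]$ for every $x$ occurring in $\termone$, then
\[
\uncurry{\termone\sigma} = \uncurry{\termone}\sigma^\flat, \qquad \text{where } \sigma^\flat(x) \defsym \uncurry{\sigma(x)} .
\]
This identity follows by structural induction on $\termone$: the head-variable-free hypothesis guarantees that substitution preserves the applicative arity structure on which the definition of $\uncurry{\cdot}$ pivots, and CBV ensures the hypothesis on $\sigma$ since values in $\Val[\CS_\atrsone]$ contain no application symbols.

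With this lemma in hand, the base case---a root step $l\sigma \rew[\etasaturate{\atrsone}] r\sigma$ for a rule $l \to r \in \etasaturate{\atrsone}$---yields
\[
\uncurry{l\sigma} = \uncurry{l}\sigma^\flat \rew[\uncurry{\etasaturate{\atrsone}}] \uncurry{r}\sigma^\flat = \uncurry{r\sigma},
\]
using the rule $\uncurry{l} \to \uncurry{r} \in \uncurry{\etasaturate{\atrsone}}$. The inductive step is likewise straightforward: if $\termtwo = \funone(\seq[m]{\termtwo}) \api \termtwo_{m+1} \api \cdots \api \termtwo_{m+n}$ and the reduction happens inside some $\termtwo_i$, the definition of $\uncurry{\cdot}$ places the corresponding reduct in the $i$-th argument of $\funone^n$, and the induction hypothesis discharges the obligation.

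The main obstacle will be formulating and proving the substitution lemma cleanly. The subtlety is that both sides of the identity depend on the applicative arity $\aarity{\atrsone}{\funone}$, and one must argue that this arity bound is never exceeded along the recursive descent through $\termone$. Head-variable-freeness of $\etasaturate{\atrsone}$ (not merely of $\atrsone$) is essential here: without $\eta$-saturation, an instance of a right-hand side could appear in a context requiring more trailing applications than the rule itself provides, and the arity index on $\funone^n$ in the uncurried target would no longer match on both sides of the reduction.
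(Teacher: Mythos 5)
Your overall route matches the paper's: part~(1) via the trivial inclusion plus an induction over how a saturated rule is derived, and part~(2) via induction on the term together with a substitution lemma for $\uncurry{\cdot}$. This is exactly the pattern of the proof of Sternagel and Thiemann that the paper defers to, including the observation that head-variable-freeness renders the auxiliary uncurrying rules $\funone^i(\seq[n]{\varone}) \api \vartwo \to \funone^{i+1}(\seq[n]{\varone},\vartwo)$ unnecessary.

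There is, however, one case your induction for part~(2) does not cover as stated. Your split is ``root step'' versus ``step inside some argument $\termtwo_i$ of the maximal application spine,'' but under call-by-value a redex may also be a \emph{proper prefix} of a spine: $\termtwo = C[(l\sigma) \api s_1 \api \cdots \api s_k]$ with $k \geq 1$ (in the running example, \atrs|fix[walk] @ l| inside \atrs|fix[walk] @ l @ []|). This redex is neither the whole spine nor contained in any spine argument, and $\uncurry{\termtwo}$ contains no subterm of the form $\uncurry{l\sigma}$ --- the head symbol of $l$ is folded into a single $\funone^{n}$ together with the trailing $s_j$. The simulation here must switch to the $\eta$-saturated rule $l \api z_1 \api \cdots \api z_k \to r \api z_1 \api \cdots \api z_k$, which by part~(1) produces the same one-step reduct and whose uncurried version does fire at the spine position. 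Your closing paragraph correctly identifies this phenomenon as the reason $\eta$-saturation is needed, but it belongs in the case analysis of the induction, not in the substitution lemma: the substitution lemma itself is unproblematic because call-by-value substitutes values, which contain no occurrence of $\api$ (the application symbol being defined), so substitution cannot lengthen any application spine and $\sigma^\flat$ acts essentially as $\sigma$. With that third case added, your argument is complete and coincides with the paper's intended proof.
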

\begin{proof}
  For Property~\ref{l:uncurry:aux:eta}, the inclusion
  ${\rew[\atrsone]} \subseteq {\rew[\etasaturate{\atrsone}]}$ follows
  trivially from the inclusion $\atrsone \subseteq
  \etasaturate{\atrsone}$. The inverse inclusion ${\rew[\atrsone]}
  \supseteq {\rew[\etasaturate{\atrsone}]}$ can be proven by a
  standard induction on the derivation of $l \to r \in
  \etasaturate{\atrsone}$.  

  Property~\ref{l:uncurry:aux:step} can be
  proven by induction on $\termone$.
  The proof follows the pattern of the proof of~\citet{ST:FROCOS:11}. 
  Notice that in~\cite[Theorem~10]{ST:FROCOS:11},
  the rewrite system $\uncurry{\etasaturate{\atrsone}}$ is enriched with uncurrying rules of the form
  $\funone^i(\seq[n]{\varone}) \api \vartwo \to \funone^{i+1}(\seq[n]{\varone},\vartwo)$. Such an extension is
  not necessary in the absence of head variables. In our setting, the application symbol is completely eliminated
  by uncurrying, and thus the above rules are dead code.
\end{proof}

As a consequence, we immediately obtain the following theorem.
\begin{theorem}[Uncurrying Transformation]\label{t:uncurrying}
  Suppose that $\etasaturate{\atrsone}$ is head variable free.
  The \emph{uncurrying} transformation, which maps an ATRS $\atrsone$
  to the system $\uncurry{\etasaturate{\atrsone}}$, is complexity reflecting.
\end{theorem}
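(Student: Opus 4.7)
The plan is to chain together the two parts of Lemma~\ref{l:uncurry:aux} and then lift the step-by-step simulation to full reductions, from which the complexity-reflection property is immediate.

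Concretely, I would start with a normalising reduction
\[
  \fmain(\seq[n]{\dataone}) \rsl[\atrsone]{\ell} \termthree,
\]
with $\dataone_j \in \DATA$, and argue that its length $\ell$ is bounded by the length of a corresponding $\uncurry{\etasaturate{\atrsone}}$-reduction starting from $\uncurry{\fmain(\seq[n]{\dataone})}$. First I would invoke Lemma~\ref{l:uncurry:aux}\eref{l:uncurry:aux}{eta} to re-read the reduction as $\fmain(\seq[n]{\dataone}) \rsl[\etasaturate{\atrsone}]{\ell} \termthree$, which is legitimate because $\eta$-saturation does not change the one-step rewrite relation. Then a straightforward induction on $\ell$, using Lemma~\ref{l:uncurry:aux}\eref{l:uncurry:aux}{step} as the step case, produces a matching reduction of the same length $\ell$ in $\uncurry{\etasaturate{\atrsone}}$, starting from $\uncurry{\fmain(\seq[n]{\dataone})}$.

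The small but essential observation is that the starting term is preserved by uncurrying: since each $\dataone_j$ is built only from constructors and contains no occurrence of $\api$, the applicative arity of every symbol occurring in $\fmain(\seq[n]{\dataone})$ is zero in that context, hence $\uncurry{\fmain(\seq[n]{\dataone})} = \fmain^{0}(\seq[n]{\dataone}) = \fmain(\seq[n]{\dataone})$, and similarly constructor arguments are left unchanged. Consequently, the runtime of $\atrsone$ on any input $\seq[n]{\dataone}$ is dominated by the runtime of $\uncurry{\etasaturate{\atrsone}}$ on the same input, which by definition gives that uncurrying is complexity reflecting (indeed, step-wise, so with the identity as the asymptotic multiplier).

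The main obstacle in this argument is really hidden inside Lemma~\ref{l:uncurry:aux}\eref{l:uncurry:aux}{step}, which is where the assumption of head-variable-freeness of $\etasaturate{\atrsone}$ and the use of $\eta$-saturation interact: without them, a rewrite at a partially applied redex in $\termtwo$ could not be mirrored on $\uncurry{\termtwo}$, because the uncurried left-hand side would expect additional arguments. Once that lemma is in hand, the proof of the theorem itself is just the iteration sketched above, with no further rewriting-theoretic subtleties.
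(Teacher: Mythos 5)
Your proof is correct and follows essentially the same route as the paper, which derives the theorem directly from Lemma~\ref{l:uncurry:aux}: part~(1) to pass to $\etasaturate{\atrsone}$, part~(2) iterated to simulate each step under $\uncurry{\cdot}$, plus the observation that the initial term $\fmain(\seq[n]{\dataone})$ is left unchanged by uncurrying. The paper states this only as ``immediate consequence''; your write-up simply makes the induction and the preservation of the start term explicit.
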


\begin{example}[Continued from Example~\ref{ex:saturated}]\label{ex:uncurry}
  Uncurrying the $\eta$-saturated ATRS, consisting of the six rules from Example~\ref{ex:instantiate}
  and the two rules from Example~\ref{ex:saturated}, results in the 
  following set of rules:
\begin{lstlisting}[style=atrs,style=numbers]
C1u(C2,C3(x),z) -> x::z %\label{uc:C1:1}%
C1u(C1(f,g),C3(x),z) -> C1u(f,g,x::z) %\label{uc:C2:2}%
C2u(z) -> z
fix[walk]u([]) -> C2   %\label{uc:walk:1}%
fix[walk]u(x:ys) -> C1(fix[walk]u(ys),C3(x))%\label{uc:walk:2}%
fix[walk]uu([],z) -> C2u(z)
fix[walk]uu(x::ys,z) -> %\surpressln%
  C1[1](fix[walk]u(ys),C3(x),z) %\showln%
main(l) -> fix[walk]uu(l,[])
\end{lstlisting}
Inlining the calls to \atrs|fix[walk]uu| and \atrs|C2u(z)|, followed by dead code elimination,
results finally in the TRS $\trsrev$ from Section~\ref{s:defunc}.
\end{example}

%%% Local Variables:
%%% mode: latex
%%% TeX-master: "paper"
%%% End:

\newcommand\RL[1]{\ensuremath{R_{#1}\,}}
\newcommand\RR[1]{\RL{\ref{atrsrev':#1}}}
\newcommand\VR[1]{\ensuremath{{}_{\ref{atrsrev':#1}}\,}}
%%%%%%%%%%%%%%%%%%%%%%%
\section{Automation}\label{s:auto}
%%%%%%%%%%%%%%%%%%%%%%%
In the last section we have laid the formal foundation of our
program transformation methodology, and ultimately of our
tool \hoca. Up to now, however, program transformations (except for
uncurrying) are too abstract to be turned into actual algorithms. In
dead code elimination, for instance, the underlying computation problem
(namely the one of precisely isolating usable rules) is 
undecidable. In inlining, one has a decidable transformation, which
however results in a blowup of program sizes, if blindly applied.

This section is devoted to describing some \emph{concrete} design choices
we made when automating our program transformations. Another, related, issue
we will talk about is the effective \emph{combination} of these
techniques, the \emph{transformation pipeline}.

%%%%%%%%%%%%%%%%%%%%%%%%%%%%%%%%%
\subsection{Automating Inlining}\label{s:impl:inline}
%%%%%%%%%%%%%%%%%%%%%%%%%%%%%%%%%
The main complication that arises while automating our inlining
transformation is to decide \emph{where} the transformation should be
applied. Here, there are two major points to consider: 
first, we want to ensure that the overall transformation is not only
complexity \emph{reflecting}, but also complexity \emph{preserving},
thus not defeating its purpose.  To address this issue, we
employ inlining conservatively, ensuring that inlining does not
duplicate function calls.  Secondly, as we already hinted after
Example~\ref{ex:narrowing}, exhaustive inlining is usually not desirable
and may even lead to non-termination in the transformation pipeline
described below. Instead, we want to ensure that inlining
\emph{simplifies} the problem with respect to some sensible
\emph{metric}, and plays well in conjunction with the other
transformation techniques.

  Instead of working with a closed
  inlining strategy, our implementation \strat|inline($P$)| is parameterised by
  a predicate $P$ which, intuitively, tells when inlining 
  a call at position $\posone$ in a rule $l \to r$ is \emph{sensible} at the current
  stage in our transformation pipeline. 
  The algorithm \strat|inline($P$)| replaces every rule $l \to r$ by
  $\narrowings{\posone}{l \to r}$ for some position $\posone$ such
  that $P(\posone,l \to r)$ holds. The following four predicates turned out to be useful
  in our transformation pipeline. 
  The first two are designed by taking into account the specific shape 
  of ATRSs obtained by defunctionalisation, 
  the last two are generic. 
\begin{varitemize}
\item \strat|match|:
  This predicate holds if the right-hand side $r$ 
  is labeled by a symbol of the form $\ccase{cs}$ at position $\posone$. 
  That is, the predicate enables inlining of calls resulting from the translation of a match-expression, 
  thereby eliminating one indirection due to the encoding of pattern matching during defunctionalization.
\item \strat|lambda-rewrite|:
  This predicate holds if the subterm $\subtermAt{r}{\posone}$ is of the 
  form $\clam{\varone}{\expone}(\vec{\termone}) \api \termtwo$.
  Note that by definition it is enforced that inlining corresponds to a plain rewrite, 
  head variables are not instantiated. 
  For instance, \strat|inline(lambda-rewrite)| is inapplicable
  on the rule ${\atrs|C2(f,g) @ z -> f @ (g @ z)|}$.
  This way, we avoid that variables \atrs|f| and \atrs|g| are improperly 
  instantiated.
\item \strat|constructor|: The predicate holds if the right-hand sides
  of all rules used to inline $\subtermAt{r}{\posone}$ are constructor terms, i.e.,
  do not give rise to further function calls. Overall, the number of function calls
  therefore decreases. As a side effect, more patterns become obvious in rules, which facilitates
  further inlining. 
\item \strat|decreasing|: The predicate holds if any of the following two conditions is satisfied:
  (i) \emph{proper inlining}: 
    the subterm $\subtermAt{r}{p}$ constitutes the only call-site to the inlined function $\funone$. 
    This way, all rules defining $\funone$ in $\atrsone$ will turn to dead code after inlining. 
  (ii) \emph{size decreasing}: each right-hand side in $\narrowings{\posone}{l \to r}$ is strictly smaller in size than the right-hand side $r$. 

  Here, the aim is to facilitate FOPs on the generated output. 
  In the first case, the number of rules decreases, which usually implies that in the 
  analysis, a FOP generates less constraints which have to be solved. 
  In the second case, the number of constraints might increase, 
  but the individual constraints are usually easier to solve, due to the decrease 
  in sizes of right hand sides. 
\end{varitemize}
We emphasise that all inlinings performed on our running example $\atrsrev$ are 
captured by the instances of inlining just defined. 
 
%%%%%%%%%%%%%%%%%%%%%%%%%%%%%%%%%%%%%%%%%%%%%%%%%%%%%%%%%%%%%%%%%%%%%%%%%%%%%%%%%%%
\subsection{Automating Instantiation and Dead Code Elimination via Control Flow Analysis}\label{s:impl:cfa}
%%%%%%%%%%%%%%%%%%%%%%%%%%%%%%%%%%%%%%%%%%%%%%%%%%%%%%%%%%%%%%%%%%%%%%%%%%%%%%%%%%%
% As we have already mentioned, instantiation and dead code elimination
% are the most difficult program transformations, computationally
% speaking.  Indeed, implementing them requires some form of
% approximation: isolating all and only the usable rules is undecidable,
% while instantiation is a potentially infinitary process.

One way to effectively eliminate dead code and apply instantiation,
as in Examples~\ref{ex:usablerules} and~\ref{ex:instantiate}, consists in inferring the shape of closures
passed during reductions. This way, we can on the one hand specialise
rewrite rules being sure that the obtained instantiation is sufficiently exhaustive,
and on the other hand discover that certain rules are simply useless,
and can thus be eliminated.

To this end, we rely on an approximation of the collecting semantics.
In static analysis, the collecting semantics of a program maps a given
program point to the collection of states attainable when control
reaches that point during execution.  In the context of rewrite
systems, it is natural to define the rewrite rules as program points,
and substitutions as states.  Throughout the following, we fix an ATRS
$\atrsone = \{l_i \to r_i\}_{i\in\{1,\dots,n\}}$.  We define the
\emph{collecting semantics of} $\atrsone$ as a tuple
$(Z_1,\dots,Z_n)$, where
\begin{multline*}
  % Z_0 & \defsym \{ (\mathsf{id},\termone) 
  %       \mid \exists \vec{\dataone} \in \DATA.\ 
  %       {\atrs|main|}(\vec{\dataone}) \rss[\atrsone] \termone\}\\
  Z_i \defsym \{ (\sigma,\termone)
        \mid \exists \vec{\dataone} \in \DATA.\ \shortv{\\}
        {\atrs|main|}(\vec{\dataone}) \rss[\atrsone] \ctxone[l_i\sigma] \rew[\atrsone] \ctxone[r_i\sigma] 
        \text{ and $r_i\sigma \rss[\atrsone] \termone$} \} \tpkt
\end{multline*}
Here the substitutions $\sigma$ are restricted to the set $\Var{l_i}$
of variables occurring in the left-hand side in $l_i$.

The collecting semantics of $\atrsone$ includes all the necessary
information for implementing both dead code elimination and instantiation:
\begin{lemma}\label{l:colsem:aux}
  The following properties hold:
  \begin{varenumerate}
  \item\label{l:colsem:aux:deadcode}
    The rule $l_i \to r_i \in \atrsone$ constitutes dead code if and only if~$Z_i = \varnothing$.
  \item\label{l:colsem:aux:instantiate} 
    Suppose the ATRS $\atrstwo$ is obtained by instantiating rules
    $l_i \to r_i$ with substitutions
    $\sigma_{i,1},\dots,\sigma_{i,k_i}$.  Then the instantiation is
    sufficiently exhaustive if for every substitution $\sigma$ with
    $(\sigma,\termone) \in Z_i$, there exists a substitution
    $\sigma_{i,j}$ $(j\in\{1,\dots,i_k\})$ which is at least as
    general as $\sigma$.
  \end{varenumerate}
\end{lemma}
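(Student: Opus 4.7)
The plan is to establish both parts by directly unfolding the definition of the collecting semantics, with the second part requiring an induction on derivation length.

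For part~\ref{l:colsem:aux:deadcode}, I would argue by definition. Suppose rule $l_i \to r_i$ is usable, witnessed by a derivation $\atrs|main|(\vec{\dataone}) \rss[\atrsone] \ctx{C}[l_i\sigma] \rew[\{l_i \to r_i\}] \ctx{C}[r_i\sigma]$; then taking $\termone = r_i\sigma$ (which reduces to itself in zero steps) yields $(\sigma, \termone) \in Z_i$, so $Z_i \neq \varnothing$. Conversely, every element of $Z_i$ directly provides a context and substitution witnessing that the rule is applied in some derivation starting from $\atrs|main|(\vec{\dataone})$ with $\vec{\dataone} \in \DATA$, hence the rule is usable. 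Thus $Z_i = \varnothing$ iff $l_i \to r_i$ is dead code.

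For part~\ref{l:colsem:aux:instantiate}, I would proceed by induction on the length of a derivation $\atrs|main|(\vec{\dataone}) \rew[\atrsone] \termone_1 \rew[\atrsone] \cdots \rew[\atrsone] \termone_m$ from a valid input. The base case is trivial. For the inductive step, by induction hypothesis the prefix up to $\termone_{m-1}$ can be simulated under $\rew[\atrstwo]$. The final step $\termone_{m-1} = \ctx{C}[l_i\sigma] \rew[\atrsone] \ctx{C}[r_i\sigma] = \termone_m$ uses some rule $l_i \to r_i$ at some position with call-by-value substitution $\sigma$. Together with the derivation $\atrs|main|(\vec{\dataone}) \rss[\atrsone] \ctx{C}[l_i\sigma]$ and the reflexive reduction $r_i\sigma \rss[\atrsone] r_i\sigma$, the definition gives $(\sigma, r_i\sigma) \in Z_i$. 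By hypothesis, there is some $\sigma_{i,j}$ at least as general as $\sigma$; let $\tau'$ be the substitution with $\sigma(\varone) = \sigma_{i,j}(\varone)\tau'$ for every variable $\varone$ of $l_i$. Then $l_i\sigma = (l_i\sigma_{i,j})\tau'$ and the instance rule $l_i\sigma_{i,j} \to r_i\sigma_{i,j} \in \atrstwo$ produces the same step $\ctx{C}[(l_i\sigma_{i,j})\tau'] \rew[\atrstwo] \ctx{C}[(r_i\sigma_{i,j})\tau'] = \ctx{C}[r_i\sigma] = \termone_m$.

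The main technical subtlety, and the only step I expect to require some care, is verifying that $\tau'$ is itself call-by-value, so that the simulated step respects the semantics of Figure~\ref{fig:trssemantics}. This follows because every $\sigma(\varone)$ is a value in $\Val[\CS_\atrsone]$, so $\sigma_{i,j}(\varone)\tau'$ is a constructor term for each variable $\varone$ of $l_i$; since the instantiation substitution $\sigma_{i,j}$ introduces no defined symbols (being constructed from the shape data attached to call sites in the collecting semantics), each variable of $l_i\sigma_{i,j}$ must be mapped by $\tau'$ to a subterm of a value, hence to an element of $\Val[\CS_\atrsone]$. With this in hand the simulation step is well-typed as a call-by-value rewrite, and the induction goes through, yielding sufficient exhaustiveness.
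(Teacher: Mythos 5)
Your proof is correct and follows essentially the same route as the paper's: part~(1) by unfolding the definitions of usable rule and of $Z_i$, and part~(2) by induction on the derivation length, replacing each application of $l_i \to r_i$ with the instance rule $l_i\sigma_{i,j} \to r_i\sigma_{i,j}$ under the residual substitution $\tau'$. The paper's own proof is terser and does not spell out the check that $\tau'$ assigns values to the variables of $l_i\sigma_{i,j}$ (needed for the simulated step to be call-by-value); your verification of that point is correct and a welcome addition.
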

\begin{proof}
  The first property follows by definition.
  For the second property, consider a derivation
  \[
  \atrs|main|(\seq[k]{\dataone}) \rss[\atrsone] C[l_i\sigma] \rew[\atrsone] C[r_i\sigma] \tkom
  \]
  and thus $(\sigma,r_i\sigma) \in Z_i$. 
  By assumption, there exists a substitution $\sigma_{i,j}$ ($i \in \{1,\dots,i_k\}$) is at least as general as $\sigma$.
  Hence the ATRS $\atrstwo$ can simulate the step from $C[l_i\sigma] \rew[\atrsone] C[r_i\sigma]$, 
  using the rule $l_i\sigma_{i,j} \to r_i\sigma_{i,j} \in \atrstwo$. 
  From this, the property follows by inductive reasoning.
\end{proof}
As a consequence, the collecting semantics of $\atrsone$ is itself
not computable. Various techniques to over-approximate the
collecting semantics have been proposed, e.g.\ by
\citet{FGT:2004}, \citet{Jones:TCS:07} and \citet{KO:RTA:11}.
In all the works above, the approximation consists in describing
the tuple $(Z_1,\dots,Z_n)$ by a \emph{finite} object.

\def\aut{\lstinline[style=atrs,mathescape]}
\begin{figure*}[!t]
\fbox{
\begin{minipage}[t]{0.975\linewidth}
\vspace{-2mm}
\centering
  \begin{minipage}[t]{0.40\linewidth}
    \begin{align*}
      {\aut!*!} \to{} & {\aut![]!} \mid {\aut!*::*!} \\
      S \to{} & {\aut|main(*)|} \mid \RR{main}\\
      \RR{main} \to{} & {\aut!fix[walk] @ l$\VR{main}$ @ []!} \mid {\aut!$\RR{fixnil}$ @ []!} \mid {\aut!$\RR{fix}$ @ []!} \mid \RR{C1} \mid \RR{C2} \\
      \mid{} & {\aut!C1($\RR{fixnil}$,C3(x$\VR{fix}$))!} \mid {\aut!C1($\RR{fix}$,C3(x$\VR{fix}$))!}\\
      \RR{fix} \to{} & {\aut!C1(fix[walk] @ ys$\VR{fix}$,C3(x$\VR{fix}$))!} \\
      \RR{fixnil} \to{} & {\aut!C2!}
    \end{align*}
  \end{minipage}
  \begin{minipage}[t]{0.35\linewidth}
    \begin{align*}
      \RR{C1} \to{} & {\aut!f$\VR{C1}$ @ (g$\VR{C1}$ @ z$\VR{C1}$)!} \mid {\aut!f$\VR{C1}$ @ $\,\RR{C3}$!} \mid \RR{C1} \mid \RR{C2}\\
      \RR{C3} \to{} & {\aut!x$\VR{C3}$::z$\VR{C3}$!} \\
      \RR{C2}  \to{} & {\aut!z$\VR{C2}$!}\\
      {\aut!l$\VR{main}$!} \to{} & {\aut!*!} \\
      {\aut!x$\VR{fix}$!} \to{} & {\aut!*!} \\ 
      {\aut!ys$\VR{fix}$!} \to{} & {\aut!*!}
    \end{align*}
  \end{minipage}
  \begin{minipage}[t]{0.15\linewidth}
    \begin{align*}
      {\aut!z$\VR{C3}$!} \to{} & {\aut!z$\VR{C1}$!} \\
      {\aut!f$\VR{C1}$!} \to{} & \RR{fixnil} \mid \RR{fix} \\
      {\aut!g$\VR{C1}$!} \to{} & {\aut!C3(x$\VR{fix}$)!} \\
      {\aut!z$\VR{C1}$!} \to{} & \RR{C3} \mid {\aut![]!} \\
      {\aut!z$\VR{C2}$!} \to{} & \RR{C3} \mid {\aut![]!} \\
      {\aut!x$\VR{C3}$!} \to{} & {\aut!x$\VR{fix}$!}
    \end{align*}
  \end{minipage}
  % \begin{minipage}[t]{0.5\linewidth}
  %   \begin{align*}
  %     {\aut!*!} \to{} & {\aut![]!} \mid {\aut!*::*!} \\
  %     S \to{} & {\aut|main(*)|} \mid \RR{main}\\
  %     \RR{main} \to{} & {\aut!fix[walk] @ l$\VR{main}$ @ []!} \mid {\aut!$\RR{fixnil}$ @ []!} \mid {\aut!$\RR{fix}$ @ []!} \mid \RR{C1} \mid \RR{C2} \\
  %     \RR{fix} \to{} & {\aut!C1(fix[walk] @ ys$\VR{fix}$,C3(x$\VR{fix}$))!} \mid  \\
  %     \mid{} & {\aut!C1($\RR{fixnil}$,C3(x$\VR{fix}$))!} \mid {\aut!C1($\RR{fix}$,C3(x$\VR{fix}$))!} \\
  %     \RR{fixnil} \to{} & {\aut!C2!} \\
  %     \RR{C1} \to{} & {\aut!f$\VR{C1}$ @ (g$\VR{C1}$ @ z$\VR{C1}$)!} \mid {\aut!f$\VR{C1}$ @ $\,\RR{C3}$!} \mid \RR{C1} \mid \RR{C2} \\
  %     \RR{C3} \to{} & {\aut!x$\VR{C3}$::z$\VR{C3}$!} \\
  %     \RR{C2}  \to{} & {\aut!z$\VR{C2}$!}
  %   \end{align*}
  % \end{minipage}
  % \begin{minipage}[t]{0.2\linewidth}
  %   \begin{align*}
  %   {\aut!l$\VR{main}$!} \to{} & {\aut!*!} \\
  %   {\aut!x$\VR{fix}$!} \to{} & {\aut!*!} \\ 
  %   {\aut!ys$\VR{fix}$!} \to{} & {\aut!*!} \\
  %   {\aut!x$\VR{C3}$!} \to{} & {\aut!x$\VR{fix}$!} \\
  %   {\aut!z$\VR{C3}$!} \to{} & {\aut!z$\VR{C1}$!} \\
  %   {\aut!f$\VR{C1}$!} \to{} & \RR{fixnil} \mid \RR{fix} \\
  %   {\aut!g$\VR{C1}$!} \to{} & {\aut!C3(x$\VR{fix}$)!} \\
  %   {\aut!z$\VR{C1}$!} \to{} & \RR{C3} \mid {\aut![]!} \\
  %   {\aut!z$\VR{C2}$!} \to{} & \RR{C3} \mid {\aut![]!}
  %   \end{align*}
  % \end{minipage}
\end{minipage}
}
\shortv{\nocaptionrule}\caption{Over-approximation of the collecting semantics of the ATRS from Example~\ref{ex:narrowing}.}\label{fig:cfa}
\end{figure*}

In \hoca\ we have implemented a variation of the technique of
\citet{Jones:TCS:07}, tailored to call-by-value semantics (already
hinted at in~\cite{Jones:TCS:07}).
Conceptually, the form of control flow analysis we perform is close to 
a $0$-CFA~\cite{NNH:2005},
merging information derived from different call sites. 
Whilst being efficient to compute, the precision of this relatively simple approximation
turned out to be reasonable for our purpose. 

The underlying idea is to construct a \emph{(regular) tree grammar} which
over-approximates the collecting semantics. 
Here, a \emph{tree grammar} $\tgone$ can be seen as a ground
ATRS whose left-hand sides are all function symbols with arity zero. 
The \emph{non-terminals} of
$\tgone$ are precisely the left-hand sides.  
For the remaining, we assume that variables occurring $\atrsone$ 
are indexed by indices of rules, i.e., every variable occuring 
in the \nth{$i$} rule $l_i \to r_i \in \atrsone$ has index $i$. 
Hence the set of variables of rewrite rules in $\atrsone$ 
are pairwise disjoint. 
Besides a designated non-terminal $S$, the \emph{start-symbol}, 
the constructed tree grammar $\tgone$ admits two kinds of non-terminals: 
non-terminals $\RL{i}$ for each rule $l_i \to r_i$ and non-terminals $\atrs|z|_i$
for variables $\atrs|z|_i$ occurring in $\atrsone$. 
Note that the latter 
the variable $\atrs|z|_i$ is considered as a constant in $\tgone$.
We say that $\tgone$ is \emph{safe} for $\atrsone$ if the following
two conditions are satisfied for all $(\sigma,\termone) \in Z_i$:
(i)~$\atrs|z|_i \rss[\tgone] \sigma(\atrs|z|_i)$ for each ${\atrs|z|_i} \in \Var{l_i}$; and 
(ii)~$\RL{i} \rss[\tgone] \termone$.
This way, $\tgone$ constitutes a finite over-approximation of the collecting
semantics of $\atrsone$.

\begin{example}
  Figure~\ref{fig:cfa} shows the tree grammar $\tgone$ constructed by 
  the method described below, which is safe for the ATRS $\atrsone$
  from Example~\ref{ex:narrowing}. The notation $N \to \termone_1 \mid \cdots \mid \termone_n$
  is short-hand for the $n$ rules $N \to \termone_i$. 
\end{example}

The construction of Jones consists of an initial automaton $\tgone_0$, which describes considered start terms, 
and which is then systematically closed under rewriting by way of an extension operator $\tgextension{\cdot}$. 
Suitable to our concerns, we define $\tgone_0$ as the tree grammar consisting of the following rules:
\begin{align*}
  S & \to {\atrs|main|}({\atrs|*|},\dots,{\atrs|*|}) && \text{and}\\
  {\atrs|*|} & \to \conone_j({\atrs|*|},\dots,{\atrs|*|}) && \text{for each constructor $\conone_j$ of $\atrsone$.}
\end{align*}
Then clearly $S \rss[\tgone]{\atrs|main|}(\seq[n]{\dataone})$ for all inputs $\dataone_i \in \DATA$. 
We let $\tgone$ be the least set of rules satisfying 
$\tgone \supseteq \tgone_0 \cup \tgextension{\tgone}$
with 
\[
\tgextension{\tgone} \defsym \bigcup_{N \to C[u] \in \tgone} \tgext{N}{C}{u}\tpkt
\]
Here, $\tgext{N}{C}{u}$ is defined as the following set of rules:
\[
\left\{
  \begin{array}{l | l}
    N \to C[\RL{i}],
    & l_i \to r_i \in \atrsone, \\
    \RL{i} \to r_i, \text{ and}
    & u \rss[\tgone] l_i\sigma \text{ is minimal}\\
    {\atrs|z|}_i \to \sigma({\atrs|z|}) \text{ for all ${\atrs|z|} \in \Var{l_i}$} 
    & \text{and $\sigma$ normalised.}
  \end{array}
\right\}
\]
In contrast to~\cite{Jones:TCS:07}, we require
that the substitution $\sigma$ is normalised, 
thereby modelling call-by-value semantics.
The tree grammar $\tgone$ is computable using a simple 
fix-point construction. Minimality of $\funone(\seq[k]{\termone}) \rss[\tgone] l_i\sigma$ means that 
there is no shorter sequence $\funone(\seq[k]{\termone}) \rss[\tgone] l_i\tau$ 
with $l_i\tau \rss[\tgone] l_i\sigma$, and ensures that $\tgone$ is finite~\cite{Jones:TCS:07},
thus the construction is always terminating.

We illustrate the construction on the ATRS from Example~\ref{ex:narrowing}.
\begin{example}
  Revise the ATRS from Example~\ref{ex:narrowing}.  To construct the
  safe tree grammar as explained above, we start from the initial
  grammar $\tgone_0$ given by the rule
  \begin{align*}
  S \to{} & {\aut|main(*)|} & {\aut!*!} \to{} & {\aut![]!} \mid {\aut!*::*!} \tkom
  \end{align*}
  and then successively fix violations of the above closure condition.
  The only violation in the initial grammar is caused by the first
  production.  Here, the right-hand side {\aut|main(*)|} matches the (renamed)
  rule~\ref{atrsrev':main}: \atrs|main(l) -> fix[walk] @ l$\VR{main}$ @ []|,
  using the substitution $\{ {\aut|l|} \mapsto {\aut|*|} \}$. We fix
  the violation by adding productions
  \begin{align*}
    S \to{} & \RR{main} & 
    \RR{main} \to{} & {\aut!fix[walk] @ l$\VR{main}$ @ []!} & 
    {\aut!l$\VR{main}$!} \to{} & {\aut!*!} \tpkt
  \end{align*}
  The tree grammar $\tgone$ constructed so far tells us that
  $l\VR{main}$ is a list. In particular, we have the following two
  minimal sequences which makes the left subterm of the
  $\RR{main}$-production an instances of the left-hand sides of
  defining rules of \atrs|fix[walk]| (rules~\eqref{atrsrev':fixnil}
  and~\eqref{atrsrev':fix}):
  \begin{align*}
    {\aut!fix[walk] @ l$\VR{main}$!} & \rst[\tgone] {\aut!fix[walk] @ []!} \tkom \\
    {\aut!fix[walk] @ l$\VR{main}$!} & \rst[\tgone] {\aut!fix[walk] @ *::*!} \tkom
  \end{align*}
  To resolve the closure violation, the tree grammar is extended by productions
  \begin{align*}
    \RR{main} \to{} & {\aut!$\RR{fixnil}$  @ []!} & 
    \RR{fixnil} \to{} & {\aut!C2!}
  \end{align*}
  because of rule~\eqref{atrsrev':fixnil}, and by 
  \begin{align*}
    \RR{main} \to{} & {\aut!$\RR{fix}$ @ []!}
    & {\aut!x$\VR{fix}$!} \to{} & {\aut!*!} \\
    \RR{fix} \to{} & {\aut!C1(fix[walk] @ ys$\VR{fix}$,C3(x$\VR{fix}$))!}
    & {\aut!ys$\VR{fix}$!} \to{} & {\aut!*!} \tpkt
  \end{align*}
  due to rule~\eqref{atrsrev':fix}. We can now identify a new
  violation in the production of $\RR{fix}$. Fixing all violations
  this way will finally result in the tree grammar depicted in
  Figure~\ref{fig:cfa}.
\end{example}

% We remark that the vanilla construction of Jones
% would generate a tree grammar with 65 clauses, containing for instance 
% a production \aut|f$\VR{C1}$ -> fix[walk] @ ys$\VR{fix}$|. 
% Here, the normalization condition on the substitution $\sigma$ avoids that 
% such productions are added to the tree grammar $\tgone$.
% Still, 

The following lemma confirms that $\tgone$ is closed under rewriting with respect to the call-by-value semantics.
The lemma constitutes a variation of Lemma~5.3 from~\cite{Jones:TCS:07}.
%and can be proven following the pattern of the proof from~\cite{Jones:TCS:07}.
\begin{lemma}\label{l:tgaux}
  If $S \rss[\tgone] \termone$ and $\termone \rss[\atrsone] C[l_i\sigma] \rew[\atrsone] C[r_i\sigma]$ then 
  $S \rss[\tgone] C[\RL{i}]$, $\RL{i} \rew[\tgone] r_i$ and 
  $\atrs|z|_i \rss[\tgone] \sigma(\atrs|z|_i)$ for all variables $\atrs|z|_i \in \Var{l_i}$. 
  % If $N \rss[\tgtwo] C[l_i\sigma] \rew[\atrsone] C[r_i\sigma]$ then 
  % $N \rss[\tgtwo \cup \tgextension{\tgtwo}] C[\RL{i}]$, $\RL{i} \rss[\tgextension{\tgtwo}] r_i$ 
  % and ${\atrs|z|}_i \rss[\tgtwo \cup \tgextension{\tgtwo}] \sigma({\atrs|z|})$ for all $\atrs|z| \in \Var{l_i}$. 
\end{lemma}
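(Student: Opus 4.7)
The plan is to proceed by induction on the length $k$ of the rewrite sequence $\termone \rsl[\atrsone]{k} C[l_i\sigma]$, with the real work concentrated in the base case $k = 0$.

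For the \textbf{base case}, where $\termone = C[l_i\sigma]$, the central idea is a factoring argument applied to the grammar derivation $S \rss[\tgone] C[l_i\sigma]$. Since this derivation ends in a ground term whose symbol at the position $p$ of $l_i\sigma$ is the defined function $f$ rooting $l_i$, tracing back its history reveals a first step in which the non-terminal residing at $p$ is rewritten to a term whose root is already $f$; if the production merely introduces another non-terminal one iterates. This yields a decomposition $S \rss[\tgone] C[N_*] \rew[\tgone] C[w_*] \rss[\tgone] C[l_i\sigma]$, where $N_* \to w_*$ is a production of $\tgone$, the root of $w_*$ equals $f$, and restricting the tail to the subtree at $p$ gives $w_* \rss[\tgone] l_i\sigma$.

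Next, I would invoke the closure condition $\tgone \supseteq \tgextension{\tgone}$. Reading the production $N_* \to w_*$ in the form $N \to C'[u]$ with $C' = \hole$ and $u = w_*$, and noting that $\sigma$ is normalised by the call-by-value semantics of $\atrsone$, the sequence $w_* \rss[\tgone] l_i\sigma$ is a normalised witness. Choosing a minimal such normalised witness $\sigma''$ gives $l_i\sigma'' \rss[\tgone] l_i\sigma$, hence $\sigma''(z) \rss[\tgone] \sigma(z)$ for every $z \in \Var{l_i}$. The closure therefore contributes the productions $N_* \to \RL{i}$, $\RL{i} \to r_i$, and $z \to \sigma''(z)$ to $\tgone$, from which the three conclusions follow: $S \rss[\tgone] C[N_*] \rew[\tgone] C[\RL{i}]$, the single step $\RL{i} \rew[\tgone] r_i$, and $z \rew[\tgone] \sigma''(z) \rss[\tgone] \sigma(z)$.

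For the \textbf{inductive step} $k \geq 1$, split the reduction as $\termone = D[l_j\tau] \rew[\atrsone] D[r_j\tau] = \termone' \rsl[\atrsone]{k-1} C[l_i\sigma]$. Applying the already-proved base case to the single step $\termone \rew[\atrsone] \termone'$ yields $S \rss[\tgone] D[\RL{j}] \rew[\tgone] D[r_j]$, and instantiating the variables of $r_j$ via $z \rss[\tgone] \tau(z)$ gives $S \rss[\tgone] \termone'$. The induction hypothesis on the shorter reduction from $\termone'$ then delivers the three desired conclusions. The main subtlety is the interplay between factoring and the minimality/normalisation side conditions in the definition of $\tgextension{\cdot}$: one must verify that the minimal normalised witness $\sigma''$ really satisfies $l_i\sigma'' \rss[\tgone] l_i\sigma$ so that $\sigma''(z) \rss[\tgone] \sigma(z)$ lifts to $z \rss[\tgone] \sigma(z)$, and that the iterated factoring through productions that merely introduce further non-terminals eventually terminates at a production actually uncovering $f$.
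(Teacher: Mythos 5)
The paper does not actually spell out a proof of this lemma---it defers to Lemma~5.3 of Jones's work, of which the statement is said to be a variation---so your reconstruction of the standard argument (induction on the length of the $\atrsone$-reduction, with the work in the base case via the closure condition $\tgone \supseteq \tgextension{\tgone}$) is exactly the intended route, and your inductive step is fine: the base case applied to the single step $\termone \rew[\atrsone] \termone'$ yields $S \rss[\tgone] D[\RL{j}] \rew[\tgone] D[r_j] \rss[\tgone] D[r_j\tau] = \termone'$, and the induction hypothesis takes over.

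There is, however, a concrete gap in your base-case factoring. You assume that the symbol $\funone$ rooting $l_i\sigma$ at position $\posone$ is uncovered by a production applied to ``the non-terminal residing at $\posone$,'' and you accordingly read the relevant production as $N \to C'[u]$ with $C' = \hole$. But in the grammars constructed here the right-hand sides of productions are deep terms (subterms of right-hand sides of $\atrsone$-rules), so the symbol at $\posone$ may be introduced by a production applied at a \emph{strict ancestor} position $\postwo < \posone$, i.e., by a production $N \to C'[u]$ in which $u$ (the part generating the subtree at $\posone$) sits inside a non-trivial context $C'$. For instance, the production $\RL{\ref{atrsrev':fix}} \to \text{\texttt{C1(fix[walk] @ ys,C3(x))}}$ introduces the potential redex \texttt{fix[walk] @ ys} strictly below its root. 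Your iteration (``if the production merely introduces another non-terminal one iterates'') handles chains of $\epsilon$-like productions but not this case. The fix is available precisely because $\tgextension{\tgone}$ is defined as a union over \emph{all} decompositions $N \to C[u]$ of right-hand sides: take the last production $N \to C'[u]$ in the derivation of $C[l_i\sigma]$ whose subterm $u$ generates the subtree at $\posone$, obtain $u \rss[\tgone] l_i\sigma$ by restriction, and let the closure contribute $N \to C'[\RL{i}]$, $\RL{i} \to r_i$ and the variable productions for a minimal normalised $\sigma''$ with $l_i\sigma'' \rss[\tgone] l_i\sigma$ (your handling of minimality and of normalisation via call-by-value is correct). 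One must then replay the derivation of $S \rss[\tgone] C[l_i\sigma]$ with $N \to C'[\RL{i}]$ in place of $N \to C'[u]$ to conclude $S \rss[\tgone] C[\RL{i}]$; with that repair the proof goes through.
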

\begin{theorem}\label{t:tg:safe}
  The tree grammar $\tgone$ is safe for $\atrsone$. 
  % For each $\sigma \in Z_i$ and variable $\atrs|z| \in \Var{l_i}$, we have ${\atrs|z|}_i \rss[\tgone] \sigma({\atrs|z|})$. 
\end{theorem}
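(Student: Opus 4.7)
The plan is to reduce the theorem entirely to Lemma~\ref{l:tgaux} plus a
short induction on reduction length. Fix a pair $(\sigma,\termone) \in Z_i$.
Unfolding the definition of the collecting semantics, there exist inputs
$\vec{\dataone} \in \DATA$ together with a context $\ctxone$ such that
\[
  \atrs|main|(\vec{\dataone}) \rss[\atrsone] \ctxone[l_i\sigma]
    \rew[\atrsone] \ctxone[r_i\sigma]
  \qquad\text{and}\qquad r_i\sigma \rss[\atrsone] \termone \tpkt
\]
By construction of the initial grammar $\tgone_0$, which is contained in
$\tgone$, the productions $S \to {\atrs|main|}({\atrs|*|},\dots,{\atrs|*|})$
and $\atrs|*| \to \conone_j(\atrs|*|,\dots,\atrs|*|)$ immediately give
$S \rss[\tgone] \atrs|main|(\vec{\dataone})$ for every
$\vec{\dataone} \in \DATA$.

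This puts us in the hypothesis of Lemma~\ref{l:tgaux}: we have
$S \rss[\tgone] \atrs|main|(\vec{\dataone})$ and an $\atrsone$-reduction of
$\atrs|main|(\vec{\dataone})$ ending in the step
$\ctxone[l_i\sigma] \rew[\atrsone] \ctxone[r_i\sigma]$. The lemma therefore
yields $\RL{i} \rew[\tgone] r_i$ and
$\atrs|z|_i \rss[\tgone] \sigma(\atrs|z|_i)$ for every
$\atrs|z|_i \in \Var{l_i}$. The second conclusion is already condition~(i)
of safety, and by substituting the variable productions into the right-hand
side of $\RL{i} \rew[\tgone] r_i$ we obtain
$\RL{i} \rss[\tgone] r_i\sigma$, which is the base case of
condition~(ii).

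It remains to extend the derivation $\RL{i} \rss[\tgone] r_i\sigma$ to
$\RL{i} \rss[\tgone] \termone$. I would do this by induction on the length
$n$ of the reduction $r_i\sigma \rss[\atrsone] \termone$. The base case
$n=0$ is trivial. For the inductive step, decompose
$r_i\sigma \rew[\atrsone] \termtwo \rsl[\atrsone]{n-1} \termone$, say with
$r_i\sigma = \ctxone'[l_j\tau]$ and $\termtwo = \ctxone'[r_j\tau]$ for some
rule $l_j \to r_j \in \atrsone$, context $\ctxone'$, and normalised
substitution $\tau$. Concatenating the $\atrsone$-reductions gives
$\atrs|main|(\vec{\dataone}) \rss[\atrsone] \ctxone[\ctxone'[l_j\tau]]
  \rew[\atrsone] \ctxone[\ctxone'[r_j\tau]]$, so a further application of
Lemma~\ref{l:tgaux} supplies $\RL{j} \rew[\tgone] r_j$ and
$\atrs|z|_j \rss[\tgone] \tau(\atrs|z|_j)$, hence
$\RL{j} \rss[\tgone] r_j\tau$. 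The main obstacle is to convert this new
information, which sits on the global grammar non-terminal $\RL{j}$, into a
local rewrite step inside $\RL{i}$'s language; the key point is that the
$\tgone$-derivation of $r_i\sigma$ from $\RL{i}$ passes through the subterm
$l_j\tau$, and by minimality in the definition of $\tgextension{\cdot}$ the
production $\RL{j}\to r_j$ together with the variable productions is
available there as well. This lets us replace the occurrence of $l_j\tau$
by $r_j\tau$ inside the $\tgone$-derivation, yielding
$\RL{i} \rss[\tgone] \ctxone'[r_j\tau] = \termtwo$. The induction hypothesis
applied to the shorter reduction $\termtwo \rsl[\atrsone]{n-1} \termone$
then gives $\RL{i} \rss[\tgone] \termone$, which is condition~(ii) and
completes the proof.
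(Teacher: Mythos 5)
Your proposal follows the paper's proof essentially verbatim: both fix $(\sigma,\termone)\in Z_i$, observe $S \rss[\tgone] \atrs|main|(\vec{\dataone})$ from $\tgone_0 \subseteq \tgone$, invoke Lemma~\ref{l:tgaux} to obtain condition~(i) and $\RL{i} \rss[\tgone] r_i\sigma$, and then close condition~(ii) by induction on the length of $r_i\sigma \rss[\atrsone] \termone$ with repeated appeals to Lemma~\ref{l:tgaux}. Your elaboration of the inductive step (propagating the productions for $\RL{j}$ and the variable non-terminals into the local derivation from $\RL{i}$) is exactly the detail the paper leaves implicit under "a standard induction", so the proposal is correct and takes the same route.
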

\begin{proof}
  Fix $(\sigma,\termone) \in Z_i$, and let $\atrs|z| \in \Var{l_i}$. 
  Thus ${\atrs|main|}(\vec{\dataone}) \rss[\atrsone] \ctxone[l_i\sigma] \rew[\atrsone] \ctxone[r_i\sigma]$ and $r_i\sigma \rss[\atrsone] \termone$ 
  for some inputs $\dataone \in \DATA$. 
  As we have $S \rss[\tgone] {\atrs|main|}(\vec{\dataone})$ since $\tgone_0 \subseteq \tgone$, 
  Lemma~\ref{l:tgaux} yields $R_i \rew[\tgone] r_i$ and
  ${\atrs|z|}_i \rss[\tgone] \sigma({\atrs|z|})$, i.e., the second safeness conditions is satisfied. 
  Clearly, $\RL{i} \rew[\tgone] r_i \rss[\tgone] r_i\sigma$. 
  A standard induction on the length of $r_i\sigma \rss[\atrsone] \termone$ 
  then yields $\RL{i}\rss[\atrsone] \termone$, using again Lemma~\ref{l:tgaux}.
\end{proof}

We arrive now at our concrete implementation ${\strat|cfa|}(\atrsone)$
that employs the above outlined call flow analysis to deal with both
dead code elimination and instantiation on the given ATRS $\atrsone$.
The construction of the tree grammar $\tgone$ follows itself closely
the algorithm outlined by \citet{Jones:TCS:07}.  Recall that the
\nth{$i$} rule $l_i \to r_i \in \atrsone$ constitutes dead code if the
\nth{$i$} component $Z_i$ of the collecting semantics of $\atrsone$ is
empty, by Lemma~\eref{l:colsem:aux}{deadcode}.  Based on the
constructed tree grammar, the implementation identifies rule $l_i \to
r_i$ as dead code when $\tgone$ does not define a production $\RL{i}
\to \termone$ and thus $Z_i = \varnothing$.  All such rules are
eliminated, in accordance to Proposition~\ref{p:usablerules}.  On the
remaining rules, our implementation performs instantiation as
follows.  We suppose \emph{$\epsilon$-productions} $N \to M$, for
non-terminals $M$, have been eliminated by way of a standard
construction, preserving the set of terms from non-terminals in
$\tgone$.  Thus productions in $\tgone$ have the form $N \to
\funone(\seq[k]{\termone})$.  Fix a rule $l_i \to r_i \in \atrsone$.
The primary goal of this stage is to get rid of head variables, with
respect to the $\eta$-saturated ATRS $\etasaturate{\atrsone}$, thereby
enabling uncurrying so that the ATRS $\atrsone$ can be brought into
functional form.  For all such head variables \atrs|z|, then, we construct a
set of binders
\[
\{ {\atrs|z|_i} \mapsto \mathsf{fresh}(\funone(\seq[k]{\termone}))
\mid {\atrs|z|_i} \to \funone(\seq[k]{\termone}) \in \tgone \}
\tkom
\]
where the function $\mathsf{fresh}$ replaces non-terminals by fresh variables, 
discarding binders where the right-hand contains defined symbols.
For variables \atrs|z| which do not occur in head positions, we construct such a binder only 
if the production ${\atrs|z|_{i}} \to \funone(\seq[k]{\termone})$ is unique.
With respect to the tree grammar of Figure~\ref{fig:cfa}, head variables \atrs|f|, \atrs|g|
of the rule~\ref{atrsrev':C1}
the implementation generates binders
\begin{align*}
  \{ {\atrs|f|_1} \mapsto {\atrs!C2!}, {\atrs|f|_1} \mapsto {\atrs!C1(f',C3(x'))!} \} \text{ and }
  \{ {\atrs|g|_1} \mapsto {\atrs!C3(x'')!} \}
  \tpkt                                                               
\end{align*}
The product-combination of all such binders gives then a set of substitution
$\{\sigma_{i,1},\dots,\sigma_{i,i_k}\}$ that leads to sufficiently many instantiations
$l_i\sigma_{i,j} \to r_i\sigma_{i,j}$ of rule $l_i \to r_i$, by Lemma~\eref{l:colsem:aux}{instantiate}. 
Our implementation replaces every rule $l_i \to r_i \in \atrsone$ by instantiations constructed this way. 

The definition of binder was chosen to keep the number of computed substitutions minimal, and hence the generated head variable free ATRS small.
Putting things together, we see that the instantiation is sufficiently exhaustive, and thus 
the overall transformation is complexity reflecting and preserving by Theorem~\ref{t:instantiate}.
By \strat|cfaDCE| we denote the variation of \strat|cfa| that performs dead code elimination, but no instantiations. 

%%%%%%%%%%%%%%%%%%%%%%%%%%%%%%%%%%%%%%%%%
\subsection{Combining Transformations}\label{s:strategy}
%%%%%%%%%%%%%%%%%%%%%%%%%%%%%%%%%%%%%%%%%
\begin{figure}
  \centering
\begin{lstlisting}[style=strategy,frame=single]
simplify = simpATRS; toTRS; simpTRS where
  simpATRS = %\strut%
    exhaustive inline(lambda-rewrite);
    exhaustive inline(match);
    exhaustive inline(constructor);
    usableRules
  toTRS = cfa; uncurry; usableRules %\strut%
  simpTRS = 
    exhaustive ((inline(decreasing); %\strut%
                 usableRules) <> cfaDCE)%\strut%
\end{lstlisting}
\shortv{\nocaptionrule}\caption{Transformation Strategy in \protect\hoca.}\label{fig:strategy}
\end{figure}
We have now seen all the building blocks underlying our tool \hoca.\@
But \emph{in which order} should we apply the introduced program
transformations?  In principle, one could try to blindly iterate the
proposed techniques and hope that a FOP can cope with the
output. Since transformations are closed under composition, the blind
iteration of transformations is sound, although seldom effective. In
short, a \emph{strategy} is required that combines the proposed
techniques in a sensible way. There is no clear notion of a
perfect strategy. After all, we are interested in non-trivial program
properties.  However, it is clear that any sensible strategy should at
least (i) yield overall a transformation that is
effectively computable, (ii)~not defeat its purpose by generating TRSs
whose runtime complexity is not at all in relation to the complexity
of the analysed program, and (iii)~produce ATRSs that FOPs are able
to analyse.

In Figure~\ref{fig:strategy} we render the current transformation
strategy underlying our tool \hoca.\@ More precise, 
Figure~\ref{fig:strategy} defines a transformation \strat|simplify| 
based on the following \emph{transformation combinators}:
\begin{varitemize}
\item 
  \strat|$\transone_1$;$\transone_2$| 
  denotes the composition $\transone_2 \circ \underline{\transone_1}$, where 
  $\underline{\transone_1}(\atrsone) = \transone_1(\atrsone)$ if defined and $\underline{\transone_1}(\atrsone) = \atrsone$ otherwise; 
\item 
  the transformation \strat|exhaustive$\transone$| iterates the
  transformation $\transone$ until inapplicable on the current problem; and 
\item 
  the operator \strat|<>| implements left-biased choice: 
  \strat|$\transone_1$ <> $\transone_2$| applies
  transformation $\transone_1$ if successful, otherwise $\transone_2$
  is applied.
\end{varitemize}
It is easy to see that all three combinators preserve the two crucial properties of transformations, 
viz, complexity reflection and complexity preservation. 

The transformation \strat|simplify| depicted in
Figure~\ref{fig:strategy} is composed out of three transformations
\strat|simpATRS|, \strat|toTRS| and \strat|simpTRS|, each itself
defined from transformations \strat|inline($P$)| and \strat|cfa|
describe in Sections~\ref{s:impl:inline} and~\ref{s:impl:cfa},
respectively, the transformation \strat|usableRules| which implements
the aforementioned computationally cheap, unification based, criterion
from~\cite{GTSK:FROCOS:05} to eliminate dead code (see
Section~\ref{s:simpl:de}), and the transformation \strat|uncurry|,
which implements the uncurrying-transformation from
Section~\ref{s:simpl:uncurrying}.

The first transformation in our chain, \strat|simpATRS|, performs
inlining driven by the specific shape of the input ATRS obtained by
defunctionalisation, followed by syntax driven dead code elimination.
The transformation \strat|toTRS| will then translate the intermediate
ATRSs to functional form by the uncurrying transformation, using
control flow analysis to instantiate head variables sufficiently and
further eliminate dead code.  The transformation \strat|simpTRS| then
simplifies the obtained TRS by controlled inlining, applying syntax
driven dead code elimination where possible, resorting to the more
expensive version based on control flow analysis in case the
simplification stales.  To understand the sequencing of
transformations in \strat|simpTRS|, observe that the strategy
\strat|inline(decreasing)| is interleaved with dead code
elimination.  Dead code elimination, both in the form of
\strat|usableRules| and \strat|cfaDCE|, potentially restricts the set
$\narrowings{p}{l \to r}$, and might facilitate in consequence the
transformation \strat|inline(decreasing)|.  Importantly, the rather
expensive, flow analysis driven, dead code analysis is only performed
in case both \strat|inline(decreasing)| and its cheaper cousin
\strat|usableRules| fail.

\shortlongv{
The overall strategy \strat|simplify| is well-defined 
on all inputs obtained by defunctionalisation, i.e., terminating~\cite{EV}.
}{
To see termination, it suffices to realize that all exhaustive applications of 
transformations in \strat|simplify| are terminating:
\begin{varitemize}
\item 
  For \strat|inline(match)| this claim is immediate by the shape of
  input ATRSs.  Each application of \strat|inline(match)| removes one
  occurrence of a closure-constructor obtained from the transformation
  of a match-expression in right-hand sides.
\item 
  Similar, exhaustive application of \strat|inline(constructor)| is
  terminating, since at each step the number of defined symbol in
  right-hand sides is reduced.
\item 
  For iterated application of \strat|inline(lambda-rewrite)| the claim
  is less obvious.  Intuitively, termination holds because the
  rewritings performed on right-hand sides correspond to steps with
  respect to a very restricted fragment of \PCF, which is itself
  terminating: the simply typed $\lambda$-calculus. Note that the
  restriction to rewrites is essential, as soon as we allow inlining
  by narrowing, termination is not guaranteed.
\item 
  Concerning the final case, by way of contradiction suppose that
  \[
  \strat{(inline(decreasing); usableRules) <> cfaDCE} \tkom
  \]
  is applied infinitely often. Dead code elimination cannot be the culprit, indeed, 
  \strat|inline(decreasing)| can then be applied infinitely often. 
  In such a sequence, the case \emph{proper inlining} underlying the definition 
  of the predicate \strat|decreasing| cannot hold infinitely often, as the 
  number of defined symbols in right-hand sides decrease after each application. Hence ultimately, 
  an infinite application of \strat|inline(decreasing)| has to happen due to the \emph{size decreasing} condition.
  But in such a sequence, the multiset of sizes of right-hand sides is decreasing with respect to the multiset 
  extension of the strict order $>$ on naturals, which itself is well-founded. Contradiction!
\end{varitemize}
} 
Although we cannot give precise bounds on the runtime complexity in
general, in practice the number of applications of inlinings is
sufficiently controlled to be of practical relevance.  Importantly,
the way inlining and instantiation is employed ensures that the sizes
of all intermediate TRSs are kept under tight control.
% GM space
% This observation is confirmed by our experimental evaluation, which we
% present next.

%%% Local Variables:
%%% mode: latex
%%% TeX-master: "paper"
%%% End:

%%%%%%%%%%%%%%%%%%%%%%%%%%%%%%%%%%%%%
\section{Experimental Evaluation}\label{s:experiments}
%%%%%%%%%%%%%%%%%%%%%%%%%%%%%%%%%%%%%

\newcommand{\lz}{\phantom{0}}
\newcommand{\defunct}{\textsf{D}}
\newcommand{\simplify}{\textsf{S}}
\newcommand{\pb}[1]{\,#1\,}%$\mparbox[r]{8mm}{\small{#1}\,}$
\newcommand\tm[3]{\pb{#1}/\pb{#2}/\pb{#3}}
\begin{table*}
  \shortv{\nocaptionrule}\caption{Experimental Evaluation conducted with \protect\tct\ and \protect\TTTT.}\label{t:experiments}
  \longv{\tiny}
  \centering
  \begin{tabular}{l r c c c c c}
    \toprule
 & \TOP\BOT\strut                 & constant              & linear                     & quadratic                  & polynomial                 & terminating           \\
    \midrule
    \TOP\defunct 
 & \strut\# systems               & 2                     & 5                          & 5                          & 5                          & 8                     \\
 % & \strut\hoca\ execution time    & \tm{0.01}{0.01}{0.01} & \tm{0.01}{0.01}{\lz{}0.01} & \tm{0.01}{0.01}{\lz{}0.01} & \tm{0.01}{0.01}{\lz{}0.01} & \tm{0.01}{0.01}{0.01} \\
 & \BOT\strut{}FOP execution time & \tm{0.37}{1.71}{3.05} & \tm{0.37}{4.82}{13.85}    & \tm{0.37}{4.82}{13.85}    & \tm{0.37}{4.82}{13.85}    & \tm{0.83}{1.38}{1.87} \\
    \TOP
    \simplify
 & \strut\# systems               & 2                     & 14                         & 18                         & 20                         & 25                    \\
 & \strut\hoca\ execution time    & \tm{0.01}{2.28}{4.56} & \tm{0.01}{0.54}{\lz{}4.56} & \tm{0.01}{0.43}{\lz{}4.56} & \tm{0.01}{0.42}{\lz{}\,4.56} & \tm{0.01}{0.87}{6.48} \\
 & \BOT\strut{}FOP execution time & \tm{0.23}{0.51}{0.79} & \tm{0.23}{2.53}{14.00}     & \tm{0.23}{6.30}{30.12}     & \tm{0.23}{10.94}{60.10}    & \tm{0.72}{1.43}{3.43} \\
    \bottomrule
  \end{tabular}
\end{table*}

So far, we have covered the theoretical and implementation aspects
underlying our tool \hoca.\@ The purpose of this section is to
indicate how our methods performs in practice.
To this end, we compiled a diverse collection of higher-order
programs from the 
literature~\cite{DN:PPDP:01,JHLH:POPL:10,Okasaki:JFP:98} and standard 
textbooks~\cite{Bird,RabhiLapalme}, on which we performed tests with our tool in conjunction
with the general-purpose first-order complexity tool \tct~\cite{AM:RTA:13b},
version 2.1.\footnote{We ran also experiments with \aprove\ and \cat\ as
  back-end, this however did not extend the power.}
% When run on an input such as a TRS produced by \hoca, \tct\ tries to infer an 
% \emph{asymptotic} upper bound on its runtime complexity, 
% with respect to a specific evaluation strategy. Importantly, \tct\ can be adviced to 
% consider an innermost reductions strategy, which on the TRSs generated by \hoca\ is in 
% correspondence to our call-by-value semantics. The tool \tct\ will thus classify 
% the runtime complexity of a TRS produced by \hoca\ as e.g. linear, quadratic, and so on, 
% or polynomial if a precise degree cannot be inferred. By our design, these
% bound can be relayed back, from the produced first-order TRS, 
% to the higher-order program fed as input to \hoca. 
For comparison, we have also paired \hoca\ with the termination tool \TTTT~\cite{MKCSHZAM:RTA:09}, version 1.15.
% Again, a proof of termination constructed by \TTTT\ can be relayed to the higher-order program.where the type system is enriched with

In Table~\ref{t:experiments} we summarise our experimental findings on
the 25 examples from our collection.\footnote{Examples and full experimental evidence can be found on the \hoca\ homepage.}
Row \simplify\ in the table indicates the total number of
higher-order programs whose runtime could be classified linear,
quadratic and at most polynomial when \hoca\ is paired with the
back-end \tct, and those programs that can be shown terminating when
\hoca\ is paired with \TTTT.\@ In contrast, row \defunct\ shows 
the same statistics when the FOP is run directly on the defunctionalised program, 
given by Proposition~\ref{p:pcf2trs}.
To each of those results, we state the
minimum, average and maximum execution time of \hoca\ and the employed
FOP.\@ All experiments were conducted on a machine with a 8 dual core
AMD Opteron${}^\text{\texttrademark}$ 885 processors running at
2.60GHz, and 64Gb of RAM.\footnote{Average PassMark CPU Mark 2851;
\url{http://www.cpubenchmark.net/}.}\@ Furthermore, the
tools were advised to search for a certificate within 60
seconds.

As the table indicates, not all examples in the testbed are subject to
a runtime complexity analysis through the here proposed approach. However, 
at least termination can be automatically verified. 
For all but one example (namely \aut|mapplus.fp|)
the obtained complexity certificate is asymptotically optimal.
As far as we know, no other fully automatic complexity tool 
can handle the five open examples.
We will comment below on the reason why \hoca\ may fail.

% %%%%%%%%%%%%%%%%%%%%%%%%%%%%%
% \subsection{Some Examples}
% %%%%%%%%%%%%%%%%%%%%%%%%%%%%%

Let us now analyse some of the programs from our testbed. For
each program, we will briefly discuss what \hoca, followed by selected
FOPs can prove about it. This will give us the opportunity to discuss
about specific aspects of our methodology, but also about limitations
of the current FOPs.
\paragraph{Reversing a List.} Our running example, namely the functional
program from Section~\ref{s:defunc} which reverses a list, can be
transformed by \hoca\ into an ATRS which can easily be proved to have
linear complexity. Similar results can be proved for other programs.
\paragraph{Parametric Insertion Sort.} A more complicated example is a
higher-order formulation of the insertion sort algorithm, example
\aut|isort-fold.fp|, which is parametric on the subroutine which
compares the elements of the list being sorted. This is an example
which cannot be handled by linear type systems~\cite{BT:IC:09}: we do recursion over a
function which in an higher-order variable occurs free. Also, type
systems like the ones in~\cite{JHLH:POPL:10}, which are restricted to
linear complexity certificates, cannot bind the runtime complexity of
this program. \hoca, instead, is able to put it in a form which allows
\tct\ to conclude that the complexity is, indeed quadratic.

\paragraph{Divide and Conquer Combinators.} Another noticeable example
is the divide an conquer combinator, defined in example \aut|mergesort-dc.fp|, which we have taken from
\cite{RabhiLapalme}. We have then instantiated it so that the resulting algorithm
is the merge sort algorithm. \hoca\ is indeed able to translate the
program into a first-order program which can then be proved to be
\emph{terminating} by FOPs. This already tells us that the obtained
ATRS is in a form suitable for the analysis. The fact that FOPs
cannot say anything about its complexity is due to the limitations of
current FOPS which, indeed, are not able to perform any non-local size
analysis, itself a necessary condition for proving merge sort to be a
polynomial time algorithm. Similar considerations hold for Okasaki's
parser combinator, various instances of which can be proved themselves 
terminating.

%%% Local Variables:
%%% mode: latex
%%% TeX-master: "paper"
%%% End:

%%%%%%%%%%%%%%%%%%%%%%%%%
\section{Related Work}\label{s:relatedwork}
%%%%%%%%%%%%%%%%%%%%%%%%%
What this paper shows is that complexity analysis of higher-order
functional programs can be made easier by way of program
transformations.  As such, it can be seen as a complement rather than
an alternative to existing methodologies. Since the literature on
%MA: changed
% complexity analysis of functional programs 
related work
is quite vast, we will only
give in this section an overview of the state of the art, highlighting
the differences with to our work.

%%%%%%%%%%%%%%%%%%%%%%%%%%%
\paragraph{Control Flow Analysis.}
%%%%%%%%%%%%%%%%%%%%%%%%%%%
A clear understanding of control flow in higher-order programs is
crucial in almost any analysis of non-functional properties.
Consequently, the body of literature on control flow analysis is
considerable, see e.g.\ the recent survey
of~\citet{Midtgaard:ACMCS:12}.  Closest to our work, control flow
analysis has been successfully employed in termination analysis, for
brevity we mention only~\cite{PS97,JB:LMCS:08,GRSST:TOPLAS:11}.  By
\citet{JB:LMCS:08} a strict, higher-order language is studied, and
control flow analysis facilitates the construction of size-change
graphs needed in the analysis. Based on earlier work by \citet{PS97}, 
\citet{GRSST:TOPLAS:11} study termination of \haskell{} through so-called 
\emph{termination} or \emph{symbolic execution} graphs, 
which under the hood corresponds to a careful study of the control flow in \haskell{} programs.
While arguable \emph{weak dependency pairs}~\cite{HM:IJCAR:08} or 
\emph{dependency triples}~\cite{NEG:CADE:11} form a weak notion of
control flow analysis, our addition of collecting semantics to complexity
analysis is novel.

 %%%%%%%%%%%%%%%%%%%%%%%%%%%
\paragraph{Type Systems.}
%%%%%%%%%%%%%%%%%%%%%%%%%%%
That the r\^ole of type systems can go beyond type safety is well-known.
The abstraction type systems implicitly provide,
can enforces properties like termination or bounded
complexity. In particular, type systems for the $\lambda$-calculus are
known which characterise relatively small classes of functions like
the one of polynomial time computable functions~\cite{BT:IC:09}. 
The principles underlying these type
systems, which by themselves cannot be taken as verification
methodologies, have been leveraged while defining type systems for
more concrete programming languages and type inference procedures, 
some of them being intensionally
complete~\cite{LG:LMCS:12,LP:POPL:13}. All these results are of course
very similar in spirit to what we propose in this work. What is
lacking in most of the proposed approaches is the presence, at the
same time, of higher-order, automation, and a reasonable expressive
power. As an example, even if in principle type systems coming from
light logics~\cite{BT:IC:09} indeed handle higher-order functions and
can be easily implementable, the class of catched programs is small
and full recursion is simply absent. 
On the other hand, \citet{JHLH:POPL:10} have successfully encapsulated 
Tarjan's amortised cost analysis into a type systems that allows 
a fully automatic resource analysis. 
In contrast to our work, only linear resource usage can be established. However,
their cost metric is general, while our technique only works
for time bounds. Also in the context of amortised analysis, 
\citet{Danielsson:2008} provides a semiformal verification
of the runtime complexity of lazy functional languages, which allows
the derivation of non-linear complexity bounds on selected examples.

%%%%%%%%%%%%%%%%%%%%%%%%%%%%
\paragraph{Term Rewriting.}
%%%%%%%%%%%%%%%%%%%%%%%%%%%%
Traditionally, a major concern in rewriting has been
the design of sound algorithmic methodologies for checking termination. 
This has given rise to many different techniques including
basic techniques like path orders or interpretations, as well
as sophisticated transformation techniques, c.f.~\cite[Chapter~6]{Terese}.
Complexity analysis of TRSs can be seen as a natural generalisation
of termination analysis. And, indeed, variations on path orders and
the interpretation methods capable of guaranteeing quantitative 
properties have appeared one after the other starting from the beginning
of the nineties~\cite{BCMT:JFP:2001,Marion:IC:03,AM:LMCS:13}.
In both termination and complexity analysis, the rewriting community
has always put a strong emphasis to automation. However, with respect
to higher-order rewrite systems (HRSs) only termination has received steady
attention, c.f.~\cite[Chapter~11]{Terese}. Except for very few attempts without 
any formal results complexity analysis of HRSs has been 
lacking~\cite{BMP:ICTAC:07,BL:CSL:12}.

%%%%%%%%%%%%%%%%%%%%%%%%%%%%%
\paragraph{Cost Functions.}
%%%%%%%%%%%%%%%%%%%%%%%%%%%%%
An alternative strategy for complexity analysis consists in
translating programs into other expressions (which could be programs
themselves) whose purpose is precisely computing the complexity (also
called the \emph{cost}) of the original program. Complexity analysis
is this way reduced to purely extensional reasoning on the obtained
expressions. Many works have investigated this approach in the context
of higher-order functional languages, starting from the pioneering
work by Sands~\cite{Sands:ESOP:90} down to more recent contributions,
e.g. \cite{VH:IFL:03}. What is common among most of the cited works
is that either automation is not considered (e.g. cost functions can
indeed be produced, but the problem of putting them in closed form is
not~\cite{VH:IFL:03}), or the time complexity is not analysed
parametrically on the size of the input~\cite{GL:PEPM:02}. A notable
exception is Benzinger's work~\cite{Benzinger:TCS:04}, which however
only applies to programs extracted from proofs, and thus only works
with primitive recursive definitions.

%%% Local Variables:
%%% mode: latex
%%% TeX-master: "paper"
%%% End:

%\input{conclusion}

{\footnotesize
\bibliography{strings-short,references}}

\begin{thebibliography}{53}
\providecommand{\natexlab}[1]{#1}
\providecommand{\url}[1]{\texttt{#1}}
\expandafter\ifx\csname urlstyle\endcsname\relax
  \providecommand{\doi}[1]{doi: #1}\else
  \providecommand{\doi}{doi: \begingroup \urlstyle{rm}\Url}\fi

\bibitem[Accattoli and {Dal Lago}(2014)]{AD:CSL:14}
B.~Accattoli and U.~{Dal Lago}.
\newblock {Beta Reduction is Invariant, Indeed}.
\newblock In \emph{Proc.\ of\ \nrd{23} CSL}, pages 8:1--8:10. ACM, 2014.

\bibitem[Accattoli and {Sacerdoti Coen}(2015)]{ASC:LICS:15}
B.~Accattoli and C.~{Sacerdoti Coen}.
\newblock {On the Usefulness of Constructors}.
\newblock In \emph{Proc.\ of\ \nth{30} LICS}. IEEE, 2015.

\bibitem[Albert et~al.(2013)Albert, Genaim, and Masud]{AGM:TOCL:13}
E.~Albert, S.~Genaim, and A.~N. Masud.
\newblock {On the Inference of Resource Usage Upper and Lower Bounds}.
\newblock \emph{TOCL}, 14\penalty0 (3):\penalty0 22(1--35), 2013.

\bibitem[Arts and Giesl(2000)]{AG:TCS:00}
T.~Arts and J.~Giesl.
\newblock {Termination of Term Rewriting using Dependency Pairs}.
\newblock \emph{TCS}, 236\penalty0 (1--2):\penalty0 133--178, 2000.

\bibitem[Aspinall et~al.(2007)Aspinall, Beringer, Hofmann, Loidl, and
  Momigliano]{ABHLM:TCS:07}
D.~Aspinall, L.~Beringer, M.~Hofmann, H.-W. Loidl, and A.~Momigliano.
\newblock {A Program Logic for Resources}.
\newblock \emph{TCS}, 389\penalty0 (3):\penalty0 411--445, 2007.

\bibitem[Avanzini and Moser(2010)]{AM:RTA:10}
M.~Avanzini and G.~Moser.
\newblock {Closing the Gap Between Runtime Complexity and Polytime
  Computability}.
\newblock In \emph{Proc.\ of\ \nst{21} RTA}, volume~6 of \emph{LIPIcs}, pages
  33--48, 2010.

\bibitem[Avanzini and Moser(2013{\natexlab{a}})]{AM:LMCS:13}
M.~Avanzini and G.~Moser.
\newblock {Polynomial Path Orders}.
\newblock \emph{LMCS}, 9\penalty0 (4), 2013{\natexlab{a}}.

\bibitem[Avanzini and Moser(2013{\natexlab{b}})]{AM:RTA:13b}
M.~Avanzini and G.~Moser.
\newblock {Tyrolean Complexity Tool: Features and Usage}.
\newblock In \emph{Proc.\ of\ \nth{24} RTA}, volume~21 of \emph{LIPIcs}, pages
  71--80, 2013{\natexlab{b}}.

\bibitem[Avanzini and Moser(2015)]{AM:IC:15}
M.~Avanzini and G.~Moser.
\newblock {A Combination Framework for Complexity}.
\newblock \emph{IC}, 2015.
\newblock To appear.

\bibitem[Baader and Nipkow(1998)]{BN:1998}
F.~Baader and T.~Nipkow.
\newblock \emph{{Term Rewriting and All That}}.
\newblock Cambridge University Press, 1998.

\bibitem[Baillot and {Dal Lago}(2012)]{BL:CSL:12}
P.~Baillot and U.~{Dal Lago}.
\newblock {Higher-Order Interpretations and Program Complexity}.
\newblock In \emph{Proc.\ of\ \nth{26} CSL}, volume~16 of \emph{LIPIcs}, pages
  62--76, 2012.

\bibitem[Baillot and Terui(2009)]{BT:IC:09}
P.~Baillot and K.~Terui.
\newblock {Light types for Polynomial Time Computation in Lambda Calculus}.
\newblock \emph{IC}, 207\penalty0 (1):\penalty0 41--62, 2009.

\bibitem[Benzinger(2004)]{Benzinger:TCS:04}
R.~Benzinger.
\newblock {Automated Higher-order Complexity Analysis}.
\newblock \emph{TCS}, 318\penalty0 (1-2):\penalty0 79--103, 2004.

\bibitem[Bird(1998)]{Bird}
R.~Bird.
\newblock \emph{{Introduction to Functional Programming using Haskell, Second
  Edition}}.
\newblock Prentice Hall, 1998.

\bibitem[Bonfante et~al.(2001)Bonfante, Cichon, Marion, and
  Touzet]{BCMT:JFP:2001}
G.~Bonfante, A.~Cichon, J.-Y. Marion, and H.~Touzet.
\newblock {Algorithms with Polynomial Interpretation Termination Proof}.
\newblock \emph{JFP}, 11\penalty0 (1):\penalty0 33--53, 2001.

\bibitem[Bonfante et~al.(2007)Bonfante, Marion, and P{\'e}choux]{BMP:ICTAC:07}
G.~Bonfante, J.-Y. Marion, and R.~P{\'e}choux.
\newblock {Quasi-interpretation Synthesis by Decomposition and an Application
  to Higher-order Programs}.
\newblock In \emph{Proc.\ of \nth{4} ICTAC}, volume 4711 of \emph{LNCS}, pages
  410--424, 2007.

\bibitem[{Dal Lago} and Gaboardi(2012)]{LG:LMCS:12}
U.~{Dal Lago} and M.~Gaboardi.
\newblock {Linear Dependent Types and Relative Completeness}.
\newblock \emph{LMCS}, 8\penalty0 (4), 2012.

\bibitem[{Dal Lago} and Martini(2012)]{LM:LMCS:12}
U.~{Dal Lago} and S.~Martini.
\newblock {On Constructor Rewrite Systems and the Lambda Calculus}.
\newblock \emph{LMCS}, 8\penalty0 (3):\penalty0 1--27, 2012.

\bibitem[{Dal Lago} and Petit(2013)]{LP:POPL:13}
U.~{Dal Lago} and B.~Petit.
\newblock {The Geometry of Types}.
\newblock In \emph{Proc.\ of\ \nth{40} POPL}, pages 167--178. ACM, 2013.

\bibitem[Danielsson(2008)]{Danielsson:2008}
N.~Danielsson.
\newblock {Lightweight semiformal time complexity analysis for purely
  functional data structures}.
\newblock In \emph{Proc.\ of\ \nth{35} POPL}, pages 133--144. ACM, 2008.

\bibitem[Danvy and Nielsen(2001)]{DN:PPDP:01}
O.~Danvy and L.~R. Nielsen.
\newblock {Defunctionalization at Work}.
\newblock In \emph{Proc.\ of\ \nrd{3} PPDP}, pages 162--174. ACM, 2001.

\bibitem[Feuillade et~al.(2004)Feuillade, Genet, and {Viet Triem
  Tong}]{FGT:2004}
G.~Feuillade, T.~Genet, and V.~{Viet Triem Tong}.
\newblock {Reachability Analysis over Term Rewriting Systems}.
\newblock \emph{JAR}, 33\penalty0 (3-4):\penalty0 341--383, 2004.

\bibitem[Giesl et~al.(2005)Giesl, Thiemann, and
  Schneider{-}Kamp]{GTSK:FROCOS:05}
J.~Giesl, R.~Thiemann, and P.~Schneider{-}Kamp.
\newblock {Proving and Disproving Termination of Higher-Order Functions}.
\newblock In \emph{Proc.\ of\ \nth{5} FROCOS}, volume 3717 of \emph{LNCS},
  pages 216--231, 2005.

\bibitem[Giesl et~al.(2011)Giesl, Raffelsieper, Schneider-Kamp, Swiderski, and
  Thiemann]{GRSST:TOPLAS:11}
J.~Giesl, M.~Raffelsieper, P.~Schneider-Kamp, S.~Swiderski, and R.~Thiemann.
\newblock {Automated Termination Proofs for {Haskell} by Term Rewriting}.
\newblock \emph{TOPLAS}, 33\penalty0 (2), 2011.

\bibitem[Giesl et~al.(2014)Giesl, Brockschmidt, Emmes, Frohn, Fuhs, Otto,
  Pl{\"{u}}cker, Schneider{-}Kamp, Str{\"{o}}der, Swiderski, and
  Thiemann]{GBEFFOPSSST14}
J.~Giesl, M.~Brockschmidt, F.~Emmes, F.~Frohn, C.~Fuhs, C.~Otto,
  M.~Pl{\"{u}}cker, P.~Schneider{-}Kamp, T.~Str{\"{o}}der, S.~Swiderski, and
  R.~Thiemann.
\newblock {Proving Termination of Programs Automatically with AProVE}.
\newblock In \emph{Proc.\ of\ \nth{7} IJCAR}, volume 8562 of \emph{LNCS}, pages
  184--191, 2014.

\bibitem[Gomez and Liu(2002)]{GL:PEPM:02}
G.~Gomez and Y.~Liu.
\newblock {Automatic time-bound analysis for a higher-order language}.
\newblock In \emph{Proc.\ of \nth{9} PEPM}, pages 75--86. ACM, 2002.

\bibitem[Gramlich(1995)]{Gramlich:FI:95}
B.~Gramlich.
\newblock Abstract relations between restricted termination and confluence
  properties of rewrite systems.
\newblock \emph{FI}, 24:\penalty0 3--23, 1995.

\bibitem[Harper(2012)]{Harper:PFP:2012}
R.~Harper.
\newblock \emph{{Practical Foundations for Programming Languages}}.
\newblock Cambridge University Press, 2012.

\bibitem[Hirokawa and Moser(2008{\natexlab{a}})]{HM:IJCAR:08}
N.~Hirokawa and G.~Moser.
\newblock Automated {C}omplexity {A}nalysis {B}ased on the {D}ependency {P}air
  {M}ethod.
\newblock In \emph{Proc.\ of\ \nth{4} IJCAR}, volume 5195 of \emph{LNAI}, pages
  364--380, 2008{\natexlab{a}}.

\bibitem[Hirokawa and Moser(2008{\natexlab{b}})]{HM:LPAR:08}
N.~Hirokawa and G.~Moser.
\newblock {Complexity, Graphs, and the Dependency Pair Method}.
\newblock In \emph{Proc.\ of\ \nth{15} LPAR}, volume 5330 of \emph{LNCS}, pages
  652--666, 2008{\natexlab{b}}.

\bibitem[Hirokawa et~al.(2013)Hirokawa, Middeldorp, and Zankl]{HMZ:JAR:13}
N.~Hirokawa, A.~Middeldorp, and H.~Zankl.
\newblock {Uncurrying for Termination and Complexity}.
\newblock \emph{JAR}, 50\penalty0 (3):\penalty0 279--315, 2013.

\bibitem[Jones(2007)]{Jones:TCS:07}
N.~D. Jones.
\newblock {Flow Analysis of Lazy Higher-order Functional Programs}.
\newblock \emph{TCS}, 375\penalty0 (1-3):\penalty0 120--136, 2007.

\bibitem[Jones and Bohr(2008)]{JB:LMCS:08}
N.~D. Jones and N.~Bohr.
\newblock {Call-by-Value Termination in the Untyped lambda-Calculus}.
\newblock \emph{LMCS}, 4\penalty0 (1), 2008.

\bibitem[Jost et~al.(2010)Jost, Hammond, Loidl, and M.Hofmann]{JHLH:POPL:10}
S.~Jost, K.~Hammond, H.-W. Loidl, and M.Hofmann.
\newblock {Static Determination of Quantitative Resource Usage for Higher-order
  Programs}.
\newblock In \emph{Proc.\ of\ \nth{37} POPL}, pages 223--236. ACM, 2010.

\bibitem[Kochems and Ong(2011)]{KO:RTA:11}
J.~Kochems and L.~Ong.
\newblock {Improved Functional Flow and Reachability Analyses Using Indexed
  Linear Tree Grammars}.
\newblock In \emph{Proc.\ of\ \nnd{22} RTA}, volume~10 of \emph{LIPIcs}, pages
  187--202, 2011.

\bibitem[Korp et~al.(2009)Korp, Sternagel, Zankl, and
  Middeldorp]{MKCSHZAM:RTA:09}
M.~Korp, C.~Sternagel, H.~Zankl, and A.~Middeldorp.
\newblock {Tyrolean Termination Tool 2}.
\newblock In \emph{Proc.\ of\ \nth{20} RTA}, volume 5595 of \emph{LNCS}, pages
  295--304, 2009.

\bibitem[Marion(2003)]{Marion:IC:03}
J.-Y. Marion.
\newblock {Analysing the Implicit Complexity of Programs}.
\newblock \emph{IC}, 183:\penalty0 2--18, 2003.

\bibitem[Midtgaard(2012)]{Midtgaard:ACMCS:12}
J.~Midtgaard.
\newblock {Control-flow Analysis of Functional Programs}.
\newblock \emph{{ACM} Comput. Surv.}, 44\penalty0 (3):\penalty0 10, 2012.

\bibitem[Nielson et~al.(2005)Nielson, Nielson, and Hankin]{NNH:2005}
F.~Nielson, H.~Nielson, and C.~Hankin.
\newblock \emph{{Principles of Program Analysis}}.
\newblock Springer, 2005.
\newblock ISBN 978-3-540-65410-0.

\bibitem[Noschinski et~al.(2011)Noschinski, Emmes, and Giesl]{NEG:CADE:11}
L.~Noschinski, F.~Emmes, and J.~Giesl.
\newblock {A Dependency Pair Framework for Innermost Complexity Analysis of
  Term Rewrite Systems}.
\newblock In \emph{Proc.\ of 23rd CADE}, LNAI, pages 422--438. Springer, 2011.

\bibitem[Okasaki(1998)]{Okasaki:JFP:98}
C.~Okasaki.
\newblock {Functional Pearl: Even Higher-Order Functions for Parsing}.
\newblock \emph{JFP}, 8\penalty0 (2):\penalty0 195--199, 1998.

\bibitem[Panitz and Schmidt-Schau{\ss}(1997)]{PS97}
S.~E. Panitz and M.~Schmidt-Schau{\ss}.
\newblock {TEA: Automatically Proving Termination of Programs in a Non-Strict
  Higher-Order Functional Language}.
\newblock In \emph{Proc.\ of \nth{4} SAS}, pages 345--360, 1997.

\bibitem[Pierce(2002)]{Pierce:2002}
B.~C. Pierce.
\newblock \emph{{Types and programming languages}}.
\newblock {MIT} Press, 2002.
\newblock ISBN 978-0-262-16209-8.

\bibitem[Plotkin(1977)]{Plotkin:TCS:77}
G.~D. Plotkin.
\newblock {LCF Considered as a Programming Language}.
\newblock \emph{TCS}, 5\penalty0 (3):\penalty0 223--255, 1977.

\bibitem[Rabhi and Lapalme(1999)]{RabhiLapalme}
F.~Rabhi and G.~Lapalme.
\newblock \emph{{Algorithms: A Functional Programming Approach}}.
\newblock Addison-Wesley, 1999.

\bibitem[Reynolds(1998)]{Reynolds:ACM:72}
J.~C. Reynolds.
\newblock {Definitional Interpreters for Higher-Order Programming Languages}.
\newblock \emph{Higher-Order and Symbolic Computation}, 11\penalty0
  (4):\penalty0 363--397, 1998.

\bibitem[Sands(1990)]{Sands:ESOP:90}
D.~Sands.
\newblock {Complexity Analysis for a Lazy Higher-Order Language}.
\newblock In \emph{Proc.\ of \nrd{3} ESOP}, volume 432 of \emph{LNCS}, pages
  361--376, 1990.

\bibitem[Sinn et~al.(2014)Sinn, Zuleger, and Veith]{SZV:2014}
M.~Sinn, F.~Zuleger, and H.~Veith.
\newblock {A Simple and Scalable Static Analysis for Bound Analysis and
  Amortized Complexity Analysis}.
\newblock In \emph{Proc.\ of\ \nth{26} CAV}, volume 8559 of \emph{LNCS}, pages
  745--761, 2014.

\bibitem[Sternagel and Thiemann(2011)]{ST:FROCOS:11}
C.~Sternagel and R.~Thiemann.
\newblock {Generalized and Formalized Uncurrying}.
\newblock In \emph{Proc.\ of\ \nth{8} FROCOS}, volume 6989 of \emph{LNCS},
  pages 243--258, 2011.

\bibitem[Te{R}e{S}e(2003)]{Terese}
Te{R}e{S}e.
\newblock \emph{Term Rewriting Systems}, volume~55 of \emph{CTTCS}.
\newblock Cambridge University Press, 2003.

\bibitem[Vasconcelos and Hammond(2003)]{VH:IFL:03}
P.~B. Vasconcelos and K.~Hammond.
\newblock {Inferring Cost Equations for Recursive, Polymorphic and Higher-Order
  Functional Programs}.
\newblock In \emph{Revised Papers of \nth{15} Workshop on IFL}, pages 86--101,
  2003.

\bibitem[Wilhelm et~al.(2008)Wilhelm, Engblom, Ermedahl, Holsti, Thesing,
  Whalley, Bernat, Ferdinand, Heckmann, Mitra, Mueller, Puaut, Puschner,
  Staschulat, and Stenstrom]{WEEA:ATECS:08}
R.~Wilhelm, J.~Engblom, A.~Ermedahl, N.~Holsti, S.~Thesing, D.~Whalley,
  G.~Bernat, C.~Ferdinand, R.~Heckmann, T.~Mitra, F.~Mueller, I.~Puaut,
  P.~Puschner, J.~Staschulat, and P.~Stenstrom.
\newblock {The Worst Case Execution Time Problem - Overview of Methods and
  Survey of Tools}.
\newblock \emph{TECS}, 2008.

\bibitem[Zankl and Korp(2014)]{ZK:LMCS:14}
H.~Zankl and M.~Korp.
\newblock {Modular Complexity Analysis for Term Rewriting}.
\newblock \emph{LMCS}, 10\penalty0 (1:19):\penalty0 1--33, 2014.

\end{thebibliography}
\end{document}